\documentclass[11pt, a4paper]{article}

\usepackage{amssymb,amsmath,amsfonts,eurosym,geometry,ulem,graphicx,caption,color,setspace,sectsty,comment,footmisc,caption,pdflscape,array,hyperref, tikz, amsthm, enumerate, pifont, wasysym, xcolor}
\usepackage[round]{natbib}
\usepackage{ragged2e}
\usepackage{textcomp}

\usepackage[abs]{overpic}
\usepackage{pict2e}
\usepackage{pgfplots}
\usepackage{pgfplotstable}

\usepackage{booktabs}
\pgfplotstableset{
    col sep=comma,
    header=has colnames,
    string type,
    trim cells=true
}

\pgfplotstableset{
    my style/.style={
        columns/State/.style={string type},
        columns/LP/.style={column type=c},
        columns/VAR/.style={column type=c},
        every head row/.style={output empty row},
        every last row/.style={after row=\midrule},
    }
}

\normalem
\onehalfspacing

\theoremstyle{plain}
\newtheorem{prop}{Proposition}

\newtheorem{lem}{Lemma}

\theoremstyle{definition}

\newtheorem{example}{Example}
\newtheorem{assumpLP}{Assumption}
\newtheorem{assumpsLP}{Assumption}
\newtheorem{assumpLPIV}{Assumption}
\newtheorem{assumpsLPIV}{Assumption}
\newtheorem{assumpexo}{Assumption}
\newtheorem{assumpexoIV}{Assumption}

\newtheoremstyle{remark}
  {\topsep}
  {\topsep}
  {}
  {}
  {\itshape}
  {}
  {.5em}
  {\thmname{#1}\thmnumber{ #2}\thmnote{ (#3)}.}
\theoremstyle{remark}
\newtheorem{remark}{Remark}

\usepackage[toc,page]{appendix}

\renewcommand{\arraystretch}{0.6}

\newcolumntype{L}[1]{>{\raggedright\let\newline\\arraybackslash\hspace{0pt}}m{#1}}
\newcolumntype{C}[1]{>{\centering\let\newline\\arraybackslash\hspace{0pt}}m{#1}}
\newcolumntype{R}[1]{>{\raggedleft\let\newline\\arraybackslash\hspace{0pt}}m{#1}}

\geometry{left=1.25in,right=1.25in,top=1.25in,bottom=1.25in}

\usepackage{subcaption}
\captionsetup[subfigure]{justification = centering, skip=4pt, singlelinecheck=false}

\usepackage{algpseudocode}
\usepackage{algcompatible}
\usepackage{algorithm}

\newcommand\E{\mathbb{E}}
\newcommand\V{\mathbb{V}}
\newcommand\I{\mathbb{I}}
\newcommand\e{\varepsilon}

\newcommand{\ind}{\perp\!\!\! \!\perp}

\usetikzlibrary{arrows.meta}
\tikzset{
  timeline/.style={-Latex, thick},
  forecastarrow/.style={->, very thick, draw=orange, shorten >=2pt, shorten <=2pt},
  param/.style={circle, draw, fill=white, minimum size=11pt, inner sep=0pt, font=\scriptsize},
}

\usepackage{longtable}
\captionsetup[longtable]{skip=0.5\baselineskip}

\usepackage{pifont}

\newcommand{\email}[1]{\texttt{\small #1}}

\begin{document}

\title{When and Why State-Dependent Local Projections Work\thanks{I am very grateful to Serena Ng for continuous advice and feedback on this project. I also thank Pablo Mones for valuable feedback on the draft and Haoge Chang as well as participants of the Columbia Econometrics Colloquium, the NBER-NSF Time Series Conference and the NOeG Winter Workshop for helpful comments.}}

\author{\Large{Valentin Winkler}\thanks{Mail: \email{valentin.winkler@columbia.edu}} \\ ~~~ \\ Department of Economics, Columbia University}
\date{January 2026}
\maketitle

\begin{abstract}
This paper studies state-dependent local projections (LPs). First, I establish a general characterization of their estimand: under minimal assumptions, state-dependent LPs recover weighted averages of causal effects. This holds for essentially all specifications used in practice. Second, I show that state-dependent LPs and VARs target different estimands and propose a simple VAR-based estimator whose probability limit equals the LP estimand. Third, in instrumental variable (LP-IV) settings, state-dependent weighting can generate nonzero interaction terms, even when the effects are not state-dependent. Overall, this paper shows how to correctly interpret state-dependent LPs, clarifying their connection to VARs and highlighting a key source of LP-IV misinterpretation.
\\
\vspace{0in}\\
\noindent\textbf{Keywords:} local projections, state dependence, misspecification, instrumental variables.\\
\vspace{0in}\\
\noindent\textbf{JEL Classification:} C22, C26, C32, C36.\\
\end{abstract}

\clearpage

\section{Introduction}

In macroeconomics, the effect of an observed shock $X_t$ on a future outcome $Y_{t+h}$ is commonly estimated by running a local projection \citep[LP,][]{Jorda:05} of the form\footnote{Since this paper only studies asymptotic properties, it abstracts from control variables that are included to improve finite-sample performance. If controls are used for identification, assume that they have already been projected out using the Frisch-Waugh-Lovell theorem.}
\begin{equation}
Y_{t+h} = X_t \beta^h + \text{error}_{h,t+h}.
\end{equation}
To study whether the effect of $X_t$ on $Y_{t+h}$ depends on the initial state of the economy, a state-dependent version of this regression can be estimated:
\begin{equation}\label{eq:state_dep_simple}
Y_{t+h} =  X_t \beta^h_0 + S_{t-1} X_t \beta_1^h + \text{error}_{h,t+h},
\end{equation}
where $S_{t-1}$ is a lagged, observed state variable which can be continuous or binary.\footnote{Most applied papers seem to use a lagged state, even though some interact with a contemporary state $S_t$ (see Appendix \ref{app:applied}). Also see Remark \ref{rem:lagged} for a discussion of this issue.} If the regression results indicate that the interaction term $\beta_1^h$ is non-zero, the effect of interest is commonly judged to be state-dependent.

State-dependent LPs are popular, but so far it has been unclear whether their common interpretation is valid when the true data generating process is not of the form \eqref{eq:state_dep_simple}. I show that state-dependent LPs estimate a causal effect, even if the true structural function does not correspond to the estimating equation. This is relevant since LPs are commonly used when the researcher does not want to commit to a particular structural model, but still has to rely on a parsimonious parametric estimation technique due to small sample sizes of macroeconomic time series. My paper makes three points that should help clarify the scope and limitations of state-dependent LPs.

First, state-dependent LPs estimate weighted averages of conditional marginal effects if the shock is observed and independent of the lagged state. The weights only depend on the distribution of the shock and are identical across state and application. This nonparametric guarantee has already been derived for linear LPs \citep{Rambachan:21,Kolesar:24}, but I show that it also holds for state-dependent LPs very generally. To estimate more specific causal quantities such as the average response to a shock of size $\delta$, the data generating process has to be substantially restricted. However, this is true for both linear and state-dependent LPs. In this sense, state-dependent LPs are as valid as linear LPs. Moreover, the interpretation remains transparent even when practitioners depart from simple linear interactions. Even if a continuous interaction term is used in \eqref{eq:state_dep_simple} and the relationship between effect and state is not of the form $\beta_0^h + S_{t-1} \beta_1^h$, state-dependent LPs still estimate a best approximation in the familiar MSE sense: A linear regression of the effect at $S_{t-1}$ onto $(1, S_{t-1})$. Therefore, my result covers virtually all specifications of state-dependent LPs used in the applied literature. In addition, the formulas derived here can be easily applied to new functional-form specifications of state-dependent LPs. Researchers can use their well-trained intuition for misspecified linear regressions to interpret the causal estimand implied by any chosen specification. Since at the moment much of the applied literature relies on only a small set of functional forms,\footnote{In particular, many papers interact the shock with a logistic transformation of a continuous state variable, as in \cite{Auerbach:13}, or with a binary state indicator, as in \cite{Ramey:18}.} these results provide guidance for exploring new specifications.
    
Building on this foundation, I next compare state-dependent LPs with their VAR counterparts. In the linear case, those two methods asymptotically yield the same effect estimates \citep{Plagborg-Moller:21}. Using a simple DSGE model, I show with simulations and analytically that this equivalence breaks down in the state-dependent case. This occurs even in the idealized scenario where the state follows a known, fully exogenous Markov process and the researcher can manually adjust for the future evolution of the state. Therefore, the favorable asymptotic properties of state-dependent LPs derived in this paper do not carry over to state-dependent VARs. As a remedy, I introduce an impulse response estimate constructed from multiple state-dependent VAR models. This estimator is easy to construct and asymptotically matches the state-dependent LP estimand. This allows researchers that prefer VARs over LPs to reap the asymptotic benefits derived in this paper.
    
Finally, I extend the analysis to the IV setting, which is central in much applied work. State-dependent LPs using instrumental variables (LP-IVs) also estimate a weighted average of marginal effects. However, the weights generally depend on the state. This makes interpretation challenging without additional information on the data generating process: A non-zero interaction term can arise due to differences in the weighting scheme across states, even if the effect of interest is not state-dependent. To interpret state-dependent LP-IVs in the usual way, either the structural relationship between instrument and regressor or between regressor and outcome have to be restricted. This bears many similarities to the microeconometric literature on local average treatment effects \citep{Imbens:94}. My paper is the first to raise this issue in the context of state-dependent LPs. 

\vspace{0.8em}

\noindent\textsc{Literature.---}Linear regressions in a non-linear environment have been studied at least since \cite{Yitzhaki:96} and \cite{Angrist:00}. \cite{Rambachan:21} first applied results of this literature to local projections and recently \cite{Kolesar:24} weakened the required regularity conditions. In a similar framework, \cite{Caravello:24} show how to identify sign and size nonlinearities and \cite{Casini:25} study high-frequency event studies. My paper is the first thorough treatment of state-dependent LPs in a nonlinear environment.\footnote{\cite{Kolesar:24} note that their results generalize to state-dependent LPs with a binary state since interacting with a dummy amounts to running two separate regressions. However, my results go beyond the binary case by covering continuous and multi-dimensional states. This is necessary to cover specifications commonly used in the literature: 19 of the 44 papers surveyed by \cite{Goncalves:24} use a continuous state variable (see Appendix \ref{app:applied}).}
    
Some papers have studied state-dependent LPs in a parametric setting to obtain specific estimands of interest: \cite{Cloyne:23} extend the Kitagawa-Oaxaca-Blinder decomposition to decompose channels of impulse response heterogeneity. \cite{Goncalves:24} study state-dependent LPs under the assumption that the data generating process is a state-dependent VAR. Their estimand of interest is the average response to a non-marginal shock of size $\delta>0$ and they demonstrate that state-dependent LPs can fail to estimate this quantity. The goal of this paper is more modest: I show that state-dependent LPs estimate \textit{some} weighted average of causal effects. The average effect of a shock of size $\delta$ is a special weighted effect that may or may not correspond to the LP estimand, depending on the data generating process.
  
Lastly, this paper adds to a literature relating LPs and VARs. \cite{Plagborg-Moller:21} first showed that both models asymptotically yield the same effect estimates. Recently, \cite{Ludwig:24} derived a finite sample version of this equivalence. This paper shows analytically and with simulations that this equivalence breaks in the state-dependent case. As a remedy, I propose a VAR-based estimate that asymptotically matches the state-dependent LP estimand.
   
\vspace{0.8em}

\noindent\textsc{Outline.---}Section \ref{sec:framework} sets up the econometric framework and reviews a key result for linear LPs. Section \ref{sec:observed} contains the main approximation result for state-dependent LPs with observed shocks and Section \ref{sec:specific} discusses its implications for specific empirical specifications. Section \ref{sec:ssvar} studies the relationship between state-dependent SVARs and LPs. Section \ref{sec:lp-iv} covers instrumental variable methods, Section \ref{sec:conclusion} concludes. Appendix \ref{app:applied} presents some properties of the applied state-dependent LP papers surveyed by \cite{Goncalves:24}, which provides additional information about some claims made in this paper.

\section{Review of Linear Local Projections}\label{sec:framework}

This section presents an important result for linear LPs that later sections build upon. The notation and required regularity conditions follow \cite{Kolesar:24}. 
      \vspace{0.8em}
      
\noindent\textsc{Structural Functions.---}We are interested in the response of a scalar outcome variable $Y_{t+h}$ to a change in the scalar  $X_t$. For example, think of $Y_{t+h}$ and $X_t$ as output and a fiscal policy shock in period $t+h$ and $t$, respectively. As is common in the applied literature, I assume that the shock $X_t$ is observed without measurement error, which makes a regression of $Y_{t+h}$ on $X_t$ feasible.\footnote{With classical measurement error, attenuation bias will yield a rescaled version of this regression, which leaves the shape of the estimated impulse response intact \citep{Plagborg-Moller:21}.} Without loss of generality, let $Y_{t+h}$ be determined by the \textit{structural function}
\begin{equation} \label{eq:str_func}
Y_{t+h} = \psi_h(X_t,U_{h,t+h}),
\end{equation}
where $U_{h,t+h}$ is a collection of variables that influence the outcome variable. In most macroeconomic models, $U_{h,t+h}$ would be a collection of shocks, lags of $Y_{t}$ and other macroeconomic variables that affect $Y_{t+h}$. To gain intuition, consider a simple example:
  
\begin{example}[ARMA Model]\label{ex:arma}
    Consider the ARMA(1,1) Model 
    \begin{equation}\label{eq:arma}
        Y_t = \rho Y_{t-1} + X_t + \gamma X_{t-1}.
    \end{equation}
    For $h=0$, $U_{0,t}$ contains one lag of the endogenous variable and the shock: $U_{0,t} = \{Y_{t-1},X_{t-1}\}$. The structural function is just the formula \eqref{eq:arma}. For $h=1$, we have $U_{1,t+1} = \{Y_{t-1}, X_{t-1}, X_{t+1}\}$ and the structural function is 
    \begin{equation*}
        \psi_1(X_t,U_{1,t+1}) = \rho^2 Y_{t-1} + \rho\gamma X_{t-1} + X_{t+1} + (\rho + \gamma)X_t.
    \end{equation*} 
    \hfill $\diamondsuit$
\end{example}
  
Note that in econometric practice, one often neither knows the functional form of $\psi_h$ nor the variables $U_{h,t+h}$. It will turn out useful to marginalize the structural function over $U_{h,t+h}$ to obtain the \textit{average structural function} \citep{Blundell:03}:
\begin{equation}\label{eq:avg_str_func}
\Psi_h(x) := \E[\psi_h(x,U_{h,t+h})],~~x \in \mathbb{R}.
\end{equation}
  
\noindent\textsc{Causal Effects.---}In nonlinear time series models, the size of the response of $Y_{t+h}$ to a change from $X_t$ to $X_t + \delta$ depends on the history of past shocks, the baseline shock level $X_t$ and the sign as well as absolute size of $\delta$. Therefore, there are many different causal effects one could possibly consider. For pragmatic reasons, I focus on \textit{average marginal effects}\footnote{This type of effect is often the only one that can be estimated with reasonable precision, given typical sample sizes of macroeconomic time series \citep[see][]{Kolesar:24}. If $\Psi_h$ is identified, in theory more general impulse response functions could be estimated using nonparametric methods. The few attempts of nonparametric local projections so far include \cite{Goncalves:24b} and \cite{Paranhos:25}.} of the form
\begin{equation}\label{eq:average_effect}
\theta_h(\omega) := \int \omega(x) \Psi_h'(x) dx,
\end{equation}
where $\omega \geq 0$ satisfies $\int \omega(x) dx = 1$ and is therefore a weight function across the baseline values of the shock. If $\omega$ is the shock density, $\theta_h(\omega) = \E[\Psi_h'(X_t)]$, which I will call the \textit{population effect}.

The main results in this paper build on an important identity popularized by \cite{Yitzhaki:96} and \cite{Angrist:00}, which \cite{Rambachan:21} first applied to local projections. It turns out that the LP estimand has a causal interpretation even if the structural function $\psi_h$ is not linear. I present this result using the weakened regularity conditions of \cite{Kolesar:24}. Throughout the paper, $\ind$ denotes statistical independence and $\perp$ uncorrelatedness.

\begin{assumpLP}\label{ass:lp_reg}
(i) Let $X_t$ be continuously distributed on an interval $I \subseteq \mathbb{R}$, with positive and finite variance. (ii) Assume that the conditional mean $g_h(x) = \E[Y_{t+h} \mid X_t = x]$ is locally absolutely continuous on $I$. (iii) Finally, let $\E[|g_h(X_t)|(1+|X_t|)]< \infty$ and $\int_I \omega_X(x) |g_h'(x)| dx < \infty$, where
\begin{equation}\label{eq:omega-x}
\omega_X(x) := \frac{\text{Cov}[\I[X_t \geq x],X_t]}{\V[X_t]}.
\end{equation} 
\end{assumpLP}

\begin{assumpexo}\label{ass:lp_exo}
For all $h \geq 0$, $t \in \mathbb{Z}$, $X_t \ind U_{h,t+h}$ and $\E[X_t] = 0$.
\end{assumpexo}
  
Assumption \ref{ass:lp_reg} is a collection of regularity conditions that ensure that the LP estimand is well defined, the conditional mean function $g_h$ has a derivative almost everywhere and a specific weighted average of the derivative is finite. Assumption \ref{ass:lp_exo} requires the shock $X_t$ and the other variables entering $Y_{t+h}$ to be independent. This ensures that the conditional mean function $g_h$ nonparametrically identifies the average structural function $\Psi_h$ so that the derivative of $g_h$ has a causal interpretation.
  
\begin{remark}
Note that in typical time series models, lags or leads of $X_t$ or some functions of it will be in $U_{h,t+h}$. This implicitly requires that $X_t$ is independent over time. While shocks are usually assumed to be linearly unpredictable, independence is a substantive restriction. For example, consider the ARMA process of Example \ref{ex:arma} with serially uncorrelated $X_t$'s that have conditional variance $\E[X_t^2 \mid X_{t-1}] = X_{t-1}^2$. In finance, such forms of conditional heteroskedasticity are common. In this case, $X_t \perp U_{h,t+h}$ but not $X_t \ind U_{h,t+h}$. 
\end{remark}
  
\begin{remark}
    The strong independence requirement, Assumption \ref{ass:lp_exo}, is necessary to allow for the structural function $\psi_h$ to be arbitrarily nonlinear. Else, the equality 
    \begin{equation}\label{eq:condition_ind}
        \E[\psi_h(x,U_{h,t+h})] = \E[Y_{t+h} \mid X_t = x]
    \end{equation}
    generally does not hold since $U_{h,t+h}$ can depend on $X_t$ nonlinearly. With additional functional form assumptions, the independence requirement can be weakened. For example, with scalar $U_{h,t+h}$ and additive seperability of the form 
    \begin{equation*}
        \psi_h(X_t,U_{h,t+h}) = \psi_{x,h}(X_t) + U_{h,t+h},
    \end{equation*}
    the assumption $\E[U_{h,t+h} \mid X_t] = 0$ is sufficient for \eqref{eq:condition_ind}. This highlights a tradeoff between assumptions about functional form and exogeneity that the researcher has to navigate.
\end{remark}
  
The following result is part of Proposition 1 of \cite{Kolesar:24}:

\begin{lem}[\citeauthor{Kolesar:24},~\citeyear{Kolesar:24}]\label{lem:lp_lin}
Suppose Assumptions \ref{ass:lp_reg} and \ref{ass:lp_exo} hold. Then the local projection estimand satisfies
\begin{equation}
\frac{\E[Y_{t+h}X_t]}{\E[X_t^2]} = \int \omega_X(x)\Psi_h'(x) dx = \theta(\omega_X).
\end{equation}
\end{lem}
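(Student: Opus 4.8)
The plan is to combine three ingredients: reduce the regression coefficient to a covariance ratio using $\E[X_t]=0$; invoke Assumption~\ref{ass:lp_exo} to identify the conditional mean $g_h(x)=\E[Y_{t+h}\mid X_t=x]$ with the average structural function $\Psi_h(x)$; and then apply the Yitzhaki--Angrist covariance identity to $g_h$. For the reduction, note that $\E[X_t]=0$ makes the local projection estimand equal to $\mathrm{Cov}(Y_{t+h},X_t)/\V(X_t)$, and by the law of iterated expectations $\mathrm{Cov}(Y_{t+h},X_t)=\mathrm{Cov}(g_h(X_t),X_t)$, a quantity that is finite because $\E[|g_h(X_t)|(1+|X_t|)]<\infty$ by Assumption~\ref{ass:lp_reg}(iii). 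Moreover, under Assumption~\ref{ass:lp_exo}, $g_h(x)=\E[\psi_h(x,U_{h,t+h})\mid X_t=x]=\E[\psi_h(x,U_{h,t+h})]=\Psi_h(x)$, so it suffices to show $\mathrm{Cov}(g_h(X_t),X_t)=\V(X_t)\int_I\omega_X(x)\,g_h'(x)\,dx$.

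For this identity, fix a reference point $a\in I$ and use the local absolute continuity of $g_h$, together with the fundamental theorem of calculus on compact subintervals, to write $g_h(X_t)-g_h(a)=\int_I g_h'(u)\,\kappa_a(u,X_t)\,du$, where $\kappa_a(u,x):=\I[a\le u\le x]-\I[x\le u\le a]$ is the signed indicator of the interval between $a$ and $x$. Since $g_h(a)$ is a constant and covariance is bilinear, I would interchange the $du$-integral with the covariance --- the step that needs care, discussed below --- to obtain $\mathrm{Cov}(g_h(X_t),X_t)=\int_I g_h'(u)\,\mathrm{Cov}(\kappa_a(u,X_t),X_t)\,du$. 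A short case split then shows that $\mathrm{Cov}(\kappa_a(u,X_t),X_t)=\mathrm{Cov}(\I[X_t\ge u],X_t)$ for every $u$, independently of $a$: for $u\ge a$ one has $\kappa_a(u,X_t)=\I[X_t\ge u]$ a.s., while for $u<a$ one has $\kappa_a(u,X_t)=\I[X_t\ge u]-1$ a.s., using that $X_t$ is continuously distributed so that $\{X_t=u\}$ is null, and adding the constant $-1$ does not change the covariance. By definition~\eqref{eq:omega-x}, $\mathrm{Cov}(\I[X_t\ge u],X_t)=\V(X_t)\,\omega_X(u)$, which gives the claim; dividing by $\V(X_t)=\E[X_t^2]$ and using $g_h'=\Psi_h'$ a.e.\ yields $\E[Y_{t+h}X_t]/\E[X_t^2]=\int_I\omega_X(x)\,\Psi_h'(x)\,dx=\theta(\omega_X)$, with the right-hand side well defined because $\int_I\omega_X(x)|g_h'(x)|\,dx<\infty$.

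The main obstacle is justifying the interchange of the $du$-integral with the expectation defining the covariance, which I would handle via Tonelli's theorem applied to the nonnegative integrand $|g_h'(u)|\,|\kappa_a(u,X_t)|\,(1+|X_t|)$: after swapping, it is enough to bound $\int_I|g_h'(u)|\,\E[|\kappa_a(u,X_t)|(1+|X_t|)]\,du$. On a bounded neighbourhood of $a$ this is finite since $g_h$ is absolutely continuous there, so $g_h'$ is integrable, and the inner expectation is at most $1+\E|X_t|<\infty$. On the tail $\{u\ge a\}$ the inner expectation equals $\E[(1+|X_t|)\I[X_t\ge u]]$, which for $u$ large is dominated by a constant multiple of $\E[X_t\I[X_t\ge u]]=\V(X_t)\,\omega_X(u)$ --- using $\E[X_t]=0$ and, for the probability term, Markov's inequality --- and symmetrically on $\{u<a\}$, so the tail part is controlled by $\int_I\omega_X(x)|g_h'(x)|\,dx<\infty$. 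One last technicality is that local absolute continuity delivers the fundamental theorem of calculus only on compact subintervals, so if $I$ is unbounded I would first establish the identity on an exhausting sequence of compact intervals and pass to the limit by dominated convergence with the same integrable envelopes. Since this lemma is a component of Proposition~1 in \cite{Kolesar:24}, I would ultimately rely on their argument for the remaining measure-theoretic bookkeeping.
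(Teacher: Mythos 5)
Your argument is correct and follows exactly the route the paper has in mind: the paper offers no proof of its own for this lemma (it is imported from Proposition~1 of \cite{Kolesar:24}, with the remark that the proof ``effectively amounts to using the fundamental theorem of calculus and Fubini's theorem''), and your reduction to $\mathrm{Cov}(g_h(X_t),X_t)$, the identification $g_h=\Psi_h$ via Assumption \ref{ass:lp_exo}, and the Yitzhaki covariance identity justified by Tonelli with the $(1+|X_t|)$ envelope are precisely that standard argument. The tail bounds via $\E[X_t\I[X_t\ge u]]=\V(X_t)\,\omega_X(u)$ correctly explain why the integrability conditions in Assumption \ref{ass:lp_reg}(iii) are the right ones, so nothing is missing.
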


The weight function $\omega_X$ is non-negative, integrates to one and is peaked around zero. The proof of Lemma \ref{lem:lp_lin} effectively amounts to using the fundamental theorem of calculus and Fubini's theorem. If $\omega_X$ were the density of the shock $X_t$, LPs would estimate the population effect. For shocks that are Normally distributed, this is the case \citep[][Lemma 1]{Stein:81}. However, this is the only distribution with smooth density function and decaying tails that has this property.
  
\begin{lem}\label{lem:gauss}
Suppose $X_t$ has finite second moments and a differentiable density $f_X$. Furthermore, the limits $\lim_{x \to \pm \infty} f_X(x)$ exist (and are therefore zero). Then the two statements are equivalent:
\begin{enumerate}[i.]
\item $X_t$ follows a Normal distribution.
\item For all $x$, $\omega_X(x) = f_X(x)$, where $\omega_X$ is defined in \eqref{eq:omega-x}.
\end{enumerate}
\end{lem}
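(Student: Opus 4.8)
The plan is to reduce the equivalence to an elementary first-order ODE for $f_X$. Write $\mu = \E[X_t]$ and $\sigma^2 = \V[X_t] \in (0,\infty)$. First I would rewrite the numerator of $\omega_X$ as a partial moment: since $\E|X_t|<\infty$,
\[
\text{Cov}[\I[X_t \ge x], X_t] = \E[X_t \I[X_t \ge x]] - \mu\,P(X_t \ge x) = \int_x^{\infty}(t-\mu)f_X(t)\,dt =: G(x),
\]
so that $\omega_X(x) = G(x)/\sigma^2$. Because $t\mapsto (t-\mu)f_X(t)$ is continuous, $G$ is continuously differentiable with $G'(x) = -(x-\mu)f_X(x)$, and $G(-\infty)=G(+\infty)=\E[X_t-\mu]=0$. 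This identity is the only computational input needed.

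For the direction (i)$\Rightarrow$(ii), note that the Gaussian density satisfies the score identity $(t-\mu)f_X(t) = -\sigma^2 f_X'(t)$, hence $G(x) = -\sigma^2\int_x^{\infty}f_X'(t)\,dt = \sigma^2\big(f_X(x)-\lim_{t\to\infty}f_X(t)\big) = \sigma^2 f_X(x)$, using that the limit vanishes (this is where the hypothesis on the tail limits enters). Therefore $\omega_X(x) = f_X(x)$; this is essentially Stein's lemma, which the excerpt already cites.

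For the converse (ii)$\Rightarrow$(i), suppose $\omega_X \equiv f_X$, i.e. $G(x) = \sigma^2 f_X(x)$ for every $x$. Differentiating (legitimate since $f_X\in C^1$) yields the linear ODE $\sigma^2 f_X'(x) = -(x-\mu)f_X(x)$ on $\mathbb{R}$. On any connected component $(a,b)$ of the open set $\{f_X>0\}$ I would divide by $f_X$ and integrate $(\log f_X)' = -(x-\mu)/\sigma^2$ to get $f_X(x) = C\exp\!\big(-(x-\mu)^2/(2\sigma^2)\big)$ with $C>0$. If such a component had a finite endpoint, say $b<\infty$, then $b\notin\{f_X>0\}$ forces $f_X(b)=0$, while continuity of $f_X$ gives $f_X(b) = C\exp(-(b-\mu)^2/(2\sigma^2))>0$ — a contradiction; the same applies at a finite left endpoint. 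Since $\{f_X>0\}$ is nonempty (as $\int f_X = 1$), it must be all of $\mathbb{R}$, so $f_X(x) = C\exp(-(x-\mu)^2/(2\sigma^2))$ everywhere. Normalization forces $C = (2\pi\sigma^2)^{-1/2}$, and the resulting distribution has mean $\mu$ and variance $\sigma^2$ by construction, so $X_t\sim N(\mu,\sigma^2)$.

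The main obstacle is precisely the support argument in the converse: the ODE on its own only determines $f_X$ up to a positive multiplicative constant on each component of $\{f_X>0\}$, so one has to rule out truncated or pieced-together Gaussians. The differentiability assumption on $f_X$, combined with it being a genuine density, is exactly what closes this gap, since continuity makes a finite boundary point of a component a point where $f_X$ is simultaneously zero and strictly positive. Everything else is bookkeeping with the fundamental theorem of calculus.
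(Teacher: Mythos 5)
Your proof is correct and follows essentially the same route as the paper's: reduce statement (ii) to the first-order ODE $\sigma^2 f_X'(x) = -(x-\mu)f_X(x)$ by differentiating the partial-moment identity, with the forward direction being Stein's identity. The only difference is cosmetic: where you divide by $f_X$ and must then rule out truncated or pieced-together Gaussians via the connected-component argument, the paper multiplies by the integrating factor $e^{x^2/2}$ and concludes that $e^{x^2/2}f_X(x)$ is globally constant, which sidesteps the zero-set issue entirely.
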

\begin{proof}
See Appendix \ref{app:proofs}.
\end{proof}

Since commonly used shocks are often far from Gaussian \citep[see][]{Kolesar:24}, LPs generally fail to estimate the population effect. Nevertheless, Lemma \ref{lem:lp_lin} is reassuring: Even with a data generating process that is far from linear, LPs estimate a proper weighted average of causal effects. In particular, if the process has no size or sign nonlinearities in the shock $X_t$, LPs always estimate the unambiguous correct effect.\footnote{In this case, $\Phi'_h(x) \equiv b_h$ does not depend on $x$. Therefore, $\theta(\omega) = \int \omega(x) dx \cdot b_h = b_h$ for every weight function $\omega$. This is the average response of $Y_{t+h}$ of a shock $X_t$ of any size.} The next section shows that this result seamlessly carries over to state-dependent LPs.
  
\section{State-Dependent LPs with Observed Shocks}\label{sec:observed}
  
This section considers state-dependent local projections of the form
\begin{equation}\label{eq:state_dep_lp}
Y_{t+h} =  f(S_{t-1})' X_t \beta^h + \text{error}_{h,t+h},
\end{equation}
where the data is generated by the structural function \eqref{eq:str_func}, $f: \mathcal{S} \to \mathbb{R}^k$ is a function mapping states to interaction terms and $\beta^h \in \mathbb{R}^k$ is the regression coefficient. For example, in \cite{Ramey:18}, $S_{t-1}$ is the unemployment rate and $f$ consists of two indicator functions defining a slack and expansionary state, respectively:
\begin{equation*}
f(s) = \begin{pmatrix}
			\I[s > 6.5\%] \\ 1-\I[s > 6.5\%]
\end{pmatrix}.
\end{equation*} 
More examples will be discussed later on. The results are commonly interpreted as
\begin{equation*}
(\text{causal effect of }X_t \text{ on }Y_{t+h}\text{ at }S_{t-1} = s) \approx f(s)' \hat\beta^h,~~~\text{for } s \in \mathcal{S}.
\end{equation*}
  
This interpretation is clearly adequate if the specification \eqref{eq:state_dep_lp} fully captures the nonlinearities in the structural function $\psi_h$. Also, if $f(S_{t-1})$ consists of dummy variables, the logic of running separate regression on split sub-samples can be evoked. However, in many applications a more complex interaction variable is used and misspecification of the LP equation is possible. In general, some caution is required when interpreting higher-order terms in a linear regression. The coefficients of these terms do not correspond to Taylor coefficients of the structural function \citep{White:80} and LPs including nonlinear transformations of $X_t$ are not straightforward to interpret in a causal way.\footnote{See Proposition 2 of \cite{Kolesar:24} for an example with the regressor $X_t^2$. \cite{Caravello:24} more generally show how nonlinear terms in $X_t$ can be used to dis-entangle sign and size nonlinearities of shock effects.} Luckily, for the state-dependent setup considered here, the common interpretation turns out to be appropriate under mild conditions.
  
\vspace{0.8em}  
  
\noindent\textsc{State Variable.---}When estimating state-dependent LPs of the form \eqref{eq:state_dep_lp}, the researcher is interested in the response of $Y_{t+h}$ to changes in $X_t$ conditional on some state $S_{t-1} \in \mathcal{S}$, where $\mathcal{S}$ is a possibly multi-dimensional state space. The state is allowed to be endogenous in the sense that $X_t$ can affect current and future realizations of the state. However, it will be important that the shock cannot affect past states. Many states of economic interest such as high unemployment states  \citep{Ramey:18} or ZLB episodes \citep{Auerbach:16} fulfill this requirement. Notably, the recession index used in \cite{Auerbach:12} does not meet this criterion, since it is a centred moving average of the output growth rate. 
  
\vspace{0.8em}  
  
\noindent\textsc{Causal Effects.---}Now I define conditional versions of the causal quantities used in Section \ref{sec:framework}. First, define the \textit{conditional average structural function} as 
\begin{equation}\label{eq:state_average_function}
\Psi_h(x,s) := \E[\psi_h(x,U_{h,t+h}) \mid  S_{t-1} = s],~~ (x,s) \in \mathbb{R} \times \mathcal{S}.
\end{equation}
The only difference to the average structural function \eqref{eq:avg_str_func} is the conditioning on the state level $s$ in addition to the shock level $x$. With slight abuse of notation, I use the same symbol for both functions. Similarly, for a weight function $\omega \geq  0$, $\int \omega(x) dx = 1$, define the \textit{conditional average effect}
\begin{equation}\label{eq:conditional_average_effect}
\theta_h(s; \omega) := \int \omega(x) \Psi'_h(x,s) dx.
\end{equation}
If $\omega$ is the shock density, $\theta_h(s; \omega)$ is equal to $\E[\Psi_h'(X_t,s)]$, which I call the \textit{population conditional effect}. If $\psi_h$ is smooth, this is equal to $\E[\psi_h'(X_t, U_{h,t+h}) \mid S_{t-1} = s]$.

It will turn out that state-dependent LPs have a causal estimand under marginally stronger conditions than in the linear case. To ease notation, from now on let $f_{t-1}$ denote $f(S_{t-1})$. Also recall that $\perp$ and $\ind$ denote uncorrelatedness and independence, respectively.
  
\begin{assumpsLP}\label{ass:sLP}
(i) Let $X_t$ be continuously distributed on an interval $I \subseteq \mathbb{R}$ conditional on almost every state $s \in \mathcal{S}$. Let $Y_{t+h}$, $X_t$, $f_{t-1}$ and $X_t f_{t-1}$ have finite variance. Furthermore, for almost every $s \in \mathcal{S}$, (ii) the function $g_h(x,s) := \E[Y_{t+h} \mid X_t = x, S_{t-1} = s]$ is locally absolutely continuous on $I$ in $x$, and (iii) $\E[|g_h(X_t,s)|(1+|X_t|)] < \infty$ and $\int_I \omega_X(x) |g_h(X_t,s)| dx < \infty$, where the weights $\omega_X$ are defined in \eqref{eq:omega-x}.
\end{assumpsLP}  
  
\begin{assumpexo}\label{ass:exo2}
For all $t$, $X_t \ind S_{t-1}$.
\end{assumpexo}

Assumption \ref{ass:sLP} ensures that Lemma \ref{lem:lp_lin} holds for the conditional measure depending on $S_{t-1}$ and Assumption \ref{ass:exo2} ensures that the lagged state variable $S_{t-1}$ is independent of the shock $X_t$. Again, note that $X_t$ is allowed to influence current or future realizations of $S_{t-1}$.  
  
\begin{remark}
The weaker assumption $X_t \perp S_{t-1}$ would not be very restrictive, since the shock $X_t$ is commonly assumed to not be linearly predictable, but $X_t \ind S_{t-1}$ is not so innocent. It fails, for example, if the government spending shock $X_t$ is especially volatile or has fatter tails after a recession hits (think of stimulus packages and austerity). 
\end{remark}
   
\begin{remark}\label{rem:lagged}
If $X_t \ind S_t$, all results of this paper apply to a LP where the shock is interacted with $S_t$ instead of $S_{t-1}$. Since the majority of the applied papers listed by \cite{Goncalves:24} interact with the lagged instead of the current state and the assumption $S_{t-1} \ind X_t$  is usually more plausible than $S_t \ind X_t$, this paper assumes that a lagged state is used.
\end{remark}

The following result shows, that the state-dependent LP estimand is the projection coefficient of the conditional average effect $\theta_h(S_{t-1};\omega_X)$ on $f_{t-1}$:

\begin{prop}\label{prop:1}
Let Assumptions \ref{ass:sLP}, \ref{ass:lp_exo} and \ref{ass:exo2} hold. Then the estimand $\beta^h$ of the state-dependent local projection \eqref{eq:state_dep_lp} has the following property:
\begin{align}\label{eq:pop_regression_s}
\beta^h & = \E \left[(f_{t-1}X_t) (f_{t-1}X_t)' \right]^{-1} \E\left[(f_{t-1}X_t) Y_{t+h} \right] \notag\\ 
& = \E \left[f_{t-1} f_{t-1}'  \right]^{-1} \E \left[ f_{t-1} \theta_h(S_{t-1}; \omega_X) \right],
\end{align}
where $\omega_X$ and $\theta_h$ are defined in \eqref{eq:omega-x} and \eqref{eq:conditional_average_effect}.
\end{prop}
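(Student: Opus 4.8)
The plan is to start from the defining least-squares expression for the estimand and reduce the second equality in \eqref{eq:pop_regression_s} to a conditional application of Lemma \ref{lem:lp_lin}. The first line of \eqref{eq:pop_regression_s} is just the population OLS coefficient from regressing $Y_{t+h}$ on the $k$-vector $Z_t := f_{t-1}X_t$, so there is nothing to prove there beyond noting that $\E[Z_tZ_t']$ is invertible (implicit in the estimand being well defined, and confirmed below). Everything then comes down to rewriting the Gram matrix $\E[Z_tZ_t']$ and the cross term $\E[Z_tY_{t+h}]$.

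For the Gram matrix, $\E[Z_tZ_t']=\E[f_{t-1}f_{t-1}'X_t^2]$, and since $X_t\ind S_{t-1}$ by Assumption \ref{ass:exo2} the scalar $X_t^2$ is independent of $f_{t-1}f_{t-1}'$, so this factors as $\E[f_{t-1}f_{t-1}']\,\E[X_t^2]$ with $\E[X_t^2]=\V[X_t]\in(0,\infty)$ (finite by sLP part (i), positive, and equal to the variance using $\E[X_t]=0$ from Assumption \ref{ass:lp_exo}); in particular invertibility of $\E[Z_tZ_t']$ is equivalent to invertibility of $\E[f_{t-1}f_{t-1}']$. For the cross term, condition on the state: by iterated expectations $\E[Z_tY_{t+h}]=\E\big[f(S_{t-1})\,\E[X_tY_{t+h}\mid S_{t-1}]\big]$. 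The core step is to evaluate $\E[X_tY_{t+h}\mid S_{t-1}=s]$ for a.e.\ $s$ by applying Lemma \ref{lem:lp_lin} inside the conditional distribution given $S_{t-1}=s$: Assumption \ref{ass:sLP} supplies, for a.e.\ $s$, the regularity the lemma needs (local absolute continuity of $g_h(\cdot,s)$ and the integrability bounds), Assumption \ref{ass:lp_exo} gives the conditional identification so that $g_h(\cdot,s)$ has the same $x$-derivative as $\Psi_h(\cdot,s)$, and Assumption \ref{ass:exo2} ensures the conditional law of $X_t$ given $S_{t-1}=s$ is the marginal law, so the weight $\omega_X$ and the scalar $\E[X_t^2]$ in the lemma's conclusion are exactly the unconditional ones. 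This yields $\E[X_tY_{t+h}\mid S_{t-1}=s]=\E[X_t^2]\int\omega_X(x)\,\partial_x\Psi_h(x,s)\,dx=\E[X_t^2]\,\theta_h(s;\omega_X)$.

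Plugging this back in gives $\E[Z_tY_{t+h}]=\E[X_t^2]\,\E[f_{t-1}\theta_h(S_{t-1};\omega_X)]$, and combined with the factored Gram matrix the two $\E[X_t^2]$ scalars cancel, producing $\beta^h=\E[f_{t-1}f_{t-1}']^{-1}\E[f_{t-1}\theta_h(S_{t-1};\omega_X)]$. Integrability of $f_{t-1}\theta_h(S_{t-1};\omega_X)$ needs no separate hypothesis: it equals $\E[Z_tY_{t+h}]/\E[X_t^2]$, which is finite by Cauchy--Schwarz from the finite-variance assumptions on $Y_{t+h}$ and $X_tf_{t-1}$ in sLP part (i).

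The main obstacle is making the conditional application of Lemma \ref{lem:lp_lin} fully rigorous: one must check that conditioning on $S_{t-1}=s$ preserves all the lemma's hypotheses for almost every $s$ — in particular that $X_t\ind U_{h,t+h}$ survives conditioning on $S_{t-1}$, which uses that $S_{t-1}$ is fixed by pre-$t$ information already subsumed in $U_{h,t+h}$ so that Assumptions \ref{ass:lp_exo} and \ref{ass:exo2} together deliver the required conditional independence, and hence the conditional identification $g_h(x,s)=\E[Y_{t+h}\mid X_t=x,S_{t-1}=s]$ of (the $x$-derivative of) $\Psi_h(x,s)$. One also has to justify integrating the ``for a.e.\ $s$'' identity against $f(S_{t-1})$ inside the outer expectation, i.e.\ a Fubini/iterated-expectation argument licensed by the integrability conditions in sLP part (iii). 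Once this conditional version of Lemma \ref{lem:lp_lin} is established, the remainder is bookkeeping.
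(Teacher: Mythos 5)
Your proposal is correct and follows essentially the same route as the paper's proof: condition on $S_{t-1}$, apply Lemma \ref{lem:lp_lin} to the conditional measure, and use $X_t\ind S_{t-1}$ so that $\omega_X$ and $\E[X_t^2]$ are state-invariant and the latter cancels. The only cosmetic difference is that you work with the explicit formula $\E[Z_tZ_t']^{-1}\E[Z_tY_{t+h}]$ while the paper manipulates the population normal equations, and your flagged concern about the conditional validity of Assumption \ref{ass:lp_exo} is a real subtlety that the paper's own proof also passes over silently.
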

\begin{proof}
Consider the population normal equation of the regression \eqref{eq:state_dep_lp} and apply the law of iterated expectations:
\begin{align*}
0 & = \E[X_t f_{t-1}(Y_{t+h} - X_t f_{t-1}'\beta^h)] \\
& = \E[f_{t-1} \E[Y_{t+h} X_t - X_t^2 f_{t-1}'\beta^h \mid S_{t-1}]] \\
& = \E \left[ \E[X_t^2 \mid S_{t-1}] f_{t-1} \left( \frac{\E[Y_{t+h} X_t \mid S_{t-1}]}{\E[X_t^2 \mid S_{t-1}]} - f_{t-1}'\beta^h \right) \right].
\end{align*}
Due to independence, $\E[X_t^2 \mid S_{t-1}] = \E[X_t^2]$. This term can be pulled out of the expectation and dropped. Now Lemma \ref{lem:lp_lin} can be applied to the conditional measure:
\begin{align*}
\frac{\E[Y_{t+h} X_t \mid S_{t-1}]}{\E[X_t^2 \mid S_{t-1}]} & = \int \frac{\E[\I[X_t \geq x]X_t \mid S_{t-1}]}{\E[X_t^2 \mid S_{t-1}]} \Psi'_h(x, S_{t-1}) dx \\
& = \theta_h(S_{t-1}; \omega_X),
\end{align*}
since due to independence of $X_t$ and $S_{t-1}$ the weights on $\Psi_h'(x,S_{t-1})$ do not depend on the state. Therefore, the above normal equation yields
\begin{equation*}
0 = \E[f_{t-1} (\theta_h(S_{t-1};\omega_X) - f_{t-1}'\beta^h)],
\end{equation*}
which is the normal equation of the population regression \eqref{eq:pop_regression_s}.
\end{proof}

To numerically verify and illustrate Proposition \ref{prop:1}, in Appendix \ref{app:stvar} I simulated data from a smooth transition VAR model á la \cite{Auerbach:12}. In this setting, the causal effect of $X_t$ can be computed analytically and compared to the LP estimand.

\section{Specific State-Dependent LP Specifications}\label{sec:specific}

Proposition \ref{prop:1} shows that running a state-dependent local projection of the form \eqref{eq:state_dep_lp} yields the same estimand as regressing the unobserved average conditional effect $\theta_h(S_{t-1}; \omega_X)$ on the interaction term $f(S_{t-1})$. I use this insight to derive the causal estimand of common state-dependent LP specifications and propose an LP estimator that accounts for state dependence by re-weighting observations.

\subsection{Common Functional Forms}
   
One popular specification of state-dependent LPs interacts $X_t$ with a binary state variable $S_{t-1}$. This is equivalent to running two linear LPs on split subsamples of the data and it follows immediately from Lemma \ref{lem:lp_lin} that weighted averages of conditional average effects are estimated. However, in 19 of the 44 studies listed by \cite{Goncalves:24}, the authors use a continuous state index, so this split-sample logic cannot be evoked. This is where Proposition \ref{prop:1} comes to shine: It implies that the popular interaction with a logistic term pioneered by \cite{Auerbach:13b} as well as similar specifications all approximate a conditional average effect. Throughout the subsection, I assume that Assumptions \ref{ass:sLP}, \ref{ass:lp_exo} and \ref{ass:exo2} are all met. 
\vspace{0.8em}
   
\noindent \textbf{Specification 1: Binary States.} Let $S_{t-1} \in \{0,1\}$ and consider a researcher running the regression
\begin{equation*}
Y_{t+h} =  X_t \beta^h_0 +  S_{t-1} X_t \beta_1^h + \textnormal{error}_{h,t+h}.
\end{equation*}
It follows from Proposition \ref{prop:1} that the estimands satisfy
\begin{equation*}
\beta_0^h = \theta_h(0;\omega_X),~~~ \beta_1^h = \theta_h(1;\omega_X) - \theta_h(0;\omega_X).
\end{equation*}
If $\beta_1^h \neq 0$, the effect of $X_t$ on $Y_{t+h}$ is commonly interpreted as depending on the state $S_{t-1}$. This is justified since the interaction term captures the difference between average conditional effects with the same weighting function for both states. In particular, if the effect of $X_t$ is larger in state 1 than in state 0 across all baseline shock levels $x$, the non-negativity of the weights $\omega_X$ ensures that $\beta_1^h > 0$. On the contrary, if $\beta_1^h \neq 0$, at least for some baseline shock levels $x$ the effect of $X_t$ on $Y_{t+h}$ is state-dependent.
\vspace{0.8em}
   
\noindent \textbf{Specification 2: Continuous State.} Suppose $S_{t-1}$ is scalar, $\tilde f$ is a logistic function and the state-dependent LP
\begin{equation*}
Y_{t+h} =  X_t \beta_0^h +  \tilde f(S_{t-1}) X_t \beta_1^h + \textnormal{error}_{h,t+h}
\end{equation*}
is estimated. This is the popular setup due to \cite{Auerbach:13b}. The estimand $\beta_1^h$ satisfies
\begin{equation}\label{eq:logist}
\beta_1^h = \frac{\text{Cov}[\tilde f(S_{t-1}),\theta_h(S_{t-1};\omega_X)]}{\mathbb{V}[\tilde f(S_{t-1})]}.
\end{equation}
Therefore, if $\beta_1^h = 0$, the state index $\tilde f(S_{t-1})$ and the conditional average effect at $S_{t-1}$ with weights $\omega_X$ are uncorrelated. Note that \eqref{eq:logist} does not depend on $\tilde f$ being logistic so it holds for general functions.
  
\vspace{0.8em}
  
\noindent \textbf{Specification 3: Series Expansion.} \cite{Auer:21} address nonlinearities in the relationship between the state and the conditional effect by interacting $X_t$ with a polynomial basis in the state, i.e.
\begin{equation*}
Y_{t+h} = \sum_{p=0}^{P-1} S_{t-1}^{p} X_t \beta_{p}^h + \text{error}_{h,t+h},
\end{equation*}
with some degree $P>0$. Proposition \ref{prop:1} shows that the estimand satisfies
\begin{equation*}
\beta^h = (\beta_{0}^h,...,\beta_{P-1}^h)' = \arg \min_{b \in \mathbb{R}^P} \E \left[ \theta_h(S_{t-1}; \omega) - \sum_{p=0}^{P-1} S_{t-1}^p b_p \right]^2.
\end{equation*}
Therefore, one can use standard series approximation theory to justify $\sum_{p=0}^{P-1} s^p \beta_{p}^h \approx \theta_h(s;\omega_X)$ for sufficiently large $P$. The same logic applies to other choices of basis functions, such as wavelets or splines.

\subsection{State-Weighted Local Projections}

Suppose a researcher is interested in the effect of $X_t$ on $Y_{t+h}$ at some state level $s^* \in \mathcal{S}$, but $S_{t-1}$ is continuously distributed so she cannot take a subset of all observations that satisfy $S_{t-1} = s^*$. This is a common situation: If $S_t$ is a continuous index of the business cycle, effect estimates for a high and low value of $s^*$ are often reported. Usually, some functional form $f(S_{t-1})$ for the dependence of the effect on the state is assumed and $f(s^*)'\hat \beta^h$ is taken as the desired effect estimate. Since the true relationship between effect and state is unknown, misspecification of $f$ is possible. A natural approximation of the split-sample logic is to weight the observations according to some weight function $w: \mathcal{S} \to \mathbb{R}_+$.\footnote{This idea came from a comment of Haoge Chang to a presentation of this project.} This could be $w(s) = K(h^{-1}\lVert s-s^*\rVert)$, where $K$ is a kernel function and $h$ is a tuning parameter. Now weighting can be implemented by running the OLS regression
\begin{equation*}
\sqrt{w(S_{t-1})}Y_{t+h} = \sqrt{w(S_{t-1})}X_t \beta^h + \text{error}_{h,t+h}.
\end{equation*}
This regression is not of the form \eqref{eq:state_dep_lp}. Expanding the fraction and using independence reveals, however, that
\begin{equation*}
\beta^h = \frac{\E[Y_{t+h} w(S_{t-1}) X_t]}{\E[w(S_{t-1}) X_t^2]} = \frac{\E[Y_{t+h} w(S_{t-1}) X_t]}{\E[w(S_{t-1})^2 X_t^2]} \frac{\E[w(S_{t-1})^2]}{\E[w(S_{t-1})]},
\end{equation*}
so $\beta^h$ is the re-scaled coefficient from the regression of $Y_{t+h}$ on $w(S_{t-1}) X_t$, which is of the form \eqref{eq:state_dep_lp}. Now Proposition \ref{prop:1} yields
\begin{equation*}
\beta^h = \mathbb{E} \left[ \frac{w(S_{t-1})}{\E[w(S_{t-1})]} \theta_h(S_{t-1}; \omega_X) \right],
\end{equation*}
which is the probability limit of a Nadaraya-Watson kernel regression of the conditional average effect $\theta_h(S_{t-1};\omega)$ on the state using weighting kernel $w$. If $\theta_h$ is sufficiently smooth and the bandwidth $h$ is small, $\beta^h \approx \theta_h(s^*; \omega_X)$. Compared to interactions with fixed functions $f$, such a weighted local projection might have the advantage that extrapolation bias from regions of $\mathcal{S}$ that are far away from $s^*$ is minimized. By a similar argument it can be shown that the estimand $\beta_{0}^h$ of the regression
\begin{equation}\label{eq:local_lin}
\sqrt{w(S_{t-1})}Y_{t+h} = \sqrt{w(S_{t-1})}X_t( \beta_{0}^h + (S_{t-1}-s^*)\beta_{1}^h ) + \text{error}_{h,t+h}
\end{equation}
  
is a locally linear estimator of $\theta_h(s; \omega_X)$. Since a locally linear estimator is known to be preferable to a locally constant estimator in many situations, the specification \eqref{eq:local_lin} might have desirable approximation properties too. To my knowledge, up to now no empirical study has used weighted LPs to estimate state-dependent effects. However, the above discussion shows that such state-weighted LPs approximate a causal quantity and Proposition \ref{prop:1} can be used to study its asymptotic properties.

\section{Relationship to State-Dependent VARs}\label{sec:ssvar}  
    
State-dependent Vector Autoregressions (VARs) are among the most commonly used nonlinear time series models \citep{Granger:93, Auerbach:12}. I show with simulations and analytically that the well known asymptotic equivalence between LPs and VARs \citep{Plagborg-Moller:21} breaks down in the state-dependent case. State-dependent VARs lack some desirable robustness properties of state-dependent LPs: Even in the absence of sign and size nonlinearities they may not recover the true effect of $X_t$ on $Y_{t+h}$ conditional on $S_{t-1} = s$. As a remedy, I derive an impulse response estimate based on state-dependent VARs that has the same probability limit as state-dependent LPs.
   
\subsection{The Conditional Projection Model}\label{subsec:projection_model}
   
First, define state-dependent VARs as a projection model. Note that this section remains agnostic about the structural function, so the true data generating process might be arbitrarily non-linear.

Begin by stacking the shock $X_t$ and the outcome $Y_t$ in a vector 

\begin{equation*}
    \mathbf{Y}_t = \begin{pmatrix}
        X_t \\ Y_t
    \end{pmatrix}.
\end{equation*}
It simplifies the analysis to assume that the shock is independent of the past:
  
\begin{assumpexo}\label{ass:exo_var}
    For all $t$ and $h>0$, $X_{t+h} \ind (\mathbf{Y}_t',S_t)$.
\end{assumpexo}
   
Next, define $P_s[\bullet|\bullet]$ as the projection operator with respect to the conditional expectation $\E[\bullet|S_{t-1}=s]$, where $S_{t-1}$ is some state variable. For simplicity, $S_{t-1} \in \{0,1\}$ is assumed throughout the section. Similarly, let $P[\bullet|\bullet]$ be the projection with respect to the unconditional expectation $\E[\bullet]$. With a binary state, the coefficients of the state-dependent LP
\begin{equation*}
    Y_{t+h} = (1-S_{t-1}) X_t \beta_0^h + S_{t-1} X_t \beta_{1}^h + \text{error}_{h,t+h}
\end{equation*}
satisfy 
\begin{equation}\label{eq:lp_cond_projection}
    P_s[Y_{t+h} \mid X_t] = \beta_s^h X_t.
\end{equation}

Now the reduced form VAR conditional projection model can be defined via 
\begin{align}
   \mathbf{Y}_t & = P_s[\mathbf{Y}_t \mid \{\mathbf{Y}_{t-k}\}_{k=1}^\infty] + E_t  \label{eq:proj_y} \\
   & = \sum_{k=1}^\infty \Pi_k(s) \mathbf{Y}_{t-k} + E_t,
\end{align}
where $\E[E_t \mathbf{Y}_{t-k} \mid S_{t-1}] = \mathbf{0}$ for all lags $k\geq 1$. From now on, let only the first lag coefficient be non-zero, i.e. $\Pi_k(s) = \mathbf{0}$ for all $k>1$ and write $\Pi(s) := \Pi_1(s)$. This is to ease notation and without much loss of generality due to the companion form. Each result of this section generalizes to the infinite-lag case.\footnote{The main technical detail that has to be added in the infinite-lag case is a square summability condition to ensure the infinite sum of the projection exists.} By applying the common recursive identification scheme, utilizing that $X_t$ is exogenous, there is a structural SVAR representation of $\mathbf{Y}_t$ in terms of projection coefficients:
\begin{equation}\label{eq:SVAR_form}
    \mathbf{Y}_t = \Pi(S_{t-1})\mathbf{Y}_{t-1} + A(S_{t-1}) \begin{pmatrix}
        X_t \\ e_t^\perp
    \end{pmatrix},
\end{equation}  
where $A(S_{t-1})$ is lower triangular and $\E[X_t e_t^\perp \mid S_{t-1}]=0$.\footnote{Formally, denote the elements of the reduced form error as $(X_t,e_t)' = E_t$. Then the $e_t^\perp$ is defined via
\begin{align*}
      e_t^\perp & = e_t - P_{S_{t-1}}[e_t \mid X_t].
\end{align*}
Lastly, the contemporaneous slope coefficients are computed as 
\begin{equation*}
    A(s) = \text{chol}(\E[E_t E_t'] \mid S_{t-1}=s) \times \text{diag}(\E[X_t^2], \E[(e_t^\perp)^2 \mid S_{t-1}=s])^{-1},
\end{equation*}
where $\text{chol}$ denotes the Cholesky decomposition.
    }
Despite looking like a structural model, this representation is defined purely in terms of population moments and exists under minimal regularity conditions. The only economic assumption so far is $X_t$ being independent of the past. The orthogonalized error $e_t^\perp$, however, is allowed to be dependent with $X_t$ and over time.
  
After estimating the parameters of the projection model, impulse response estimates can be constructed in an iterative way. The most straightforward way to do this is computing 
\begin{equation}\label{eq:theta_f}
    \theta^f_{\mathit{VAR,h}}(s) := \left( \Pi(s)^h A(s) \right)_{21},
\end{equation}
where $f$ stands for fixed state. This is the impulse response estimate used by \cite{Auerbach:13}. They are aware that this estimate does not account for the possibility that the economy might move out of state $s$ between time $t-1$ and $t+h-1$. Since it is well known that LPs average over future state changes, it is no surprise that $\theta_{\mathit{VAR,h}}^f$ will be different from the LP estimand. An effect estimate that accounts for the possibility of future state changes would be 
\begin{equation}\label{eq:theta_m}
    \theta^m_{\mathit{VAR,h}}(s) = \left( \E[\Pi(S_{t+h-1}) \cdot ... \cdot \Pi(S_t) \mid S_{t-1}=s ] A(s) \right)_{21},
\end{equation}
where $m$ stands for moving state. As derived by \cite{Goncalves:24}, for a state-dependent VAR model with fully exogenous state and independent error terms this is the response of $Y_{t+h}$ to a shock $X_t$ of arbitrary size.\footnote{See Proposition 3.1 of \cite{Goncalves:24}. For this data generating process, $\theta^m_{\mathit{VAR,h}}(s)$ is both what they call the conditional average response and the conditional marginal response.} Since this estimate averages over future paths of  the state, it is a natural comparison to the LP estimand.
     
To investigate the relationship between state-dependent VAR and LP based impulse response estimates, recall the structural SVAR representation \eqref{eq:SVAR_form} and note that by assumption and construction, respectively,
\begin{equation*}
    \E[\mathbf{Y}_{t-1} X_t \mid S_{t-1}] = \mathbf{0},~~~ \E[e_t^\perp X_t \mid S_{t-1}] = 0.
\end{equation*}
This implies that $(A(s))_{21}$ is a conditional projection coefficient:
\begin{equation*}
    P_s[Y_t \mid X_t] = (A(s))_{21},
\end{equation*}
so the state-dependent LP and both VAR estimands $\theta^f_{\mathit{VAR,h}}(s)$, $\theta^m_{\mathit{VAR,h}}(s)$ agree on impact.\footnote{This equivalence on impact was already noted by \cite{Auerbach:13}. For longer horizons $h>0$, however, they focus on differences between LP and VAR due to varying future states or holding them fixed.} For the horizon $h=1$, iterate \eqref{eq:SVAR_form} forward and write in terms of expected slope coefficients:
\begin{align}
    \mathbf{Y}_{t+1} & = \Pi(S_t)\Pi(S_{t-1})\mathbf{Y}_{t-1} + \Pi(S_t)A(S_{t-1})\begin{pmatrix}X_t \\ e_t^\perp \end{pmatrix} + E_{t+1}  \\ 
    & = \Pi(S_t)\Pi(S_{t-1})\mathbf{Y}_{t-1} + \E[\Pi(S_t)\mid S_{t-1}]A(S_{t-1}) \begin{pmatrix}X_t \\ e_t^\perp \end{pmatrix} \notag \\ & ~~~~~~~~ + \underbrace{(\Pi(S_t) - \E[\Pi(S_t) \mid S_{t-1}])E_t}_{\mathcal{E}_{t+1}^\Pi} + \underbrace{E_{t+1}}_{\mathcal{E}_{t+1}^P}. \notag
\end{align}
The error term $\mathcal{E}_{t+1}^\Pi$ is the forecast error of the parameter at $t+1$ times the projection error of the endogenous variables at $t$. The term $\mathcal{E}_{t+1}^P$ is the one-step projection error of the endogenous variables at $t+1$. If the state $S_t$ is fully exogenous\footnote{If the state can be influenced by current or past values of $X_t$, $\theta^{m}_{\textit{VAR,h}}$ might not be the correct effect estimate even in the favorable case of independent errors \citep{Goncalves:24}.}, this provides a condition for equivalence between $\theta^m_{\mathit{VAR,h}}(s)$ and the state-dependent LP estimand:
\begin{prop}\label{prop:VAR-LP-equivalence}
    Let Assumption \ref{ass:exo_var} hold and the state $S_t \in \{0,1\}$ be independent of $X_{t+k}$ for all $k \in \mathbb{Z}$. Then the LP and VAR estimand $\theta^m_{\mathit{VAR,h}}(s)$ at horizon $h=1$ are identical if and only if 
    \begin{equation}
        \E[(\mathcal{E}_{t+1}^\Pi + \mathcal{E}_{t+1}^P) X_t \mid S_{t-1}] = 0.
    \end{equation}
\end{prop}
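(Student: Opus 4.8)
The plan is to rewrite the horizon-one local projection coefficient as a conditional cross-moment and then substitute the one-step forward iteration of the SVAR representation \eqref{eq:SVAR_form} displayed just above the proposition. By \eqref{eq:lp_cond_projection}, $\beta_s^1 = \E[Y_{t+1}X_t \mid S_{t-1}=s]/\E[X_t^2 \mid S_{t-1}=s]$, and $\E[X_t^2 \mid S_{t-1}=s] = \V[X_t] > 0$ since $X_t$ is independent of $S_{t-1}$. The forward iteration expresses the second component of $\mathbf{Y}_{t+1}$ as the sum of an ``endogenous-history'' term $\bigl(\Pi(S_t)\Pi(S_{t-1})\mathbf{Y}_{t-1}\bigr)_2$, a ``VAR-predicted'' term $\bigl(\E[\Pi(S_t)\mid S_{t-1}]A(S_{t-1})(X_t,e_t^\perp)'\bigr)_2$, and an error term $\bigl(\mathcal{E}_{t+1}^\Pi + \mathcal{E}_{t+1}^P\bigr)_2$. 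I would multiply each by $X_t$ and take $\E[\,\cdot \mid S_{t-1}=s]$ termwise.

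First I would show the endogenous-history term contributes nothing. Combining Assumption \ref{ass:exo_var} with the exogeneity of the state (so that $S_t$ carries no information about $X_t$) gives $X_t \ind (\mathbf{Y}_{t-1}, S_{t-1}, S_t)$; since $\Pi(S_t)\Pi(S_{t-1})\mathbf{Y}_{t-1}$ is a function of $(S_t, S_{t-1}, \mathbf{Y}_{t-1})$ and $\E[X_t\mid S_{t-1}]=0$, its conditional cross-moment with $X_t$ vanishes. Second, conditional on $S_{t-1}=s$ the matrix $M(s):=\E[\Pi(S_t)\mid S_{t-1}=s]A(s)$ is deterministic, so the VAR-predicted term times $X_t$ has conditional expectation $M(s)_{21}\E[X_t^2\mid S_{t-1}=s] + M(s)_{22}\E[e_t^\perp X_t\mid S_{t-1}=s]$; the second summand is zero because $e_t^\perp$ is by construction the conditional projection residual of $e_t$ on $X_t$, and $M(s)_{21} = \theta^m_{\mathit{VAR,1}}(s)$ by \eqref{eq:theta_m}. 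Dividing by $\E[X_t^2\mid S_{t-1}=s] > 0$ then gives
\begin{equation*}
\beta_s^1 = \theta^m_{\mathit{VAR,1}}(s) + \frac{\E\bigl[\bigl(\mathcal{E}_{t+1}^\Pi + \mathcal{E}_{t+1}^P\bigr)_2 X_t \mid S_{t-1}=s\bigr]}{\E[X_t^2 \mid S_{t-1}=s]},
\end{equation*}
so the two estimands coincide at $s=0$ and $s=1$ if and only if the second component of $\E[(\mathcal{E}_{t+1}^\Pi + \mathcal{E}_{t+1}^P)X_t\mid S_{t-1}]$ is zero.

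To match the vector-valued condition in the statement exactly, I would verify that the first component of $\E[(\mathcal{E}_{t+1}^\Pi + \mathcal{E}_{t+1}^P)X_t\mid S_{t-1}]$ is automatically zero. Recursive identification with $X_t$ ordered first and exogenous forces the first row of every $\Pi(s)$ to vanish, because $X_t\ind\mathbf{Y}_{t-1}$ makes $P_s[X_t\mid\mathbf{Y}_{t-1}]=0$; hence the first component of $\mathcal{E}_{t+1}^\Pi=(\Pi(S_t)-\E[\Pi(S_t)\mid S_{t-1}])E_t$ is identically zero, and the first component of $\mathcal{E}_{t+1}^P=E_{t+1}$ equals $X_{t+1}$, which is uncorrelated with $X_t$ given $S_{t-1}$ by Assumption \ref{ass:exo_var}. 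So the scalar condition derived above is equivalent to the stated vector condition, which finishes the argument.

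I expect the main obstacle to be the first step, eliminating the endogenous-history term, since it is the only place where exogeneity of the \emph{contemporaneous} state $S_t$ enters (Assumption \ref{ass:exo_var} alone governs only $\mathbf{Y}_{t-1}$ and $S_{t-1}$), and it is precisely the situation warned about in the footnote: were $S_t$ allowed to respond to $X_t$, the product $\Pi(S_t)\Pi(S_{t-1})\mathbf{Y}_{t-1}$ would correlate with $X_t$ and the decomposition would no longer split cleanly into a $\theta^m_{\mathit{VAR,1}}$-term plus an error. A secondary technicality is tracking which row of the bivariate system each object lives in, so that the equivalence is ultimately stated for the correct component.
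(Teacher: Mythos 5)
Your argument is correct and is essentially the paper's own: the proposition is stated as an immediate consequence of the forward-iteration decomposition displayed just above it, and your write-up simply makes explicit the three steps the paper leaves implicit (the history term vanishing by exogeneity of $X_t$ and of the state, the predicted term reducing to $\theta^m_{\mathit{VAR},1}(s)\E[X_t^2\mid S_{t-1}=s]$ via $\E[e_t^\perp X_t\mid S_{t-1}]=0$, and the first component of the error condition holding automatically because the first row of $\Pi(s)$ is zero and $X_{t+1}$ is independent of $X_t$). Your closing observation about the vector-versus-scalar form of the condition is a detail the paper glosses over, and you resolve it the right way.
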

The condition of Proposition \ref{prop:VAR-LP-equivalence} is not necessarily satisfied. Section \ref{subsec:dsge} presents a case where $\E[\mathcal{E}_{t+1}^\Pi X_t \mid S_{t-1}] \neq 0$ and also the condition $\E[\mathcal{E}_{t+1}^P X_t \mid S_{t-1}] = 0$ can be violated.\footnote{A simple example is $Y_t = S_{t-2}X_{t-1}$. For this process, $e_{t+1} = (S_{t-1}-\E[S_{t-1} \mid S_t])X_t$. One can verify that $\E[\mathcal{E}^P_{t+1} X_t \mid S_{t-1}] = (0,(S_{t-1}-\E[\E[S_{t-1}\mid S_t]\mid S_{t-1}])\mathbb{V}[X_t])' \neq 0$.} The reason for the latter is that orthogonality with respect to $\E[\bullet|S_t]$ does \textit{not} imply orthogonality with respect to $\E[\bullet | S_{t-1}]$. Therefore, for horizon $h > 0$, $\theta^m_{\mathit{VAR,h}}(s)$ and the LP estimand differ in general---even in the special case of a fully exogenous state $S_t$.
  
\subsection{Recovering the State-Dependent LP Estimand from VAR Predictions}\label{subsec:alternative_est}
  
Even though the VAR based estimates $\theta_{\mathit{VAR,h}}^f$ and $\theta_{\mathit{VAR,h}}^m$ both differ from the LP estimand, there is still a connection between both methods. Consider $h+1$ state-dependent VAR models where each successive model shifts the state back one more lag:
\begin{align}
    \mathbf{Y}_t & = \Pi^0(S_{t-1}) \mathbf{Y}_{t-1} + A^0(S_{t-1}) E_t^{0,\perp} \label{eq:vars_decreasing} \\
    \vdots & ~~~~~~~~~~~~~~~~~~~ \vdots \notag \\
    \mathbf{Y}_t & = \Pi^h(S_{t-1-h}) \mathbf{Y}_{t-1} + A^h(S_{t-1-h}) E_t^{h,\perp} \notag.
\end{align}
The orthognalized projection error is of the form $E_t^{k,\perp} = (X_t, e_t^{k,\perp})'$. These projection models are just as described in \eqref{eq:SVAR_form} with the difference that for the $k$'th projection model the conditional expectation $\E[\bullet | S_{t-1}=s]$ is replaced with $\E[\bullet| S_{t-k} = s]$. Iterating forward, using the $k$'th model for the $k$'th prediction step\footnote{This iterative combination of multiple different VAR models is similar in spirit to \citeauthor{Ludwig:24}'s (\citeyear{Ludwig:24}) VAR-sequence. Using this technique, he is able to prove a finite sample equivalence between linear VARs and LPs. However, he combines linear VAR models with different lag lengths, while I combine state-dependent VAR models that condition on different lags of the states.} gives the representation 
\begin{equation}\label{eq:repr_backshift_state}
    \mathbf{Y}_{t+h} = \tilde \Pi^h \mathbf{Y}_{t-1} + \sum_{\ell = 0}^h \tilde A_\ell^h(S_{t-1}) E_{t+\ell}^{\ell,\perp}.
\end{equation}
See Appendix \ref{proof:recursive} for a recursive formula of the parameters in the more general case of infinitely many lags of the endogenous variables. This representation yields a third VAR-based impulse response estimate
\begin{equation}\label{eq:theta_b}
    \theta^b_{\mathit{VAR,h}}(s) = (\tilde A_0^h(s))_{21} = (\Pi^h(s) \cdot ... \cdot \Pi^1(s) A^0(s))_{21},
\end{equation}
where $b$ stands for backshifted state. It turns out that $\theta^b_{\mathit{VAR,h}}(s)$ is identical to the state-dependent LP estimand. 

\begin{figure}[t]
    \centering 
    \caption{Prediction Steps and Projection Operators of State-Dependent LPs and VARs.}\label{fig:lp_var}
  
    \begin{subfigure}[b]{0.3\textwidth}
      \centering
      \begin{tikzpicture}[
          >=latex,
          timeline/.style={-},
          steparrow/.style={->, line width=1.2pt, draw=orange},
          param/.style={circle, draw=black!60, fill=white,
                        inner sep=1.8pt, font=\scriptsize}
        ]
  
        \def\h{3}
        \def\xstep{1.4}
        \def\ymax{0.8}
  
        \draw[timeline,->] (0,0) -- (\h*\xstep+0.4,0);
  
        \foreach \i/\lbl in {0/{t},1/{t+1},2/{t+2},3/{t+h}} {
          \draw (\i*\xstep,0.08) -- (\i*\xstep,-0.08);
          \node[below=3pt] at (\i*\xstep,0) {$\lbl$};
        }
  
        \draw[steparrow]
          (0,0)
          .. controls (\xstep,\ymax) and (2*\xstep,\ymax) ..
          (\h*\xstep,0)
          node[param, midway, above, yshift=3pt] {$P_{S_{t-1}}$};
  
      \end{tikzpicture}
      \subcaption{Local Projections}
    \end{subfigure}
    \hfill  \begin{subfigure}[b]{0.3\textwidth}
      \centering
      \begin{tikzpicture}[
          >=latex,
          timeline/.style={-},
          steparrow/.style={->, line width=1.2pt, draw=orange},
          param/.style={circle, draw=black!60, fill=white,
                        inner sep=1.8pt, font=\scriptsize}
        ]
  
        \def\h{3}
        \def\xstep{1.4}
        \def\ymax{0.8}
  
        \draw[timeline,->] (0,0) -- (\h*\xstep+0.4,0);
  
        \foreach \i/\lbl in {0/{t},1/{t+1},2/{t+2},3/{t+h}} {
          \draw (\i*\xstep,0.08) -- (\i*\xstep,-0.08);
          \node[below=3pt] at (\i*\xstep,0) {$\lbl$};
        }
  
        \draw[steparrow]
          (0,0)
          .. controls (0.3*\xstep,\ymax) and (0.7*\xstep,\ymax) ..
          (\xstep,0)
          node[param, midway, above, yshift=3pt] {$P_{S_{t+0}}$};
  
        \draw[steparrow]
          (\xstep,0)
          .. controls (1.3*\xstep,\ymax) and (1.7*\xstep,\ymax) ..
          (2*\xstep,0)
          node[param, midway, above, yshift=3pt] {$P_{S_{t+1}}$};
  
        \draw[steparrow]
          (2*\xstep,0)
          .. controls (2.3*\xstep,\ymax) and (2.7*\xstep,\ymax) ..
          (3*\xstep,0)
          node[param, midway, above, yshift=3pt] {$P_{S_{t+2}}$};
  
      \end{tikzpicture}
      \subcaption{VAR (model-implied)}
    \end{subfigure}
    \hfill\begin{subfigure}[b]{0.3\textwidth}
      \centering
      \begin{tikzpicture}[
          >=latex,
          timeline/.style={-},
          steparrow/.style={->, line width=1.2pt, draw=orange},
          param/.style={circle, draw=black!60, fill=white,
                        inner sep=1.8pt, font=\scriptsize}
        ]
  
        \def\h{3}
        \def\xstep{1.4}
        \def\ymax{0.8}
  
        \draw[timeline,->] (0,0) -- (\h*\xstep+0.4,0);
  
        \foreach \i/\lbl in {0/{t},1/{t+1},2/{t+2},3/{t+h}} {
          \draw (\i*\xstep,0.08) -- (\i*\xstep,-0.08);
          \node[below=3pt] at (\i*\xstep,0) {$\lbl$};
        }
  
        \draw[steparrow]
          (0,0)
          .. controls (0.3*\xstep,\ymax) and (0.7*\xstep,\ymax) ..
          (\xstep,0)
          node[param, midway, above, yshift=3pt] {$P_{S_{t-1}}$};
  
        \draw[steparrow]
          (\xstep,0)
          .. controls (1.3*\xstep,\ymax) and (1.7*\xstep,\ymax) ..
          (2*\xstep,0)
          node[param, midway, above, yshift=3pt] {$P_{S_{t-1}}$};
  
        \draw[steparrow]
          (2*\xstep,0)
          .. controls (2.3*\xstep,\ymax) and (2.7*\xstep,\ymax) ..
          (3*\xstep,0)
          node[param, midway, above, yshift=3pt] {$P_{S_{t-1}}$};
  
      \end{tikzpicture}
      \subcaption{VAR (Proposition \ref{prop:var_lp})}
    \end{subfigure}
  \end{figure}
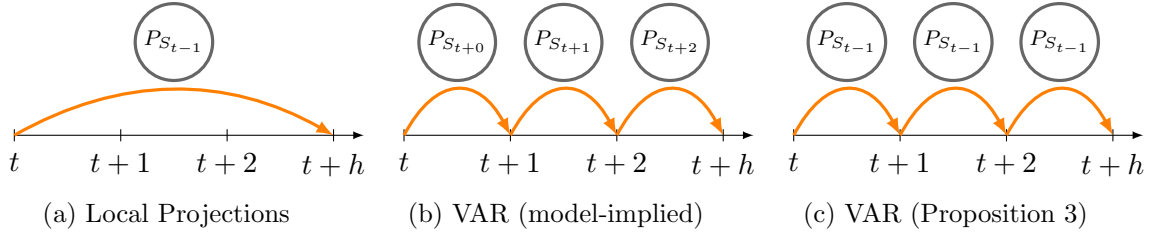  

\begin{prop}\label{prop:var_lp}
    Suppose Assumption \ref{ass:exo_var} holds. Then 
    \begin{equation*}
        \theta^b_{\textit{VAR,h}}(s) = \frac{\E[Y_{t+h}X_t \mid S_{t-1}=s]}{\E[X_t^2 \mid S_{t-1} =s]}.
    \end{equation*}
\end{prop}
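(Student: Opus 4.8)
The plan is to take the forward-iterated representation \eqref{eq:repr_backshift_state}, multiply both sides by the regressor $X_t$, take expectations conditional on $\{S_{t-1}=s\}$, and show that on the right-hand side only the $\ell=0$ term survives, reproducing $(\tilde A_0^h(s))_{21}$ times $\E[X_t^2\mid S_{t-1}=s]$. Dividing by $\E[X_t^2\mid S_{t-1}=s]$ then gives the claimed identity, since the left-hand side is $\E[Y_{t+h}X_t\mid S_{t-1}=s]$ (the second coordinate of $\mathbf{Y}_{t+h}$).

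First I would spell out what the backshifted construction buys. The $\ell$-th model in \eqref{eq:vars_decreasing} is used for the prediction step producing $\mathbf{Y}_{t+\ell}$; shifting that model's time index forward by $\ell$, the lagged state it conditions on is $S_{(t+\ell)-1-\ell}=S_{t-1}$ --- the \emph{same} state at every step. Hence, conditional on $\{S_{t-1}=s\}$, the coefficient matrices $\tilde\Pi^h$ and $\tilde A_\ell^h(S_{t-1})$ in \eqref{eq:repr_backshift_state} are non-random, with $\tilde A_0^h(s)=\Pi^h(s)\cdots\Pi^1(s)A^0(s)$ as in \eqref{eq:theta_b}. Moreover, exactly as in \eqref{eq:SVAR_form}, the $\ell$-th model's reduced-form error $E_t^\ell=(X_t,e_t^\ell)'$ has $X_{t+\ell}$ as its first coordinate at time $t+\ell$ and, conditional on the state it is built around (which at time $t+\ell$ is $S_{t-1}$), is orthogonal to every lag of $\mathbf{Y}$; the recursive orthogonalization producing $E_{t+\ell}^{\ell,\perp}=(X_{t+\ell},e_{t+\ell}^{\ell,\perp})'$ additionally forces $\E[X_{t+\ell}e_{t+\ell}^{\ell,\perp}\mid S_{t-1}]=0$ and, since it only subtracts a state-dependent multiple of $X_{t+\ell}$ from the second coordinate, does not disturb orthogonality to lagged $\mathbf{Y}$.

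The computation then goes term by term. The lagged-$\mathbf{Y}$ term contributes $\tilde\Pi^h(s)\,\E[\mathbf{Y}_{t-1}X_t\mid S_{t-1}=s]=\mathbf{0}$, because $X_t$ is the first coordinate of the model-$0$ reduced-form error and $\mathbf{Y}_{t-1}$ is a lag at time $t$ (Assumption \ref{ass:exo_var}). For $\ell\ge 1$, $\mathbf{Y}_t$ is a lag at time $t+\ell$, so $\E[E_{t+\ell}^\ell\mathbf{Y}_t\mid S_{t-1}]=\mathbf{0}$; reading off coordinates this gives $\E[X_{t+\ell}X_t\mid S_{t-1}]=0$ and $\E[e_{t+\ell}^\ell X_t\mid S_{t-1}]=0$, and combining with $\E[X_{t+\ell}e_{t+\ell}^{\ell,\perp}\mid S_{t-1}]=0$ yields $\E[E_{t+\ell}^{\ell,\perp}X_t\mid S_{t-1}=s]=\mathbf{0}$, so every $\ell\ge1$ term drops out. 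Only $\ell=0$ remains: since $E_t^{0,\perp}=(X_t,e_t^{0,\perp})'$ with $\E[X_t e_t^{0,\perp}\mid S_{t-1}]=0$, one has $\E[E_t^{0,\perp}X_t\mid S_{t-1}=s]=(\E[X_t^2\mid S_{t-1}=s],\,0)'$, and therefore
\[
\E[\mathbf{Y}_{t+h}X_t\mid S_{t-1}=s]=\tilde A_0^h(s)\begin{pmatrix}\E[X_t^2\mid S_{t-1}=s]\\ 0\end{pmatrix}.
\]
Taking the second coordinate and dividing by $\E[X_t^2\mid S_{t-1}=s]$ gives $\theta^b_{\textit{VAR,h}}(s)=(\tilde A_0^h(s))_{21}=\E[Y_{t+h}X_t\mid S_{t-1}=s]/\E[X_t^2\mid S_{t-1}=s]$.

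I expect the real work to be bookkeeping rather than a conceptual obstacle: one must track carefully that ``model $\ell$ at prediction step $\ell$'' conditions on precisely $S_{t-1}$, so that the conditioning event in every orthogonality statement matches the one in the target ratio, and that the contemporaneous orthogonalization leaves the conditional uncorrelatedness with the fixed regressor $X_t$ intact. The infinite-lag case works the same way using the recursive coefficient formulas of Appendix \ref{proof:recursive}, once a square-summability condition ensures the forward expansion converges in $L^2$; none of the orthogonality steps change.
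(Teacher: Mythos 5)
Your proposal is correct and follows essentially the same route as the paper's proof: it uses the representation \eqref{eq:repr_backshift_state}, notes that conditional on $S_{t-1}=s$ every backshifted model's orthogonalized error $E_{t+\ell}^{\ell,\perp}$ for $\ell\ge 1$ and the lagged term $\mathbf{Y}_{t-1}$ are orthogonal to $X_t$, and reads off the $\ell=0$ term. You merely spell out in more detail the bookkeeping (why model $\ell$ at step $\ell$ conditions on exactly $S_{t-1}$, and why the contemporaneous orthogonalization preserves orthogonality to $X_t$) that the paper compresses into ``by construction.''
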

\begin{proof}
    By construction of \eqref{eq:vars_decreasing}, 
    \begin{equation*}
        \E[E_{t+l}^{l,\perp} X_t \mid S_{t-1}] = \mathbf{0}
    \end{equation*}
    for all $l>0$ and 
    \begin{equation*}
        \E[e_t^{0,\perp} X_t \mid S_{t-1}] = 0.
    \end{equation*}
    Since by assumption
    \begin{equation*}
       \E[\mathbf{Y}_{t-l} X_t \mid S_{t-1}] = \mathbf{0}
    \end{equation*}
    for all $l>0$, it follows from the representation \eqref{eq:repr_backshift_state} that 
    \begin{equation*}
        (\tilde A_{0}^h(s))_{21} = P_s[Y_{t+h} \mid X_t],
    \end{equation*}
    which finishes the proof.
\end{proof}

Like the equivalence results of \cite{Plagborg-Moller:21} and \cite{Ludwig:24}, Proposition \ref{prop:var_lp} is essentially an application of the law of iterated projections. Projecting $\mathbf{Y}_{t+h}$ on $\text{span}\{\mathbf{Y}_{t+h-1},\mathbf{Y}_{t+h-2},...\}$, then on $\text{span} \{\mathbf{Y}_{t+h-2},\mathbf{Y}_{t+h-3},...\}$ and so on yields the same result as directly projecting on the smallest space, $\text{span} \{X_{t}, \mathbf{Y}_{t-1},...\}$. The iterative procedure corresponds to VAR-based methods, the direct procedure to the LP. The law of iterated projections cannot be applied to the impulse response estimates based on a single state-dependent VAR model that are considered in the previous subsection. The reason is that the VAR prediction conditions on a different lag of the state at every iteration: To predict $\mathbf{Y}_{t}$ given previous values condition on $S_{t-1}$, to predict $\mathbf{Y}_{t+1}$ condition on $S_t$, to predict $\mathbf{Y}_{t+2}$ condition on $S_{t+1}$, and so on. As a result, each projection step uses a different inner product so the law of iterated projections does not hold. Using $h+1$ state-dependent VAR models to compute $\theta^b_{\mathit{VAR,h}}$ ensures that each projection step uses the same inner product as the state-dependent LP such that both methods are equivalent again. Note that the equivalence holds regardless of whether the state $S_t$ is exogenous. Figure \ref{fig:lp_var} visualizes the different prediction steps underlying each method.
    
Proposition \ref{prop:var_lp} has useful practical implications: The estimator $\theta^b_{\mathit{VAR,h}}$ is easy to compute, it does not rely on knowledge about the law of movement of the state like the moving state estimator $\theta^m_{\mathit{VAR,h}}$ defined in \eqref{eq:theta_m}. But unless $\theta^f_{\textit{VAR,h}}$ defined in \eqref{eq:theta_f} it also does not implicitly assume that the state remains the same between impulse and response. At the same time, $\theta^b_{\mathit{VAR,h}}$ inherits the favorable asymptotic properties of state-dependent LPs that are presented in this paper. Therefore, the estimator $\theta^b_{\textit{VAR,h}}$ might be an attractive option for researchers who prefer to use VARs for convention or finite sample properties while wishing to benefit from the robustness properties of state-dependent LPs. The next section compares state-dependent LPs to the various VAR based estimators using a numerical example.

\subsection{A Simple DSGE Model}\label{subsec:dsge}
     
To evaluate the asymptotic properties of state-dependent VARs and LPs, consider a simple DSGE growth model. Income consists of output produced with an AK-technology and transfers or windfall income:
\begin{equation}
    Y_t = \underbrace{A(S_t)K_t}_{\text{production}} + \underbrace{\nu + \nu B(S_{t})X_t}_{\text{windfall}}, ~~~~~  X_t \sim N(0,1).
\end{equation}

The state $S_t$ is a binary recession index, $A(s)$ is the productivity in state $s$, $\nu$ is a perturbation parameter and $\nu B(s)$ is the standard deviation of windfall income in $s$. The state is assumed to move exogenously with known Markov transition matrix 
\begin{equation*}
    \begin{pmatrix}
        \pi_{00} & 1-\pi_{00} \\ 1-\pi_{11} & \pi_{11}
    \end{pmatrix}.
\end{equation*}
Naturally $A(1)<A(0)$, so the economy is more productive in expansions. To close the model, assume there is a representative household with CRRA preferences that owns the capital stock:
\begin{equation*}
    \mathbb{E}_0 \sum_{t=0}^\infty \beta^t \frac{C_t^{1-1/\sigma}}{1-1/\sigma}.
\end{equation*}
Capital depreciates fully, such that 
\begin{equation*}
    K_{t+1} = Y_t - C_t.
\end{equation*}
This can be justified by letting one period represent multiple years. Full depreciation is a convenient assumption popularized by \cite{Brock:72} to obtain a closed form solution. As $\nu \to 0,$\footnote{This amounts to assuming that agents do not consider future windfall income when making savings decisions.} income evolves as 
\begin{equation*}
    Y_t = A(S_t) \phi(S_{t-1})Y_{t-1} + \nu + \nu B(S_t) X_t,
\end{equation*}
where $\phi(s)$ is a savings rate that has to be computed numerically. See Appendix \ref{app:dsge} for details. With high enough intertemporal substitution, $\sigma > 1$, the economy will save more in good times and spend more in bad times. Table \ref{tab:params} displays the parameter choices for the model. It is calibrated in a way that income $Y_t$ experiences periods of endogenous growth and shrinkage but is stationary overall. The resulting savings rates in good and bad times are $\phi(0) \approx 0.86$ and $\phi(1) \approx 0.77$, respectively. This income process is well suited to study the properties of state-dependent LPs and VARs for three reasons: (i) It allows for analytical computation of the true state-dependent effect of $X_t$ on $Y_{t+h}$, (ii) both state-dependent LP and VAR are misspecified when applied to this process, allowing for a fair comparison and (iii) the average structural function $\Psi_h(x,s)$ is linear in $x$. Therefore, the effect of interest is unambiguously defined: It does not depend on the sign or size of the shock. This lets me assess which method estimates the correct effect and which does not without committing to a particular effect of interest.

\renewcommand{\arraystretch}{1.1}
\begin{table}[t]
    \centering
    \caption{Parameter Choices for the Model of Section \ref{subsec:dsge}}
    \label{tab:params}
    \begin{tabular}{c c l}
    \toprule
    \textbf{Symbol} & \textbf{Value} & \textbf{Description} \\
    \midrule
    $\beta$               & 0.9    & Discount factor \\
    $\sigma$              & 2      & Intertemp. elasticity of substitution \\
    $A(0)$                & 1.2    & Expansion TFP \\
    $A(1)$                & 0.75   & Recession TFP \\
    $B(0)\nu$             & 0.06   & Windfall income impact in expansion \\
    $B(1)\nu$             & 0.2    & Windfall income impact in recession \\
    $\nu$                 & 0.3    & Mean transfer \\
    $\pi_{00}$            & 0.85    & Prob. of staying in expansion \\
    $\pi_{11}$            & 0.8   & Prob. of staying in recession \\
    \bottomrule
    \end{tabular}
\end{table}

\begin{figure}[t]
    \centering
    \caption{True IRFs and LP/VAR Estimands} \label{fig:bm_irfs}
    \begin{subfigure}{0.32\textwidth}
        \centering
        \includegraphics[width=\linewidth]{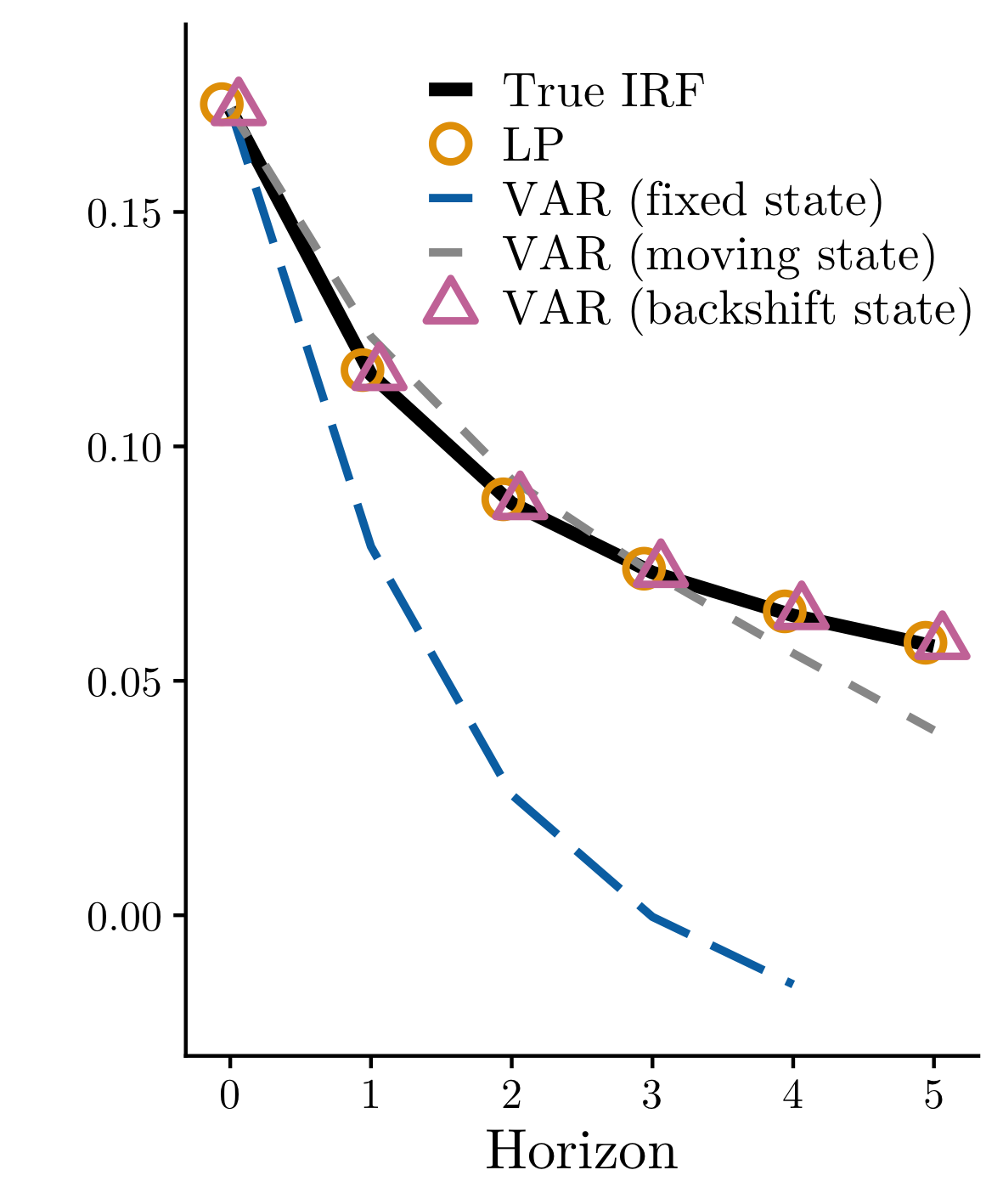}
        \caption{Recession}
        \label{fig:a}
    \end{subfigure}
    \hfill
    \begin{subfigure}{0.32\textwidth}
        \centering
        \includegraphics[width=\linewidth]{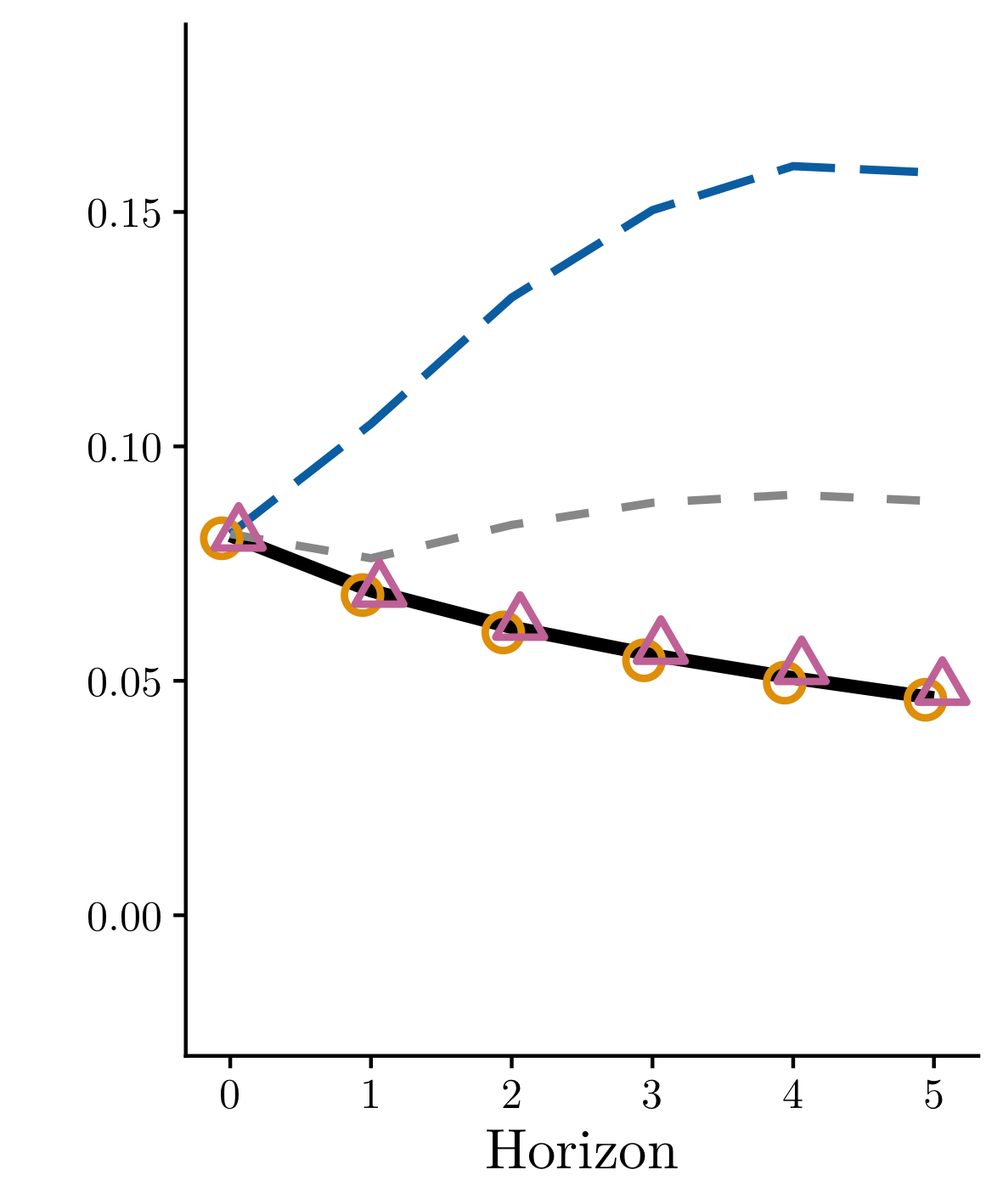}
        \caption{Expansion}
        \label{fig:b}
    \end{subfigure}
    \hfill
    \begin{subfigure}{0.32\textwidth}
        \centering
        \includegraphics[width=\linewidth]{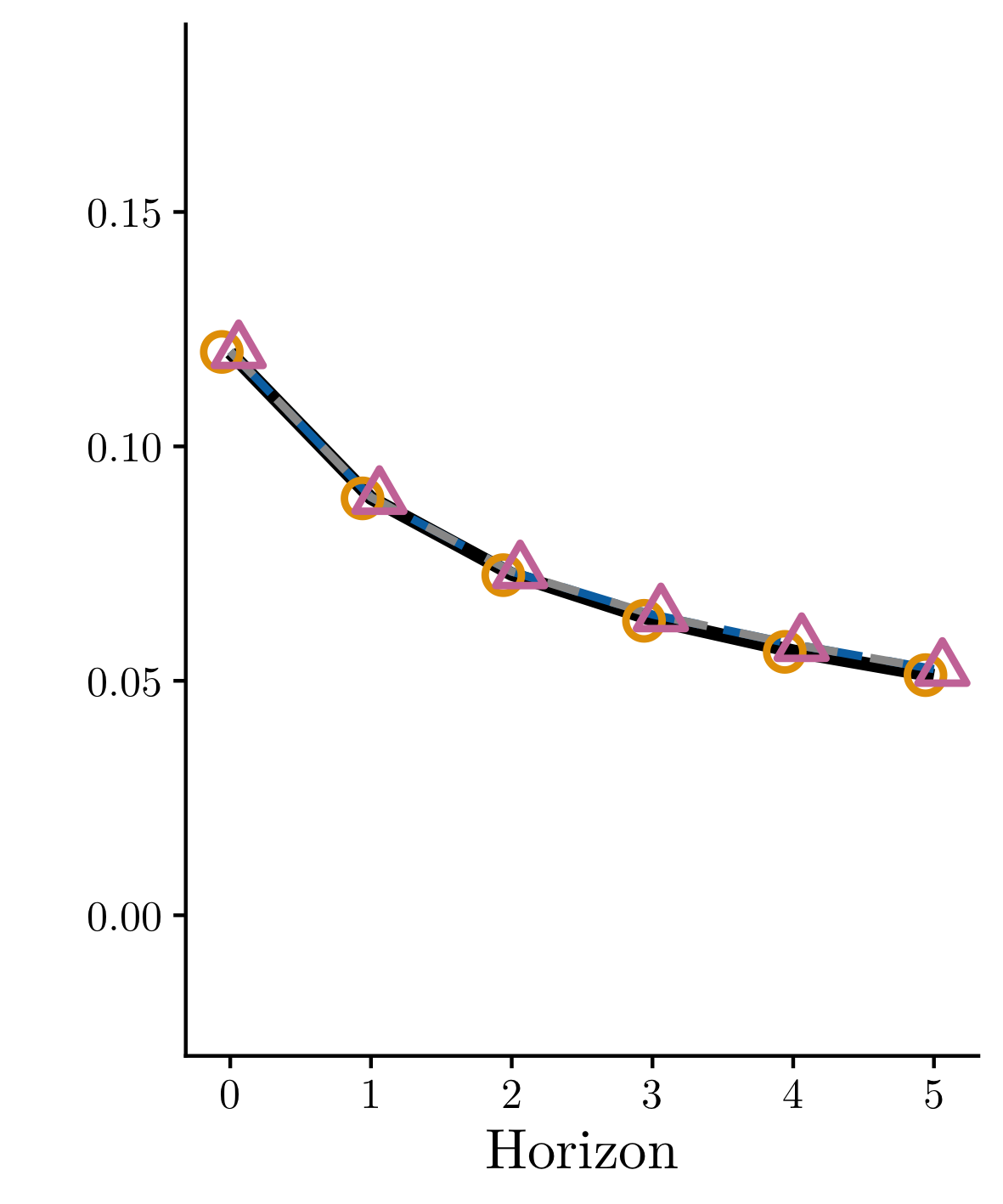}
        \caption{Unconditional}
        \label{fig:c}
    \end{subfigure}
    \begin{justify}
        {\footnotesize\noindent\textit{Notes:} The true IRF was computed by averaging over all possible paths of the state, starting from $S_{t-1}$. The LP and VAR estimands were obtained by averaging over 10 simulations with 1 million datapoints each. The lag length for the VAR is $p = 30$.}
    \end{justify}
\end{figure}
    
Figure \ref{fig:bm_irfs} shows the true impulse response of the model and compares it to four econometric estimands. The left two panels show impulse responses conditional on the lagged recession state, the right panel shows the unconditional impulse response as comparison. If a shock hits after a recession, $S_{t-1} = 1$, it raises income by more than after an expansion, which is by assumption. However, the effect evaporates more quickly after a recession, since both savings rate and productivity are lower. Local projections estimate the true effect in all three cases. This is as expected given Proposition \ref{prop:1}. The figure also plots the VAR-based estimands $\theta^f_{\mathit{VAR,h}}$, $\theta^m_{\mathit{VAR,h}}$ and $\theta^b_{\mathit{VAR,h}}$ that are defined in \eqref{eq:theta_f}, \eqref{eq:theta_m} and \eqref{eq:theta_b}, respectively. Of those three, only my novel estimate $\theta^b_{\mathit{VAR,h}}$ recovers the true effect, which verifies Proposition \ref{prop:var_lp}. If the state is held fixed, the VAR exaggerates the difference between effects after recessions and expansions. The reason is that both the true IRF and the LP estimand account for the possibility of switching to the other state after the shock hits, while $\theta^f_{\textit{VAR,h}}$ implicitly assumes the economy remains in the initial state. The difference between $\theta^m_{\textit{VAR,h}}$ and the LP estimand is more novel: Even when (correctly) accounting for the possibility of state changes, the IRF based on a single VAR model asymptotically yields a different effect estimate than the LP. 
  
To understand why $\theta^m_{\textit{VAR,h}}$ is asymptotically different from the LP estimand in this case, consider a slightly simplified version of the income process with $A(0) = A(1)=1$ but $\phi(0) \neq \phi(1)$:\footnote{This has the advantage that the state-dependent VAR only has one non-zero lag, which eases the exposition. Of course, when solving the model with $A(0) = A(1)$, the savings rates would be the same in both states. One can think about the simplification as follows: The productivities in both states changed, but the agent's policy rules did not change (yet).}
\begin{equation*}
    Y_t = \phi(S_{t-1}) Y_{t-1} + \nu + \nu B(S_t) X_t.
\end{equation*}
The forecast error of the parameters times the reduced form errors is then 
\begin{equation*}
    \mathcal{E}_{t+1}^\Pi = \begin{pmatrix}
        0 \\ (\phi(S_t)-\E[\phi(S_t) \mid S_{t-1}]\nu B(S_t)X_t)
    \end{pmatrix}.
\end{equation*}
This term is not conditionally orthogonal to $X_t$:
\begin{equation*}
    \E[\mathcal{E}_{t+1}^\Pi X_t \mid S_{t-1}] = \begin{pmatrix}
        0 \\ \text{Cov}[\phi(S_t),\nu B(S_t) \mid S_{t-1}] \mathbb{V}[X_t]
    \end{pmatrix} \neq 0.
\end{equation*}
Therefore, state-dependent LP and VAR disagree for $h=1$ if the savings rate $\phi(S_t)$ and the impact of windfall income shocks $\nu B(S_t)$ are correlated.

\section{State-Dependent LP-IVs}\label{sec:lp-iv}

This section considers LPs of the form
\begin{equation}\label{eq:lp-iv-statedep}
Y_{t+h} = f(S_{t-1}) X_t \beta^h + \text{error}_{h,t+h},
\end{equation}
where $f(S_{t-1}) Z_t$ is used as an instrument. For example, $X_t$ could be government spending, which has a large endogenous component, and $Z_t$ could be some government spending shock. This is a common setup, 19 out of the 44 studies surveyed by \cite{Goncalves:24} use some kind of 2SLS estimator for state-dependent LPs. This section shows that state dependent LP-IV's identify a weighted average of conditional marginal effects. However, the weights now generally depend on the states. To interpret state-dependent LP-IVs in the usual way, the data generating process has to be restricted.
\vspace{0.8em}  
  
\noindent\textsc{Econometric Setup.---}Again, suppose the outcome $Y_{t+h}$ is determined by the structural functions $\psi_h$ defined in \eqref{eq:str_func}. However, now $X_t$ is not assumed to be a shock, but is more generally determined by
\begin{equation}\label{eq:X_fun}
X_t = X(Z_t, V_t),
\end{equation}
where $Z_t$ is some instrument and $V_t$ is generally related to $U_{h,t+h}$, so the regressor is endogenous. It will turn out useful to marginalize the structural function $\psi_h$ over $U_{h,t+h}$, conditional on some realization $(z,v)$ of $(Z_t,V_t)$. Define the \textit{IV average structural function} as
\begin{equation}\label{eq:iv_cond_average_str_lin}
\Psi_{\textit{IV},h}(z;v) := \E[\psi_h(X(z,v),U_{h,t+h}) \mid V_t = v].
\end{equation}
Similarly, define the \textit{conditional IV average structural function} as
\begin{equation}\label{eq:iv_cond_average_str}
\Psi_{\mathit{IV},h}(z,s;v) := \E[\psi_h(X(z,v),U_{h,t+h}) \mid S_{t-1} = s, V_t = v].
\end{equation}
These functions define the average value of $Y_{t+h}$ given fixed outcomes of the shock $Z_t$ and the unobserved component $V_t$.

\subsection{The Causal Estimand of Linear LP-IVs}\label{subsec:linear_lp_iv}

Equipped with the above definition and the chain rule, a causal expression of the linear LP-IV estimand can be derived from Lemma \ref{lem:lp_lin} under mild conditions. 

\begin{assumpLPIV}\label{ass:lpiv-reg}
(i) Let $Z_t$ be continuously distributed on an interval $I \subseteq \mathbb{R}$. Assume that $Y_{t+h}$, $X_t$ and $Z_t$ have finite variance. Let $\E[Z_t^2] > 0$ and $\E[X_t Z_t] > 0$. (ii) Define the regression functions $g_h(z) = \E[Y_{t+h} \mid Z_t = z]$ and $g_X(z) = \E[X_t \mid Z_t = z]$. For both $g \in \{g_h,g_X\}$ assume $g$ is locally absolutely continuous on $I$, $\E[|g(Z_t) |(1+|Z_t|)] < \infty$ and $\int_I \omega_Z(z) |g'(z)| dz < \infty$. (iii) The derivatives $X'(z;v)$ and $\Psi_{\textit{IV,h}}'(z;v)$ of the structural functions defined in \eqref{eq:X_fun} and \eqref{eq:iv_cond_average_str_lin} exist almost everywhere. For almost every $z$, $\E[|\Psi_{\textit{IV,h}}'(z;V_t) X'(z;V_t)|]< \infty.$
\end{assumpLPIV}

\begin{assumpLPIV}\label{ass:lpiv-mon}
For almost all $(Z_t,V_t)$, $X'(Z_t,V_t) \geq 0$, where the derivative is with respect to $Z_t$ and assumed to exist almost everywhere. 
\end{assumpLPIV}

\begin{assumpexoIV}\label{ass:exo_iv1}
For all $h \geq 0$, $t\in \mathbb{Z}$, $Z_t \ind (V_t,U_{h,t+h})$ and $\E[Z_t] = 0$.
\end{assumpexoIV}
  
Assumption \ref{ass:lpiv-reg} is a collection of regularity conditions, Assumption \ref{ass:lpiv-mon} ensures monotonicity and Assumption \ref{ass:exo_iv1} is an exogeneity condition.
  
\begin{lem}\label{lem:lp-iv}
Let Assumptions \ref{ass:lpiv-reg}, \ref{ass:lpiv-mon} and \ref{ass:exo_iv1} hold. Then the linear LP-IV estimand satisfies
\begin{equation}\label{eq:IV_Late}
\frac{\E[Y_{t+h}Z_t]}{\E[X_t Z_t]} = \int \E \Bigg[ \underbrace{\Psi_{\mathit{IV,h}}'(z;V_t)}_{\textit{causal effect}} \times \underbrace{\omega_Z(z)}_{\textit{weight 1}} \times \underbrace{\frac{X'(z,V_t)}{\int \omega_Z(a) \E[X'(a,V_t)]da}}_{\textit{weight 2}} \Bigg]dz,
\end{equation}
where $\Psi_{\mathit{IV,h}}'$ is the derivative with respect to $X_t$ and $\omega_Z$ is as defined in \eqref{eq:omega-x}.
\end{lem}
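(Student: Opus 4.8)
The plan is to write the LP-IV estimand as a ratio of two Yitzhaki--Angrist integrals, treat numerator and denominator separately by invoking Lemma \ref{lem:lp_lin} with $Z_t$ in the role of the regressor, and then recombine. Concretely, since $\E[Z_t^2]>0$ and $\E[X_tZ_t]>0$ by Assumption \ref{ass:lpiv-reg}(i), write
\[
\frac{\E[Y_{t+h}Z_t]}{\E[X_tZ_t]}=\frac{\E[Y_{t+h}Z_t]/\E[Z_t^2]}{\E[X_tZ_t]/\E[Z_t^2]},
\]
and analyze the two ratios in the numerator and denominator.

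For the numerator I would apply Lemma \ref{lem:lp_lin} with outcome $Y_{t+h}$ and ``shock'' $Z_t$: Assumption \ref{ass:lpiv-reg}(i)--(ii) supplies the regularity hypotheses (continuous $Z_t$ with positive finite variance, $g_h(z)=\E[Y_{t+h}\mid Z_t=z]$ locally absolutely continuous, integrability against $\omega_Z$), and Assumption \ref{ass:exo_iv1} supplies $\E[Z_t]=0$ together with $Z_t\ind(V_t,U_{h,t+h})$, which is exactly the exogeneity condition of the lemma once the pair $(V_t,U_{h,t+h})$ is regarded as the ``$U$'' of the $Z$-regression (writing $Y_{t+h}=\psi_h(X(Z_t,V_t),U_{h,t+h})$ as a function of $Z_t$ and $(V_t,U_{h,t+h})$). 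This gives $\E[Y_{t+h}Z_t]/\E[Z_t^2]=\int\omega_Z(z)g_h'(z)\,dz$. Next I relate $g_h'$ to the structural objects: by $Z_t\ind(V_t,U_{h,t+h})$ and the law of iterated expectations, $g_h(z)=\E[\psi_h(X(z,V_t),U_{h,t+h})]=\E[\Psi_{\mathit{IV,h}}(z;V_t)]$ by the definition in \eqref{eq:iv_cond_average_str_lin}, and differentiating under the expectation together with the chain rule in $z$ yields $g_h'(z)=\E[\Psi_{\mathit{IV,h}}'(z;V_t)\,X'(z,V_t)]$, where $\Psi_{\mathit{IV,h}}'$ is the derivative in the first ($X_t$) argument; the a.e.\ existence of the derivatives and the interchange are precisely what Assumption \ref{ass:lpiv-reg}(iii) is designed to license. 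Hence the numerator equals $\int\omega_Z(z)\,\E[\Psi_{\mathit{IV,h}}'(z;V_t)X'(z,V_t)]\,dz$.

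For the denominator I would reapply Lemma \ref{lem:lp_lin}, now with ``outcome'' $X_t$ in place of $Y_{t+h}$: the corresponding regularity conditions on $g_X(z)=\E[X_t\mid Z_t=z]$ are assumed in \ref{ass:lpiv-reg}(ii), and $Z_t\ind V_t$ (a consequence of \ref{ass:exo_iv1}) gives $g_X(z)=\E[X(z,V_t)]$, so $g_X'(z)=\E[X'(z,V_t)]$ and $\E[X_tZ_t]/\E[Z_t^2]=\int\omega_Z(a)\,\E[X'(a,V_t)]\,da$, which is the normalizing constant in ``weight 2''. Dividing the numerator by this constant, pulling it inside the outer integral and inside $\E[\cdot]$ (it depends on neither $z$ nor $V_t$), and collecting terms produces exactly the right-hand side of \eqref{eq:IV_Late}. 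Monotonicity (Assumption \ref{ass:lpiv-mon}) enters only to make ``weight 2'' nonnegative: combined with $\omega_Z\ge 0$ from Lemma \ref{lem:lp_lin}, it turns the displayed expression into a genuine weighted average of the conditional marginal effects $\Psi_{\mathit{IV,h}}'(z;V_t)$, the two weights integrating to one by the same Fubini computation run in reverse; the algebraic identity itself holds without \ref{ass:lpiv-mon}.

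The main obstacle is the differentiation-under-the-expectation step $g_h'(z)=\E[\Psi_{\mathit{IV,h}}'(z;V_t)X'(z,V_t)]$: one must pass from the a.e.-in-$z$ integrability in Assumption \ref{ass:lpiv-reg}(iii) and the a.e.\ existence of $X'(\cdot;v)$ and $\Psi_{\mathit{IV,h}}'(\cdot;v)$ to an a.e.-in-$z$ formula for the derivative of the already-locally-absolutely-continuous function $g_h$. I would handle this by writing the difference quotient of $g_h$, pushing the limit inside $\E[\cdot]$ along a suitable sequence via dominated convergence, and using the chain rule for the composition $z\mapsto\tilde\Psi_{\mathit{IV,h}}(X(z,v);v)$ where $\tilde\Psi_{\mathit{IV,h}}(x;v):=\E[\psi_h(x,U_{h,t+h})\mid V_t=v]$; some care is needed because the natural dominating function $|\Psi_{\mathit{IV,h}}'X'|$ is only controlled pointwise in $z$, so one either restricts to a full-measure set of ``good'' points using local absolute continuity of $g_h$ or slightly strengthens \ref{ass:lpiv-reg}(iii). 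Everything else is routine manipulation of the Yitzhaki weights and Fubini's theorem.
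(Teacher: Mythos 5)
Your proposal is correct and follows essentially the same route as the paper's (very terse) proof: apply Lemma \ref{lem:lp_lin} with $Z_t$ as the shock to both the reduced form and the first stage, divide, then use the law of iterated expectations and the chain rule to express $g_h'(z)$ and $g_X'(z)$ in terms of $\Psi_{\mathit{IV,h}}'$ and $X'$. Your added care about differentiating under the expectation and your observation that Assumption \ref{ass:lpiv-mon} is needed only for the weighted-average interpretation, not for the identity itself, go beyond what the paper writes down but are consistent with it.
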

\begin{proof}
Apply Lemma \ref{lem:lp_lin} to first and second stage, divide both coefficients, apply the law of iterated expectations and then use the chain rule.
\end{proof}

Note that in the case of an observed shock, $Z_t = X_t$ and $V_t$ is a constant, so $X'(z,v) \equiv 1$, $\Psi_{IV,h} = \Psi_h$ and \eqref{eq:IV_Late} collapses to
\begin{equation*}
\frac{\E[Y_{t+h} X_t]}{\E[X_t^2]} = \int \omega_X(x) \Psi'_h(x) dx,
\end{equation*}
so Lemma \ref{lem:lp-iv} generalizes Lemma \ref{lem:lp_lin}. The result shows that LP-IV still identifies weighted averages of causal effects. But in addition to the weight $\omega_Z$ that depends on the marginal distribution of $Z_t$, there is now a weight across the $(Z_t,V_t)$ dimension that depends on the \textit{joint} behavior of $Z_t$ and $X_t$. When the instrument $Z_t$ has a large effect on $X_t$ for a given $(Z_t,V_t)$-pair, the corresponding effect of $X_t$ on $Y_{t+h}$ will receive more weight than when the instrument affects $X_t$ only little. 
  
\subsection{The Causal Estimand of State-Dependent LP-IVs}
  
Before deriving an analogous result to Proposition \ref{prop:1}, some regularity conditions as well as independence of instrument and lagged state have to be assumed. Again, let $f_{t-1}$ denote $f(S_{t-1})$.

\begin{assumpsLPIV}\label{ass:sLP-IV}
(i) Let $Z_t$ be continuously distributed on an interval $I \subseteq \mathbb{R}$ conditional on almost every state $s \in \mathcal{S}$. Assume that $Y_{t+h}$, $X_t$, $Z_t$, $f_{t-1}$, $X_t f_{t-1}$ and $Z_t f_{t-1}$ have finite variance. Let $\E[Z_t^2] > 0$ and $\E[X_t Z_t \mid S_{t-1}] > 0$ almost everywhere. (ii) Define the regression functions $g_{h}(z,s) = \E[Y_{t+h} \mid Z_t = z, S_{t-1} = s]$ and $g_{X}(z,s) = \E[X_t \mid Z_t = z, S_{t-1} = s]$. For both $g \in \{g_{h}, g_X\}$ and almost all $s \in \mathcal{S}$, assume $g$ is locally absolutely continuous on $I$, $\E[|g(Z_t,s)|(1+|Z_t|)]< \infty$ and $\int_I \omega_Z(z) |g'(z,s)|dz < \infty$. (iii) For almost all $s \in \mathcal{S}$: The derivatives $X'(z;v)$ and $\Psi_{\mathit{IV,h}}'(z,s;v)$ of the structural functions defined in \eqref{eq:X_fun} and \eqref{eq:iv_cond_average_str} exist almost everywhere. For almost every $z$, $\E[|\Psi_{\mathit{IV,h}}'(z,s;V_t)X'(z;V_t)|] < \infty$.
\end{assumpsLPIV}

\begin{assumpexoIV}\label{ass:exo_iv2}
For all $t$, $Z_t \ind S_{t-1}$.
\end{assumpexoIV}
  
This set of assumptions ensures that the LP-IV estimator and all the causal quantities used in Lemma \ref{lem:lp-iv} exist in conditional form. The following result shows that state-dependent LPs estimate a weighted average of conditional effects analogous to \eqref{eq:IV_Late}:

\begin{prop}\label{prop:main_iv}
Let Assumptions \ref{ass:sLP-IV}, \ref{ass:lpiv-mon}, \ref{ass:exo_iv1} and \ref{ass:exo_iv2} hold. Then the estimand $\beta^h$ of the state-dependent LP-IV \eqref{eq:lp-iv-statedep} using the instrument $f_{t-1}Z_t$ has the following property,
\begin{align}\label{eq:iv_wls_approx}
\beta^h & = \E \left[(f_{t-1} Z_t)(f_{t-1} X_t)' \right]^{-1} \E \left[(f_{t-1} Z_t) Y_{t+h}  \right] \notag \\
 & = \E[\theta_X(S_{t-1})f_{t-1}f_{t-1}']^{-1} \E[\theta_X(S_{t-1}) f_{t-1} \theta_{IV,h}(S_{t-1})],
\end{align}
where
\begin{equation}
\theta_X(s) := \int \omega_Z(z) \E[X'(z,V_t) \mid S_{t-1} = s] dz
\end{equation}
measures the effectiveness of $Z_t$ in raising $X_t$ in state $S_{t-1} = s$ and
\begin{equation}\label{eq:theta_iv_states}
\theta_{\mathit{IV,h}}(s) := \int \E\Bigg[ \underbrace{\Psi_{\mathit{IV,h}}'(z,s;V_t)}_{\text{causal effect}} \times \underbrace{\omega_Z(z)}_{\text{weight 1}} \times \underbrace{\frac{X'(z,V_t)}{\theta_X(s)}}_{\text{weight 2}} \mid S_{t-1} = s \Bigg] dz.
\end{equation}
\end{prop}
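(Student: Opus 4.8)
The plan is to mimic the proof of Proposition \ref{prop:1}, with Lemma \ref{lem:lp-iv} now playing the role that Lemma \ref{lem:lp_lin} played there: once we condition on $S_{t-1}$, the state-dependent LP-IV decomposes into an ordinary LP-IV problem within each state, to which Lemma \ref{lem:lp-iv} applies directly.

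First I would rewrite the just-identified 2SLS estimand. Since $f_{t-1}$ is a vector while $Z_t$ and $X_t$ are scalars, $(f_{t-1}Z_t)(f_{t-1}X_t)' = Z_tX_t\,f_{t-1}f_{t-1}'$ and $(f_{t-1}Z_t)Y_{t+h} = Z_tY_{t+h}\,f_{t-1}$, so the law of iterated expectations with conditioning on $S_{t-1}$ gives
\begin{align*}
\E\!\left[(f_{t-1}Z_t)(f_{t-1}X_t)'\right] & = \E\!\left[ f_{t-1}f_{t-1}'\,\E[Z_tX_t\mid S_{t-1}]\right], \\
\E\!\left[(f_{t-1}Z_t)Y_{t+h}\right] & = \E\!\left[ f_{t-1}\,\E[Z_tY_{t+h}\mid S_{t-1}]\right],
\end{align*}
so that it only remains to evaluate the two conditional cross-moments.

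Next I would fix $s\in\mathcal{S}$ and work under the conditional measure $\E[\,\cdot\mid S_{t-1}=s]$. Assumption \ref{ass:sLP-IV} supplies the conditional forms of the regularity conditions in Assumption \ref{ass:lpiv-reg}; Assumption \ref{ass:lpiv-mon} is pointwise in $(Z_t,V_t)$ and hence survives conditioning; and Assumptions \ref{ass:exo_iv1} and \ref{ass:exo_iv2} give the conditional exogeneity of $Z_t$, so that within state $s$ one has $\E[Z_t\mid S_{t-1}=s]=0$, $\E[Z_t^2\mid S_{t-1}=s]=\E[Z_t^2]$ (the weight $\omega_Z$ of Lemma \ref{lem:lp-iv} therefore does not depend on $s$), and the identifications $\E[Y_{t+h}\mid Z_t=z,S_{t-1}=s]=\E[\Psi_{\mathit{IV,h}}(z,s;V_t)\mid S_{t-1}=s]$ and $\E[X_t\mid Z_t=z,S_{t-1}=s]=\E[X(z,V_t)\mid S_{t-1}=s]$. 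Applying Lemma \ref{lem:lp-iv} to this conditional measure, and using $\int\omega_Z(a)\,\E[X'(a,V_t)\mid S_{t-1}=s]\,da=\theta_X(s)$, gives
\begin{equation*}
\frac{\E[Y_{t+h}Z_t\mid S_{t-1}=s]}{\E[X_tZ_t\mid S_{t-1}=s]}=\int\E\!\left[\Psi_{\mathit{IV,h}}'(z,s;V_t)\,\omega_Z(z)\,\frac{X'(z,V_t)}{\theta_X(s)}\mid S_{t-1}=s\right]dz=\theta_{\mathit{IV,h}}(s),
\end{equation*}
while applying Lemma \ref{lem:lp_lin} to the conditional first-stage regression of $X_t$ on $Z_t$ (differentiating through the conditional expectation of $X(z,V_t)$, which the integrability conditions in Assumption \ref{ass:sLP-IV} permit) yields $\E[X_tZ_t\mid S_{t-1}=s]=\E[Z_t^2]\,\theta_X(s)$. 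Hence $\E[Y_{t+h}Z_t\mid S_{t-1}=s]=\E[Z_t^2]\,\theta_X(s)\,\theta_{\mathit{IV,h}}(s)$.

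Finally I would substitute these two moments back into the displays of the first step, observe that the scalar $\E[Z_t^2]$ factors out of both expectations and cancels, and read off
\begin{equation*}
\beta^h=\E\!\left[\theta_X(S_{t-1})\,f_{t-1}f_{t-1}'\right]^{-1}\E\!\left[\theta_X(S_{t-1})\,f_{t-1}\,\theta_{\mathit{IV,h}}(S_{t-1})\right],
\end{equation*}
i.e.\ \eqref{eq:iv_wls_approx}; invertibility holds because $\theta_X(s)>0$ almost everywhere by Assumption \ref{ass:sLP-IV}(i). The step I expect to be the main obstacle is the middle one: one must check that conditioning on the lagged state $S_{t-1}$ does not disturb any hypothesis of Lemma \ref{lem:lp-iv} --- in particular the exogeneity of $Z_t$ relative to $(V_t,U_{h,t+h})$ and the identification of the conditional regression functions with the conditional IV average structural function --- and then track the chain rule and the Fubini interchange carefully enough that the two weight factors $\omega_Z(z)$ and $X'(z,V_t)/\theta_X(s)$ come out exactly as in \eqref{eq:theta_iv_states}, with all the state dependence of the first-stage strength confined to $\theta_X$. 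Everything else is the linear-algebra bookkeeping already carried out in the proof of Proposition \ref{prop:1}.
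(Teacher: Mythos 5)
Your proposal is correct and follows essentially the same route as the paper's proof: condition on $S_{t-1}$ via the law of iterated expectations, apply Lemma \ref{lem:lp-iv} under the conditional measure to identify $\E[Y_{t+h}Z_t\mid S_{t-1}]/\E[X_tZ_t\mid S_{t-1}]=\theta_{\mathit{IV,h}}(S_{t-1})$, and use $Z_t\ind S_{t-1}$ to factor out $\E[Z_t^2]$ and recognize the first stage as $\theta_X(S_{t-1})$. The only cosmetic difference is that you manipulate the explicit 2SLS formula directly while the paper works through the orthogonality condition and then reads it as a WLS normal equation; the substance is identical.
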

\begin{proof}
The proof is similar to Proposition \ref{prop:1} and can be found in Appendix \ref{app:proofs}.
\end{proof}

Proposition \ref{prop:main_iv} shows that state-dependent LP-IVs estimate the same causal quantity as linear LP-IVs---just in a conditional way. If $f$ is misspecified, this quantity is approximated in a weighted least square sense, where the non-negative weights $\theta_X(s)$ indicate the strength of the instrument in a given state.\footnote{$\theta_X(s)$ is just the conditional average effect used in Section \ref{sec:observed} and Proposition \ref{prop:1} with $X_t$ being the dependent variable and $Z_t$ the shock. It is the regression coefficient of $X_t$ on $Z_t$ in the sub-sample where $S_{t-1} = s$.} Again, if the interaction term consists of dummy variables, state-dependent LP-IVs directly estimate $\theta_{\textit{IV,h}}(s)$. This estimand is an integral over a product of three components: (i) The effect of interest at a certain instrument and state realization, $\Psi_{\textit{IV,h}}'(z,s;V_t)$, (ii) the weight $\omega_Z$ and (iii) the weight $\kappa(z,V_t):= X'(z,V_t)/\theta_X(s)$ that corresponds to the effect of the instrument on the regressor $X_t$. The first weight $\omega_Z$ only depends on the marginal distribution of $Z_t$ and therefore is identical across states and applications. The second weight $\kappa$, however, depends on the joint distribution of $(Z_t,X_t)$ and can vary across states. This makes it hard to correctly interpret state-dependent LP-IV coefficients: The result $\theta_{\textit{IV,h}}(1) > \theta_{\textit{IV,h}}(0)$ would commonly be interpreted as $X_t$ having a stronger effect on $Y_{t+h}$ in state 1 than in state 0. However, the result could well be driven by differences in the weighting scheme, i.e. state dependence of the effect of $Z_t$ on $X_t$, which is not actually of interest. The next section shows that with certain model restrictions, the common interpretation of LP-IVs is still valid. However, the last example shows that in the absence of such restrictions this common interpretation can easily fail.
  
\subsection{Where State-Dependent LP-IVs Work and Fail}\label{sec:work_fail}
  
If the data generating process features arbitrary nonlinearities, no strong conclusions can be drawn from state-dependent LP-IVs. For this, either the relationship between regressor and outcome \textit{or} instrument and regressor has to be restricted. The next two examples demonstrate how this works.
    
\begin{example}[Partially Linear Model]\label{ex:lin}
Suppose conditional on the state $S_{t-1}$, the effect of $X_t$ on $Y_{t+h}$ is constant:
\begin{equation*}
\Psi'_{\mathit{IV,h}}(z,s;v) = b(s)~~ \forall s \in \mathcal{S}.
\end{equation*}
This holds for a state-dependent VAR with independent errors and exogenous state \citep{Goncalves:24}, where $b(s)$ is given by $\theta^m_{\mathit{IV,h}}(s)$ defined in \eqref{eq:theta_m}. Other examples are linear time series models such as linear SVARs or SVMAs. In this case, it follows from Proposition \ref{prop:main_iv} that
\begin{equation*}
\theta_{\mathit{IV,h}}(s) = b(s),
\end{equation*}
so LP-IV approximates the population conditional effect, regardless of the structural relationship between $Z_t$ and $X_t$. \hfill $\diamondsuit$
\end{example}

  
Sometimes, one might know more about the relationship between the instrument $Z_t$ and $X_t$ than about the structural function $\psi_h$. Knowledge of the mechanism linking $Z_t$ and $X_t$ can come from the construction of the shock or from investigating validity of the exogeneity assumption. 

\begin{example}[Linear Policy Shock]\label{ex:poliy}
In macroeconomics it is often assumed that the policy instrument $X_t$ is generated by a fixed policy rule $\Theta$ and additive policy shocks:
\begin{equation}\label{eq:policy_rule}
X_t = \Theta(V_t) + Z_t.
\end{equation}
Note that $Z_t \ind V_t$ rules out the possibility of an endogenous response to the policy shock in the same period. This is particularly plausible in high frequency settings. Even if the researcher does not exactly know $\Theta$, market or expert expectations about the policy instrument, $\hat X_t$, can be taken as a good enough proxy for $\Theta(V_t)$ and the difference $X_t - \hat X_t$ can be interpreted as policy shock. Especially for monetary policy this is a popular procedure: \cite{Romer:04} and \cite{Nakamura:18} use prediction errors to construct policy shocks, while \cite{Albuquerque:19} estimates a Taylor rule to proxy for $\Theta$. It follows from Proposition \ref{prop:main_iv}, that under the policy rule \eqref{eq:policy_rule},
\begin{equation*}
\theta_{\mathit{IV,h}}(s) = \int \omega_Z(z) \E[\Psi_{\mathit{IV,h}}'(z,s;V_t)]dz,
\end{equation*}
so the weighting of causal effects is identical across states and spurious estimates of state dependence like in Example \ref{ex:gov_spending} cannot occur. Note that this holds without restricting the structural function $\psi_h$ linking $X_t$ to $Y_{t+h}$.  \hfill $\diamondsuit$
\end{example}
  
The preceding examples hinge on either $Y_{t+h}$ being linear in $X_t$ conditionally on $S_{t-1}$ or $X_t$ being linear in $Z_t$. If neither of those holds, the common interpretation of state-dependent LP-IVs can be misleading.
  
\begin{example}[Government Spending]\label{ex:gov_spending}
Consider an economy where output $Y_t$ only depends on government spending $X_t$, i.e. $Y_t = \psi(X_t)$. All variables are denoted in deviations from a steady state. For negative and moderately positive $X_t$, the government spending multiplier is constant, so $Y_t$ is linear in $X_t$. However for large deviations of government spending, $X_t > M$, the effectiveness of government spending becomes smaller, which leads to a kink in $\psi_h$. Figure \ref{fig:mil_a} plots the resulting structural function. Such a decrease in effectiveness could be motivated by a decreasing returns to scale argument. Now introduce the state indicator $S_{t-1}$, which is 1 if the economy was in a recession and 0 if it was in an expansion. Note that the effect of government spending is not state-dependent, as it only depends on the baseline government spending level $X_t$ and not on the state of the economy itself. Now let's assume that the deviation of government spending is driven by military spending shocks $Z_t \sim N(0,1)$. Suppose that after an expansion, military spending shocks are entirely passed on to government spending, so in this regime $X_t = Z_t$ and $X(z,0)$ is just the 45 degree line. After a recession, negative and moderately positive shocks are also passed on, but for large shocks, $Z_t > M$, the budget needs to be consolidated: Military spending crowds out non-military spending and $X(z,1)$ has a kink at $M$, to the right of which it flattens. Figure \ref{fig:mil_b} plots this relationship.

\begin{figure}[t]
\centering
\caption{Structural functions used in Example \ref{ex:gov_spending}.} \label{fig:military_spending}
\begin{minipage}{0.3\textwidth}
    \begin{tikzpicture}
        \begin{axis}[
            height=6cm,
            width=6cm,
            axis y line=middle, 
            axis x line=middle,
            xlabel={$x$},
            ylabel={$\psi(x)$},
            domain=-2:2,
            xtick=\empty, 
            ytick=\empty,
            enlargelimits,
            clip=false,
            legend style={at={(1,1.05)}, anchor=south east, legend columns=1, draw=white, text=white}
            ]
            
            \addplot[domain = -2:2, samples = 100, color = white]{x};
            \addplot[domain=-2:0.8, samples=100, thick, color=black] {x};
            \addlegendentry{no recession recession}
            
            \addplot[domain=0.8:2, samples=100, thick, color=black] {0.8 + (x-0.8)*0.5};
            
            \draw[dashed, thick, color=black] (axis cs:0.8,0) -- (axis cs:0.8,0.8);
            
            \node at (axis cs:0.8,-0.5) {$M$};
        \end{axis}
    \end{tikzpicture}
    \subcaption{Output vs. gov. spending}\label{fig:mil_a}
\end{minipage}%
\hspace{0.2cm}
\begin{minipage}{0.4\textwidth}
    \begin{tikzpicture}
    \centering
        \begin{axis}[
            height=6cm,
            width=6cm,
            axis y line=middle, 
            axis x line=middle,
            xlabel={$z$},
            ylabel={$X(z,s)$},
            domain=-2:2,
            xtick=\empty, 
            ytick=\empty,
            enlargelimits,
            clip=false,
            legend style={at={(1,1.05)}, anchor=south east, legend columns=2},
            ]
            
            \addplot[domain=-2:2, samples=100, thick, color=black] {x};
            \addlegendentry{No Recession}
            
            \addplot[domain=-2:0.8, samples=100, dashed, thick, color=orange] {x};
            \addplot[domain=0.8:2, samples=100, dashed, thick, color=orange] {0.8 + (x-0.8)*0.5};
            \addlegendentry{Recession}
            \draw[dashed, thick, color=black] (axis cs:0.8,0) -- (axis cs:0.8,0.8);
            
            \node at (axis cs:0.8,-0.5) {$M$};
        \end{axis}
    \end{tikzpicture}
    \subcaption{Gov. vs. military spending}\label{fig:mil_b}
\end{minipage}
\end{figure}
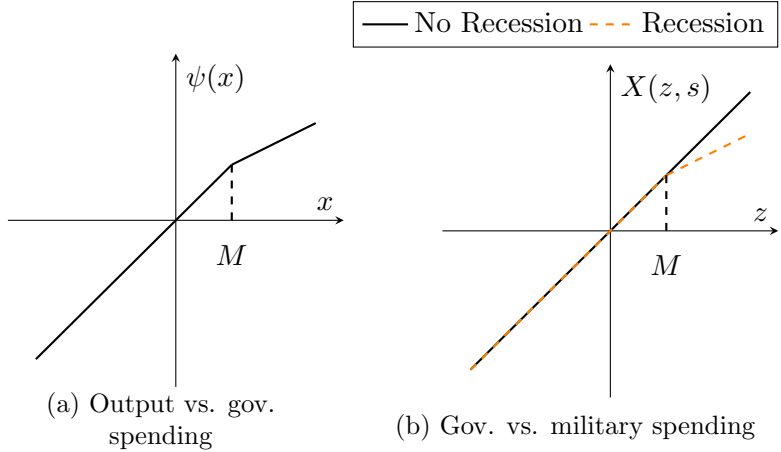  
  
Now suppose a researcher in this model economy has access to data on $(Y_t, X_t, S_{t-1}, Z_t)$ and runs a state-dependent LP-IV
\begin{equation}\label{eq:gov_ex}
Y_t = X_t \beta_0 + X_t S_{t-1} \beta_1 + \text{error}_t,
\end{equation}
which she estimates using the instrument set $(Z_t, S_{t-1} Z_t)$.\footnote{Of course in this simple setup $X_t$ itself is exogenous so there is no need to use an instrument. Suppose the researcher is not sure about exogeneity of $X_t$, so she uses the instrument. One could re-define $X_t$ to have an endogenous component and make the same point, but this would make the example unnecessarily complicated.} It is shown in Appendix \ref{app:proofs} that this LP will have an estimand $\beta_1 > 0$, so according to common praxis the researcher will conclude that government spending is more successful in raising output after recessions than after expansions. However, by design of the data generating process this is not true. 
  
To understand what drives $\beta_1 > 0$, recall from Proposition \ref{prop:main_iv} that the state-dependent LP-IV estimand $\theta_{\textit{IV,h}}(s)$ is an integral over three components that are plotted in Figure \ref{fig:military_spending2}. For both states, the causal effect $\psi'(X(z,s))$ is a step function with a downward jump at $M$. The weight $\omega_Z$ is just the standard Normal density. The weight $\kappa(z,s) := X'(z,s)/\theta_X(s)$, however, is state-dependent: After an expansion, the effectiveness of military spending shocks in raising output is constant, so $\kappa(z,0) \equiv 1$. After a recession, $\kappa(z,1)$ is low for shocks larger than $M$ because non-military spending is crowded out. This leads to a re-weighting of the effect $\psi'(X(z,s))$, which is high when $\kappa(z,1)$ is high and low when $\kappa(z,1)$ is low. This positive correlation leads to the LP-IV estimand being larger after recessions than after expansions, and therefore $\beta_1 > 0$. In summary, the positive interaction term is purely a product of the weights and has nothing to do with the effect of interest. \hfill $\diamondsuit$
\end{example}  


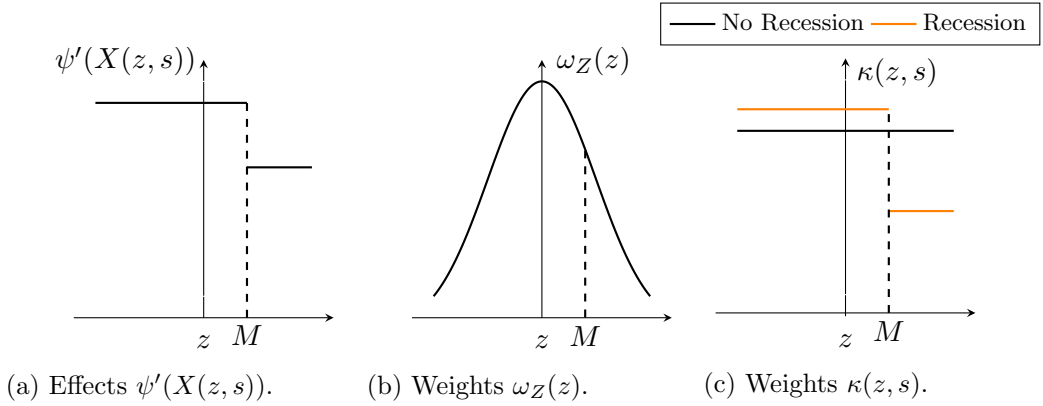
\begin{figure}[t]
\justifying
\caption{Building blocks for the state-dependent LP estimand in Example \ref{ex:gov_spending}.} \label{fig:military_spending2}
\begin{minipage}{0.28\textwidth}
    \begin{tikzpicture}
        \begin{axis}[
            height=5cm,
            width=5cm,
            axis y line=middle, 
            axis x line=middle,
            xlabel={$z$},
            ylabel={$\psi'(X(z,s))$},
            xlabel style={
                          at={(axis description cs:0.5, -0.03)}, 
                               anchor=north
                                },
           ylabel style={
                         at={(axis description cs:0.2, 0.9)},
                              anchor=south
                             },
            domain=-2:2,
            xtick=\empty, 
            ytick=\empty,
            enlargelimits,
            clip=false,
            legend style={at={(1,1.05)}, anchor=south east, legend columns=1, draw=white, text=white}
            ]
            
            \addplot[domain = -2:0.8, samples = 100, color = white]{1.1};
            \addplot[domain = -2:0.8, samples = 100, color = white]{0.1};
            \addplot[domain=  -2:0.8, samples=100, thick, color=black] {1};
            \addplot[domain=  0.8:2, samples=100, thick, color=black] {0.7};
            \addlegendentry{no recession recession}
            
            \draw[dashed, thick, color=black] (axis cs:0.8,0) -- (axis cs:0.8, 1);
            
            \node at (axis cs:0.8,-0.08) {$M$};
        \end{axis}
    \end{tikzpicture}
    \subcaption{Effects $\psi'(X(z,s))$.}\label{fig:mil2_a}
\end{minipage}%
\hspace{0.22cm}
\begin{minipage}{0.28\textwidth}
    \begin{tikzpicture}
        \begin{axis}[
            height=5cm,
            width=5cm,
            axis y line=middle, 
            axis x line=middle,
            xlabel={$z$},
            ylabel={$\omega_Z(z)$},
            xlabel style={
                          at={(axis description cs:0.5, -0.03)}, 
                               anchor=north
                                },
           ylabel style={
                         at={(axis description cs:0.7, 0.9)},
                              anchor=south
                             },
            domain=-2:2,
            xtick=\empty, 
            ytick=\empty,
            enlargelimits,
            clip=false,
            legend style={at={(1,1.05)}, anchor=south east, legend columns=1, draw=white, text=white}
            ]
            
            \addplot[domain = -2:2, samples = 100, color = white]{1/sqrt(2*3.14) * exp(-0.5*x^2)};
            \addplot[domain=  -2:2, samples=100, thick, color=black] {1/sqrt(2*3.14) * exp(-0.5*x^2)};
            \addlegendentry{no recession recession}
            
            \draw[dashed, thick, color=black] (axis cs:0.8,0.02) -- (axis cs:0.8, 0.288);
            
            \node at (axis cs:0.8,-0.01) {$M$};
        \end{axis}
    \end{tikzpicture}
    \subcaption{Weights $\omega_Z(z)$.}\label{fig:mil2_b}
\end{minipage}%
\hspace{0.22cm}
\begin{minipage}{0.28\textwidth}
    \begin{tikzpicture}
    \centering
        \begin{axis}[
            height=5cm,
            width=5cm,
            axis y line=middle, 
            axis x line=middle,
            xlabel={$z$},
            ylabel={$\kappa(z,s)$},
            xlabel style={
                          at={(axis description cs:0.5, -0.03)}, 
                               anchor=north
                                },
           ylabel style={
                         at={(axis description cs:0.7, 0.85)},
                              anchor=south
                             },
            domain=-2:2,
            xtick=\empty, 
            ytick=\empty,
            enlargelimits,
            clip=false,
            legend style={at={(1.25,1.05)}, anchor=south east, legend columns=2, font = \footnotesize},
            ]
            
            \addplot[domain=-2:2, samples=100, thick, color=black] {1};
            \addlegendentry{No Recession}
            
            \addplot[domain=-2:0.8, samples=100, solid, thick, color=orange] {1/( 0.7881 + (1-0.7881)*0.5 )};
            \addplot[domain=0.8:2, samples=100, solid, thick, color=orange] {0.5/( 0.7881 + (1-0.7881)*0.5 )};
            \addlegendentry{Recession}
            \addplot[domain = -2:0.8, samples = 100, color = white]{1.28};
            \addplot[domain = -2:0.8, samples = 100, color = white]{0.1};
            \draw[dashed, thick, color=black] (axis cs:0.8,0) -- (axis cs:0.8,1.118);
            
            \node at (axis cs:0.8,-0.1) {$M$};
        \end{axis}
    \end{tikzpicture}
    \subcaption{Weights $\kappa(z,s)$.}\label{fig:mil2_c}
\end{minipage}
~~~~
  
{\noindent \footnotesize \textit{Notes:} This Figure presents the three components from Proposition \ref{prop:main_iv}, equation \eqref{eq:gov_ex} that are the building blocks for the state-dependent LP-IV estimand expressed in causal terms. The component $\psi'(X(z,s))$ is the effectiveness of government spending at a baseline instrument level, $\omega_Z(z)$ comes from the marginal distribution of the instrument and $\kappa(z,s) = X'(z,s)/\theta_X(s)$ ($s = 0$ means no recession and $s=1$ means recession) measures how effective the instrument is in raising government spending at $(z,s)$. The state-dependent LP-IV estimands from \eqref{eq:gov_ex} are then given by $\beta_0 = \int \psi'(X(z,0)) \omega_Z(z) \kappa(z,0)dz$ and $\beta_0 + \beta_1 = \int \psi'(X(z,1)) \omega_Z(z) \kappa(z,1)dz$.}
\end{figure}

\subsection{Connection to the Local Average Treatment Effect (LATE)}
  
The study of LP-IVs in a nonlinear environment is closely tied to microeconometric work on limited compliance. Unrestricted linearity of the structural function $\psi_h$ effectively corresponds to (unobserved) treatment effect heterogeneity. Having that in mind, the second weight in \eqref{eq:IV_Late} can be understood as indicating compliance, i.e. how strong the treatment reacts to the instrument. While in binary treatment settings compliance is an on-off decision, in the continuous case it is itself a continuum. In microeconometrics, the treatment effect weighted by the compliance decision is called the Local Average Treatment Effect (LATE), which corresponds to the IV estimand. Indeed, this seminal result by \cite{Imbens:94} is a special case of Lemma \ref{lem:lp-iv}.
  
\begin{example}[Local Average Treatment Effect]\label{ex:late}
Let $Y$ be the outcome of interest for a population that consists of individuals $i \in I$. Furthermore, there is a binary treatment $X$ and a binary, randomly assigned instrument $Z$. In the notation of \eqref{eq:str_func}, the component $U$ is identical to $i$. Thus, the outcome can be written as $\psi(x,i)$ or more commonly $Y_i(x)$. Assuming monotonicity, there are three cases of how the instrument can influence the treatment: (i) $X(0,v) = X(1,v) = 0$ (never-takers, $N$), (ii) $X(0,v) = X(1,v) = 1$ (always-takers, $A$), (iii) $X(0,v) = 0, X(1,v) = 1$ (compliers, $C$). In the notation of \eqref{eq:X_fun}, the component $V$ indicates whether $i$ is in $N$, $A$ or $C$. Lemma \ref{lem:lp-iv} is not directly applicable since $Z$ and $X$ are discrete. However, one can make discrete variables fit the differential notation used in this paper by interpolation \citep[Section 6]{Kolesar:24}, i.e. by setting $I = [0,1]$ and defining $X(z,v) := (1-z) X(0,v) + z X(1,v)$ for $z \in I$. With this notation, $X'(z,v) = X(1,v) - X(0,v)$ for $z \in (0,1)$ is 1 if $v$ is the complier group and else 0. Similarly, $\psi'(x,i) = \psi(1,i) - \psi(0,i)$ for $x \in (0,1)$. Also, it is easily computed that $\omega_Z(z) \equiv 1$ for $z \in (0,1)$. Thus, \eqref{eq:IV_Late} simplifies to
\begin{equation*}
\frac{\E[Y Z]}{\E[X Z]} = \frac{\E[(\psi(1,i)-\psi(0,i))\I[i \in C]]}{\E[\I[i \in C]]} = \E[Y_i(1)-Y_i(0) \mid i \in C],
\end{equation*}
which is the average treatment effect in the complier population---the LATE.\hfill $\diamondsuit$
\end{example}
  
The three examples in Section \ref{sec:work_fail} can also be re-interpreted in the language of microeconometrics: It is well known that limited compliance poses no problems, if every individual has the same treatment effect (Example \ref{ex:lin}). In this case, IVs estimate the average treatment effect (ATE), which is equal to every other weighted average of treatment effects. If compliance is independent of the effect size (corresponding to $X_t$ being linear in $Z_t$), IVs have the same estimand as a regression using data where the treatment is perfectly randomized (Example \ref{ex:poliy}). Lastly, Example \ref{ex:gov_spending} corresponds to having two populations with the same treatment effect distribution but different compliance decisions: In the first population, which corresponds to the expansion state, compliance is perfect and so the ATE is estimated. In the second population (the recession state), individuals with higher treatment effect are more likely to comply, so the LATE is higher than the ATE. The resulting difference in the IV estimands is not due to differences in the effect distribution of interest but due to compliance.

\section{Conclusion}\label{sec:conclusion}
  
This paper shows that state-dependent LPs estimate weighted averages of conditional marginal effects. The result holds without making parametric assumptions and the shock of interest is allowed to influence current and future realizations of the state. The weighted average of effects is generally different from the average response to a shock of both marginal and strictly positive size. Unless one commits to specific functional forms, no stronger guarantee holds even for linear LPs. Therefore I conclude that generally state-dependent LPs are just as valid as linear LPs. If the shock of interest is observed, the weights on the causal effects are identical across states and applications. Therefore, a non-zero interaction coefficient implies state dependence of the effect of interest. If the relationship between state and effect is misspecified, state-dependent LPs approximate the weighted average of conditional marginal effects in the familiar MSE sense. Since asymptotic equivalence between VARs and LPs breaks down in the state-dependent case, those favorable properties do not carry over to conventional state-dependent VAR estimates. As a remedy, I propose a VAR-based impulse response estimate that is easy to compute and converges to the state-dependent LP estimand. This should give researchers more freedom to choose between both methods based on finite sample considerations. 
   
My analysis also raises an issue that warrants caution: When using instrumental variables, the weights on the effects depend on the joint distribution of instrument and regressor. If the instrument $Z_t$ affects the regressor $X_t$ strongly in a certain state, the corresponding effect of $X_t$ on $Y_{t+h}$ receives disproportionate weight. As a consequence, non-zero interaction coefficients in state-dependent LP-IVs can be due to differences in the weighting scheme that have nothing to do with the effect of interest. Knowledge about the relationship between instrument and regressor or regressor and outcome can rule out this option. 
  
Another caveat concerns the assumptions: While linear data generating processes usually require orthogonality conditions for identification, papers studying LPs in a nonparametric setting assume that the shock $X_t$ is serially independent and independent of the nuisance variable $U_{h,t+h}$ \citep{Rambachan:21, Caravello:24, Kolesar:24}. This paper additionally assumes that the shock $X_t$ is independent of the past state $S_{t-1}$. So far, this strengthening of assumptions has not been discussed a lot. However, it might be problematic: While the fact that shocks are not linearly predictable using past information is intimately tied to the notion of a shock and rational expectations econometrics, the same cannot be said about higher-moment dependence. For example in a financial context, the volatilities of excess returns are often clustered and way easier to forecast than its levels. Thus, being agnostic about the functional form of the data generating process comes at a cost. The required independence conditions should be taken seriously and tested empirically.

\singlespacing
\setlength\bibsep{0pt}
\bibliographystyle{abbrvnat}
\bibliography{State_Dep_LP}  

@inbook{Blundell:03,
	author = {Richard Blundell and James L. Powell},
	chapter = {Endogeneity in nonparametric and semiparametric regression models},
	date-added = {2026-01-03 13:09:09 +0100},
	date-modified = {2026-01-03 13:11:30 +0100},
	pages = {312-357},
	publisher = {Cambridge University Press},
	title = {Advances in Economics and Econometrics: Theory and Applications, Eight World Congress},
	volume = {2},
	year = {2003}}

@article{Brock:72,
	author = {William A. Brock and Leonard J. Mirman},
	date-added = {2025-11-29 13:17:26 -0500},
	date-modified = {2025-11-29 13:18:35 -0500},
	journal = {Journal of Economic Theory},
	month = {June},
	number = {3},
	pages = {479-513},
	title = {Optimal Economic Growth and Uncertainty: The Discounted Case},
	volume = {4},
	year = {1972}}

@unpublished{Casini:25,
	author = {Alessandro Casini and Adam McCloskey},
	date-added = {2025-11-29 10:29:20 -0500},
	date-modified = {2025-11-29 10:29:58 -0500},
	month = {November},
	note = {Working Paper},
	title = {Identification, Estimation and Inference in High-Frequency Event Study Regressions},
	year = {2025}}

@article{Plagborg-Moller:21,
	author = {Mikkel Plagborg-M{\o}ller and Christian K. Wolf},
	date-added = {2025-08-03 14:38:30 +0200},
	date-modified = {2025-08-03 14:39:27 +0200},
	journal = {Econometrica},
	month = {March},
	number = {2},
	pages = {955-980},
	title = {Local Projections and Vars Estimate the same Impulse Responses},
	volume = {89},
	year = {2021}}

@book{Granger:93,
	author = {Clive W.J. Granger and Timo Ter{\"a}svirta},
	date-added = {2025-08-02 16:13:30 +0200},
	date-modified = {2025-08-02 16:14:33 +0200},
	publisher = {Oxford University Press},
	title = {Modelling Nonlinear Economic Relationships},
	year = {1993}}

@article{Sheng:21,
	author = {Xuguang Simon Sheng and Rubena Sukaj},
	date-added = {2025-07-24 13:55:08 +0200},
	date-modified = {2025-07-24 13:55:52 +0200},
	journal = {Journal of International Money and Finance},
	pages = {102283},
	title = {Identifying external debt shocks in low- and middle-income countries},
	volume = {110},
	year = {2021}}

@article{Loipersberger:22,
	author = {Florian Loipersberger and Johannes Matschke},
	date-added = {2025-07-24 13:53:58 +0200},
	date-modified = {2025-07-24 13:54:49 +0200},
	journal = {European Economic Review},
	pages = {104034},
	title = {Financial cycles and domestic policy choices},
	volume = {143},
	year = {2022}}

@article{Lastauskas:20,
	author = {Povilas Lastauskas and Julius Stak{\.e}nas},
	date-added = {2025-07-24 13:50:17 +0200},
	date-modified = {2025-07-24 13:53:35 +0200},
	journal = {European Economic Review},
	pages = {103509},
	title = {Labor market reforms and the monetary policy environment},
	volume = {128},
	year = {2020}}

@article{Duval:18,
	author = {Romain Duval and Davide Furceri},
	date-added = {2025-07-24 13:45:15 +0200},
	date-modified = {2025-07-24 13:49:50 +0200},
	journal = {IMF Economic Review},
	number = {1},
	pages = {31-69},
	title = {The Effects of Labor and Product Market Reforms},
	volume = {66},
	year = {2018}}

@article{Haan:22,
	author = {Jakob {De Haan} and Rasmus Wiese},
	date-added = {2025-07-24 13:43:06 +0200},
	date-modified = {2025-07-24 15:00:00 +0200},
	journal = {Journal of Applied Econometrics},
	pages = {746-770},
	title = {The impact of product and labour market reform on growth: Evidence for OECD countries based on local projections},
	volume = {37},
	year = {2022}}

@article{Cacciatore:21,
	author = {Matteo Cacciatore and Federico Ravenna},
	date-added = {2025-07-24 13:41:30 +0200},
	date-modified = {2025-07-24 13:42:19 +0200},
	journal = {The Economic Journal},
	pages = {2797-2823},
	title = {Uncertainty, Wages and the Business Cycle},
	volume = {131},
	year = {2021}}

@article{Tenreyro:16,
	author = {Silvana Tenreyro and Gregory Thwaites},
	date-added = {2025-07-24 13:39:58 +0200},
	date-modified = {2025-07-25 09:10:31 +0200},
	journal = {American Economic Journal: Macroeconomics},
	number = {4},
	pages = {43-74},
	title = {Pushing on a String: US Monetary Policy is Less Powerful in Recessions},
	volume = {8},
	year = {2016}}

@article{Santoro:14,
	author = {Emiliano Santoro and Ivan Petrella and Damjan Pfajfar and Edoardo Gaffeo},
	date-added = {2025-07-24 13:38:12 +0200},
	date-modified = {2025-07-24 13:39:37 +0200},
	journal = {Journal of Monetary Economics},
	pages = {19-36},
	title = {Loss aversion and the asymmetric transmission of monetary policy},
	volume = {68},
	year = {2014}}

@article{Furceri:18,
	author = {Davide Furceri and Prakash Loungani and Aleksandra Zdzienicka},
	date-added = {2025-07-24 13:36:46 +0200},
	date-modified = {2025-07-24 13:37:44 +0200},
	journal = {Journal of International Money and Finance},
	pages = {168-186},
	title = {The effects of monetary policy shocks on inequality},
	volume = {85},
	year = {2018}}

@article{Falck:21,
	author = {Elisabeth Falck and Mathias Hoffmann and Patrick H{\"u}rtgen},
	date-added = {2025-07-24 13:32:50 +0200},
	date-modified = {2025-07-24 13:36:30 +0200},
	journal = {Journal of Monetary Economics},
	pages = {15-31},
	title = {Disagreement about inflation expectations and monetary policy transmission},
	volume = {118},
	year = {2021}}

@article{El-Herradi:21,
	author = {Mehdi {El Herradi} and Aur{\'e}lien Leroy},
	date-added = {2025-07-24 13:30:55 +0200},
	date-modified = {2025-07-24 21:59:46 +0200},
	journal = {International Journal of Central Banking},
	number = {5},
	pages = {237-277},
	title = {Monetary Policy and the Top 1\%: Evidence from a Century of Modern Economic History},
	volume = {18},
	year = {2021}}

@article{Alpanda:21,
	author = {Sami Alpanda and Eleonora Granziera and Sarah Zubairy},
	date-added = {2025-07-24 13:28:57 +0200},
	date-modified = {2025-07-24 13:30:27 +0200},
	journal = {European Economic Review},
	pages = {103936},
	title = {State dependence of monetary policy across business, credit and interest rate cycles},
	volume = {140},
	year = {2021}}

@article{Albrizio:20,
	author = {Silvia Albrizio and Sangyup Choi and Davide Furceri and Chansik Yoon},
	date-added = {2025-07-24 13:27:27 +0200},
	date-modified = {2025-07-24 13:28:29 +0200},
	journal = {Journal of International Money and Finance},
	pages = {102124},
	title = {International Bank Lending Channel of Monetary Policy},
	volume = {102},
	year = {2020}}

@article{Sheremirov:22,
	author = {Viacheslav Sheremirov and Sandra Spirovska},
	date-added = {2025-07-24 13:25:57 +0200},
	date-modified = {2025-07-24 13:27:10 +0200},
	journal = {Journal of Public Economics},
	pages = {104631},
	title = {Fiscal multipliers in advanced and developing countries: Evidence from military spending},
	volume = {208},
	year = {2022}}

@article{Riera:15,
	author = {Daniel Riera-Crichton and Carlos A. Vegh and Guillermo Vuletin},
	date-added = {2025-07-24 13:24:16 +0200},
	date-modified = {2025-07-24 13:25:19 +0200},
	journal = {Journal of International Money and Finance},
	pages = {15-31},
	title = {Procyclical and countercyclical fiscal multipliers: Evidence from OECD countries},
	volume = {52},
	year = {2015}}

@article{Ben-Zeev:23,
	author = {Nadav {Ben~Zeev} and Valerie A. Ramey and Sarah Zubairy},
	date-added = {2025-07-24 13:22:34 +0200},
	date-modified = {2025-07-24 14:49:11 +0200},
	journal = {AEA Papers and Proceedings},
	pages = {382-387},
	title = {Do Government Spending Multipliers Depend on the Sign of the Shock?},
	volume = {113},
	year = {2023}}

@article{Owyang:13,
	author = {Michael T. Owyang and Valerie A. Ramey and Sarah Zubairy},
	date-added = {2025-07-24 13:21:13 +0200},
	date-modified = {2025-07-24 13:22:19 +0200},
	journal = {American Economic Review: Papers and Proceedings},
	number = {3},
	pages = {129-134},
	title = {Are Government Spending Multipliers Greater during Periods of Slack? Evidence from Twentieth-Century Historical Data},
	volume = {103},
	year = {2013}}

@article{Miyamoto:18,
	author = {Wataru Miyamoto and Thuy Lan Nguyen and Dmitriy Sergeyev},
	date-added = {2025-07-24 13:19:35 +0200},
	date-modified = {2025-07-24 13:20:47 +0200},
	journal = {American Economic Journal: Macroeconomics},
	number = {3},
	pages = {247-277},
	title = {Government Spending Multipliers under the Zero Lower Bound: Evidence from Japan},
	volume = {10},
	year = {2018}}

@article{Liu:23,
	author = {Yang Liu},
	date-added = {2025-07-24 13:18:54 +0200},
	date-modified = {2025-07-24 13:19:21 +0200},
	journal = {Journal of Monetary Economics},
	pages = {18-34},
	title = {Government debt and risk premia},
	volume = {136},
	year = {2023}}

@article{Liu:22,
	author = {Siming Liu},
	date-added = {2025-07-24 13:17:15 +0200},
	date-modified = {2025-07-24 13:18:16 +0200},
	journal = {Journal of International Economics},
	pages = {103571},
	title = {Government spending during sudden stop crises},
	volume = {135},
	year = {2022}}

@article{Leduc:12,
	author = {Sylvain Leduc and Daniel Wilson},
	date-added = {2025-07-24 13:15:10 +0200},
	date-modified = {2025-07-24 13:16:47 +0200},
	journal = {NBER Macroeconomics Annual},
	pages = {89-142},
	title = {Roads to Prosperity or Bridges to Nowhere? Theory and Evidence on the Impact of Public Infrastructure Investment},
	volume = {27},
	year = {2012}}

@article{Klein:21,
	author = {Mathias Klein and Roland Winkler},
	date-added = {2025-07-24 13:13:52 +0200},
	date-modified = {2025-07-24 13:14:42 +0200},
	journal = {Journal of Applied Econometrics},
	pages = {744-759},
	title = {The government spending multiplier at the zero lower bound: International evidence from historical data},
	volume = {36},
	year = {2021}}

@article{Klein:22,
	author = {Mathias Klein and Hamza Polattimur and Roland Winkler},
	date-added = {2025-07-24 13:12:13 +0200},
	date-modified = {2025-07-24 13:12:53 +0200},
	journal = {European Economic Review},
	pages = {103989},
	title = {Fiscal spending multipliers over the household leverage cycle},
	volume = {141},
	year = {2022}}

@article{Klein:17,
	author = {Mathias Klein},
	date-added = {2025-07-24 13:11:08 +0200},
	date-modified = {2025-07-24 13:11:51 +0200},
	journal = {Journal of Money, Credit and Banking},
	number = {7},
	pages = {1555-1585},
	title = {Austerity and Private Debt},
	volume = {49},
	year = {2017}}

@article{Ghassibe:22,
	author = {Mishel Ghassibe and Francesco Zanetti},
	date-added = {2025-07-19 12:07:53 +0200},
	date-modified = {2025-07-19 12:08:50 +0200},
	journal = {Journal of Monetary Eonomics},
	pages = {1-23},
	title = {State dependence of fiscal multipliers: the source of fluctuations matters},
	volume = {132},
	year = {2022}}

@article{Eminidou:23,
	author = {Snezana Eminidou and Martin Geiger and Marios Zachariadis},
	date-added = {2025-07-19 12:06:18 +0200},
	date-modified = {2025-07-19 12:07:37 +0200},
	journal = {Journal of International Money and Finance},
	pages = {102746},
	title = {Public debt and state-dependent effects of fiscal policy in the euro area},
	volume = {130},
	year = {2023}}

@article{El-Shagi:21,
	author = {Makram El-Shagi and Gregor von Schweinitz},
	date-added = {2025-07-19 12:05:10 +0200},
	date-modified = {2025-07-19 12:06:01 +0200},
	journal = {Journal of International Money and Finance},
	pages = {102292},
	title = {Fiscal policy and fiscal fragility: Empirical evidence from the OECD},
	volume = {115},
	year = {2021}}

@article{Demirel:21,
	author = {Ufuk Devrim Demirel},
	date-added = {2025-07-19 12:04:22 +0200},
	date-modified = {2025-07-19 12:04:56 +0200},
	journal = {Journal of Monetary Economics},
	pages = {918-934},
	title = {The short-term effects of tax changes: The role of state dependence},
	volume = {117},
	year = {2021}}

@article{Choi:22,
	author = {Sangyup Choi and Junghyeok Shin and Seung Yong Yoo},
	date-added = {2025-07-19 12:02:15 +0200},
	date-modified = {2025-07-19 12:03:41 +0200},
	journal = {Journal of Economic Dynamics and Control},
	pages = {104423},
	title = {Are government spending shocks inflationary at the zero lower bound? New evidence from daily data},
	volume = {139},
	year = {2022}}

@article{Born:20,
	author = {Benjamin Born and Gernot J. M{\"u}ller and Johannes Pfeifer},
	date-added = {2025-07-19 12:01:03 +0200},
	date-modified = {2025-07-19 12:01:54 +0200},
	journal = {The Review of Economics and Statistics},
	number = {2},
	pages = {323-338},
	title = {Does austerity pay off?},
	volume = {102},
	year = {2020}}

@article{Boehm:20,
	author = {Christoph E. Boehm},
	date-added = {2025-07-19 11:59:54 +0200},
	date-modified = {2025-07-19 12:00:49 +0200},
	journal = {Journal of Monetary Economics},
	pages = {80-93},
	title = {Government consumption and investment: Does the composition of purchases affect the multiplier?},
	volume = {115},
	year = {2020}}

@article{Biolsi:17,
	author = {Christopher Biolsi},
	date-added = {2025-07-19 11:58:29 +0200},
	date-modified = {2025-07-19 12:04:08 +0200},
	journal = {Journal of Economic Dynamics and Control},
	pages = {54-87},
	title = {Nonlinear effects of fiscal policy over the business cycle},
	volume = {78},
	year = {2017}}

@article{Bernardini:20,
	author = {Marco Bernardini and Selien De Schryder and Gert Peersman},
	date-added = {2025-07-19 11:57:23 +0200},
	date-modified = {2025-07-19 11:58:25 +0200},
	journal = {The Review of Economics and Statistics},
	month = {May},
	number = {2},
	pages = {304-322},
	title = {Heterogeneous government spending multipliers in the era surrounding the great recession},
	volume = {102},
	year = {2020}}

@article{Bernardini:18,
	author = {Marco Bernardini and Gert Peersman},
	date-added = {2025-07-19 11:56:04 +0200},
	date-modified = {2025-07-24 14:38:26 +0200},
	journal = {Journal of Applied Econometrics},
	pages = {485-508},
	title = {Private debt overhang and the government spending multiplier: Evidence for the United States},
	volume = {33},
	year = {2018}}

@article{Berge:21,
	author = {Travis Berge and Maarten De Ridder and Damjan Pfajfar},
	date-added = {2025-07-19 11:54:00 +0200},
	date-modified = {2025-07-19 11:55:33 +0200},
	journal = {European Economic Review},
	month = {August},
	pages = {103852},
	title = {When is the fiscal multiplier high? A comparison of four business cycle phases},
	volume = {138},
	year = {2021}}

@unpublished{Ludwig:24,
	author = {Julian F. Ludwig},
	date-added = {2025-06-02 11:33:25 -0400},
	date-modified = {2025-06-02 11:33:52 -0400},
	month = {July},
	note = {Working Paper},
	title = {Local Projections are VAR Predictions of Increasing Order},
	year = {2024}}

@article{Albuquerque:19,
	author = {Bruno Albuquerque},
	date-added = {2025-06-02 09:55:13 -0400},
	date-modified = {2025-06-02 09:56:37 -0400},
	journal = {Journal of Money, Credit and Banking},
	month = {August},
	number = {5},
	pages = {1309-1353},
	title = {One Size Fits All? Monetary Policy and Asymmetric Household Debt Cycles in U.S. States},
	volume = {51},
	year = {2019}}

@article{Alloza:22,
	author = {Mario Alloza},
	date-added = {2025-05-28 16:25:49 -0400},
	date-modified = {2025-05-28 16:26:58 -0400},
	journal = {International Economic Review},
	month = {August},
	number = {3},
	pages = {1271-1292},
	title = {Is Fiscal Policy More Effective During Recessions?},
	volume = {63},
	year = {2022}}

@article{Tillmann:20,
	author = {Peter Tillmann},
	date-added = {2025-05-28 15:41:39 -0400},
	date-modified = {2025-05-28 15:42:53 -0400},
	journal = {Journal of Money, Credit and Banking},
	month = {June},
	number = {4},
	pages = {803-833},
	title = {Monetary Policy Uncertainty and the Response of the Yield Curve to Policy Shocks},
	volume = {52},
	year = {2020}}

@article{Auer:21,
	author = {Simone Auer and Marco Bernardini and Martina Cecioni},
	date-added = {2025-05-27 12:58:17 -0400},
	date-modified = {2025-05-27 12:59:42 -0400},
	journal = {European Economic Review},
	month = {October},
	pages = {103943},
	title = {Corporate leverage and monetary policy effectiveness in the euro are},
	volume = {140},
	year = {2021}}

@article{Auerbach:13b,
	author = {Alan J. Auerbach and Yuriy Gorodnichenko},
	date-added = {2025-05-27 12:32:41 -0400},
	date-modified = {2025-05-27 12:35:14 -0400},
	journal = {American Economic Review Papers and Proceedings},
	pages = {141-146},
	title = {Output spillovers from fiscal policy},
	volume = {103},
	year = {2013}}

@article{Paranhos:25,
	author = {Livia Paranhos},
	date-added = {2025-05-26 14:45:53 -0400},
	date-modified = {2025-05-26 14:47:12 -0400},
	journal = {Journal of Econometrics},
	month = {May},
	pages = {105886},
	title = {How do firms' financial conditions influence the transmission of monetary policy? A non-parametric local projection approach},
	volume = {249},
	year = {2025}}

@unpublished{Goncalves:24b,
	author = {S{\'\i}lvia Gon{\c c}alves and Ana Mar{\'\i}a Herrera and Lutz Kilian and Elena Pesavento},
	date-added = {2025-05-26 14:44:36 -0400},
	date-modified = {2025-05-26 14:45:23 -0400},
	month = {November},
	note = {Federal Reserve Bank of Dallas Working Paper 2414},
	title = {Nonparametric Local Projections},
	year = {2024}}

@article{Auerbach:16,
	author = {Alan J. Auerbach and Yuriy Gorodnichenko},
	date-added = {2025-05-26 14:20:23 -0400},
	date-modified = {2025-05-26 14:21:30 -0400},
	journal = {IMF Economic Review},
	pages = {177-215},
	title = {Effects of fiscal shocks in a globalized world},
	volume = {54},
	year = {2016}}

@article{Jorda:05,
	author = {{\`O}scar Jord{\`a}},
	date-added = {2025-01-16 17:23:03 +0100},
	date-modified = {2025-01-16 17:23:40 +0100},
	journal = {American Economic Review},
	month = {March},
	number = {1},
	pages = {161-182},
	title = {Estimation and Inference of Impulse Responses by Local Projections},
	volume = {95},
	year = {2005}}

@article{Nakamura:18,
	author = {Emi Nakamura and J{\'o}n Steinsson},
	date-added = {2025-01-15 13:26:07 +0100},
	date-modified = {2025-01-15 13:27:45 +0100},
	journal = {Quarterly Journal of Economics},
	month = {August},
	number = {3},
	pages = {1283--1330},
	title = {High-Frequency identification of monetary non-neutrality: The information effect},
	volume = {133},
	year = {2018}}

@article{Imbens:94,
	author = {Guido W. Imbens and Joshua D. Angrist},
	date-added = {2025-01-14 22:15:11 +0100},
	date-modified = {2025-01-14 22:16:04 +0100},
	journal = {Econometrica},
	month = {March},
	number = {2},
	pages = {467-475},
	title = {Identification and Estimation of Local Average Treatment Effects},
	volume = {62},
	year = {1994}}

@article{White:80,
	author = {Halbert White},
	date-added = {2025-01-13 21:06:32 +0100},
	date-modified = {2025-01-13 21:07:33 +0100},
	journal = {International Economic Review},
	month = {February},
	number = {1},
	pages = {149-170},
	title = {Using Least Squares to Approximate Unknown Regression Functions},
	volume = {21},
	year = {1980}}

@article{Romer:04,
	author = {Christina D. Romer and David H. Romer},
	date-added = {2025-01-09 12:28:37 +0100},
	date-modified = {2025-01-09 12:29:28 +0100},
	journal = {American Economic Review},
	month = {September},
	number = {4},
	pages = {1055-1084},
	title = {A New Measure of Monetary Shocks: Derivation and Implications},
	volume = {94},
	year = {2004}}

@article{Stein:81,
	author = {Charles M. Stein},
	date-added = {2025-01-08 19:51:46 +0100},
	date-modified = {2025-01-08 19:52:36 +0100},
	journal = {The Annals of Statistics},
	month = {November},
	number = {6},
	pages = {1135-1151},
	title = {Estimation of the Mean of a Multivariate Normal Distribution},
	volume = {9},
	year = {1981}}

@article{Jorda:16,
	author = {{\`O}scar Jord{\`a} and Alan M. Taylor},
	date-added = {2025-01-07 14:47:26 +0100},
	date-modified = {2025-01-07 14:48:33 +0100},
	journal = {The Economic Journal},
	month = {February},
	pages = {219-255},
	title = {The Time for Austerity: Estimating the Average Treatment Effect of Fiscal Policy},
	volume = {126},
	year = {2016}}

@book{Lutkepohl:05,
	author = {Helmut L{\"u}tkepohl},
	date-added = {2025-01-07 12:46:54 +0100},
	date-modified = {2025-01-07 12:49:14 +0100},
	publisher = {Springer},
	title = {New Introduction to Multiple Time Series Analysis},
	year = {2005}}

@manual{sstvars,
	author = {Savi Virolainen},
	date-added = {2025-01-07 10:25:08 +0100},
	date-modified = {2025-01-07 10:26:22 +0100},
	note = {R package version 1.1.1},
	organization = {University of Helsinki},
	title = {sstvars: Toolkit for Reduced Form and Structural Smooth Transition Vector Autoregressive Models},
	url = {https://CRAN.R-project.org/package=sstvars},
	year = {2024},
	bdsk-url-1 = {https://CRAN.R-project.org/package=sstvars}}

@article{Ramey:18,
	author = {Valerie A. Ramey and Sarah Zubairy},
	date-added = {2025-01-05 14:19:49 +0100},
	date-modified = {2025-01-05 14:21:24 +0100},
	journal = {Journal of Political Economy},
	number = {2},
	pages = {850-901},
	title = {Government Spending Multipliers in Good Times and in Bad: Evidence from US Historical Data},
	volume = {126},
	year = {2018}}

@inbook{Auerbach:13,
	author = {Alan J. Auerbach and Yuriy Gorodnichenko},
	chapter = {Fiscal Multipliers in Recession and Expansion},
	date-added = {2025-01-05 14:16:27 +0100},
	date-modified = {2025-01-05 14:19:24 +0100},
	editor = {Alberto Alesina and Francesco Giavazzi},
	pages = {63-98},
	publisher = {University of Chicago Press},
	title = {Fiscal Policy after the Financial Crisis},
	year = {2013}}

@article{Auerbach:12,
	author = {Alan J. Auerbach and Yuriy Gorodnichenko},
	date-added = {2025-01-05 14:15:44 +0100},
	date-modified = {2025-01-05 14:16:13 +0100},
	journal = {American Economic Journal: Economic Policy},
	number = {2},
	pages = {1-27},
	title = {Measuring the Output Responses to Fiscal Policy},
	volume = {4},
	year = {2012}}

@article{Angrist:00,
	author = {Joshua D. Angrist and Kathryn Graddy and Guido W. Imbens},
	date-added = {2025-01-05 13:36:52 +0100},
	date-modified = {2025-01-05 13:38:23 +0100},
	journal = {The Review of Economic Studies},
	pages = {499-527},
	title = {The Interpretation of Instrumental Variables Estimators in Simultaneous Equation Models with an Application to the Demand for Fish},
	volume = {67},
	year = {2000}}

@article{Yitzhaki:96,
	author = {Shlomo Yitzhaki},
	date-added = {2025-01-04 21:05:09 +0100},
	date-modified = {2025-01-05 13:40:17 +0100},
	journal = {Journal of Business and Economic Statistics},
	month = {October},
	number = {4},
	pages = {478-486},
	title = {On Using Linear Regressions in Welfare Economics},
	volume = {14},
	year = {1996}}

@unpublished{Kolesar:24,
	author = {Michal Koles{\'a}r and Mikkel Plagborg-M{\o}ller},
	date-added = {2025-01-04 21:03:11 +0100},
	date-modified = {2025-01-04 21:04:40 +0100},
	month = {December},
	note = {Working Paper},
	title = {Dynamic Causal Effects in a Nonlinear World: the Good, the Bad, and the Ugly},
	year = {2024}}

@article{Jorda:20,
	author = {{\`O}scar Jord{\`a} and Moritz Schularick and Alan M. Taylor},
	date-added = {2024-11-19 23:50:42 -0500},
	date-modified = {2024-11-19 23:51:54 -0500},
	journal = {Journal of Monetary Economics},
	pages = {22-40},
	title = {The effects of quasi-random monetary experiments},
	volume = {112},
	year = {2020}}

@unpublished{Caravello:24,
	author = {Tom{\'a}s E. Caravello and Pedro Mart{\'\i}nez Bruera},
	date-added = {2024-11-19 21:48:11 -0500},
	date-modified = {2024-11-19 21:49:06 -0500},
	month = {January},
	note = {Working Paper},
	title = {Disentangline Sign and Size Non-Linearities},
	year = {2024}}

@unpublished{Cloyne:23,
	author = {James Cloyne and {\`O}scar Jord{\`a} and Alan M. Taylor},
	date-added = {2024-11-19 21:46:23 -0500},
	date-modified = {2024-11-19 21:47:43 -0500},
	month = {February},
	note = {NBER Working Paper No. 30971},
	title = {State-Dependent Local Projections: Understanding Impulse Response Heterogeneity},
	year = {2023}}

@unpublished{Rambachan:21,
	author = {Ashesh Rambachan and Neil Shephard},
	date-added = {2024-11-19 21:29:34 -0500},
	date-modified = {2024-11-19 21:30:34 -0500},
	month = {October},
	note = {Working Paper},
	title = {When do common time series estimands have nonparametric causal meaning?},
	year = {2021}}

@article{Goncalves:24,
	author = {S{\'\i}lvia Gon{\c c}alves and Ana Mar{\'\i}a Herrera and Lutz Kilian and Elena Pesavento},
	date-added = {2024-11-19 21:24:21 -0500},
	date-modified = {2024-11-19 21:27:51 -0500},
	journal = {Journal of Econometrics},
	title = {State-dependent local projections},
	year = {2024}}

\clearpage

\begin{appendices}

\section{Applied Literature Using State-Dependent LPs}\label{app:applied}

Table \ref{tab:lit} lists some properties of the 44 applied studies using state-dependent LPs listed by \cite{Goncalves:24}. This shows that the majority of the studies (29/44) uses a lagged state variable, which is the specification considered in this paper. Also only 15/44 studies restrict themselves to one binary state variable (some studies interact binary state variables), while 19/44 use a continuous state variable. The specification of this paper is broad enough to cover virtually all estimating equations used in the applied studies. Lastly, 19/44 of the state-dependent LP papers use 2SLS methods, which motivates this paper considering LP-IVs.
  
\begingroup
\setlength{\tabcolsep}{3pt}
\begin{center}
\begin{longtable}{@{} l c c c c @{}}
\caption{Some Properties of the Studies Listed by \cite{Goncalves:24}.}\label{tab:lit} \\
\toprule
\toprule
\textbf{Paper} &
\textbf{\shortstack{State\\Lagged}} &
\textbf{\shortstack{Binary\\Only}} &
\textbf{\shortstack{Continuous\\State}} &
\textbf{2SLS} \\
\midrule
\endfirsthead

\multicolumn{5}{c}%
{\tablename\ \thetable\ -- \textit{Continued}} \\
\toprule
\textbf{Paper} &
\textbf{\shortstack{State\\Lagged}} &
\textbf{\shortstack{Binary\\Only}} &
\textbf{\shortstack{Continuous\\State}} &
\textbf{2SLS} \\
\midrule
\endhead

\bottomrule
\endfoot

\bottomrule
\endlastfoot

\multicolumn{5}{l}{\textbf{Fiscal Policy}} \\
\cite{Alloza:22} & \ding{51} & \ding{51} & \ding{55} & \ding{55} \\
\cite{Auerbach:13b} & \ding{51} & \ding{55} & \ding{51} & \ding{55} \\
\cite{Auerbach:16} & \ding{51} & \ding{55} & \ding{55} & \ding{55} \\
\cite{Ben-Zeev:23} & \ding{55} & \ding{51} & \ding{55} & \ding{51} \\
\cite{Berge:21} & \ding{51} & \ding{55} & \ding{55} & \ding{51} \\
\cite{Bernardini:18} & \ding{51} & \ding{55} & \ding{55} & \ding{51} \\
\cite{Bernardini:20} & \ding{51} & \ding{55} & \ding{51} & \ding{51} \\
\cite{Biolsi:17} & \ding{51} & \ding{51} & \ding{55} & \ding{55} \\
\cite{Boehm:20} & \ding{51} & \ding{51} & \ding{55} & \ding{55} \\
\cite{Born:20} & \ding{51} & \ding{55} & \ding{51} & \ding{55} \\
\cite{Choi:22} & \ding{51} & \ding{51} & \ding{55} & \ding{55} \\
\cite{Demirel:21} & \ding{51} & \ding{55} & \ding{51} & \ding{51} \\
\cite{El-Shagi:21} & \ding{51} & \ding{55} & \ding{51} & \ding{55} \\
\cite{Eminidou:23} & \ding{51} & \ding{55} & \ding{51} & \ding{51} \\
\cite{Ghassibe:22} & \ding{51} & \ding{55} & \ding{55} & \ding{51} \\
\cite{Jorda:16} & \ding{51} & \ding{55} & \ding{55} & \ding{51} \\
\cite{Klein:17} & \ding{51} & \ding{55} & \ding{55} & \ding{55} \\
\cite{Klein:22} & \ding{51} & \ding{51} & \ding{55} & \ding{51} \\
\cite{Klein:21} & \ding{51} & \ding{51} & \ding{55} & \ding{51} \\
\cite{Leduc:12} & \ding{51} & \ding{55} & \ding{51} & \ding{55} \\
\cite{Liu:22} & \ding{51} & \ding{51} & \ding{55} & \ding{51} \\
\cite{Liu:23} & \ding{51} & \ding{51} & \ding{55} & \ding{51} \\
\cite{Miyamoto:18} & \ding{51} & \ding{51} & \ding{55} & \ding{51} \\
\cite{Owyang:13} & \ding{51} & \ding{51} & \ding{55} & \ding{55} \\
\cite{Ramey:18} & \ding{51} & \ding{51} & \ding{55} & \ding{51} \\
\cite{Riera:15} & \ding{51} & \ding{55} & \ding{51} & \ding{55} \\
\cite{Sheremirov:22} & \ding{55} & \ding{51} & \ding{55} & \ding{51} \\

\\[0.2ex]
\multicolumn{5}{l}{\textbf{Monetary Policy}} \\
\cite{Albrizio:20} & \ding{55} & \ding{55} & \ding{51} & \ding{51} \\
\cite{Albuquerque:19} & \ding{51} & \ding{55} & \ding{55} & \ding{55} \\
\cite{Alpanda:21} & \ding{51} & \ding{55} & \ding{55} & \ding{55} \\
\cite{Auer:21} & \ding{51} & \ding{55} & \ding{51} & \ding{55} \\
\cite{El-Herradi:21} & \ding{55} & \ding{51} & \ding{55} & \ding{51} \\
\cite{Falck:21} & \ding{51} & \ding{55} & \ding{51} & \ding{55} \\
\cite{Furceri:18} & \ding{55} & \ding{55} & \ding{51} & \ding{55} \\
\cite{Jorda:20} & \ding{55}  & \ding{51} & \ding{55} & \ding{51} \\
\cite{Santoro:14} & \ding{55} & \ding{55} & \ding{51} & \ding{51} \\
\cite{Tenreyro:16} & \ding{55} & \ding{55} & \ding{51} & \ding{55} \\
\cite{Tillmann:20} & \ding{55} & \ding{55} & \ding{51} & \ding{55} \\

\\[0.2ex]
\multicolumn{5}{l}{\textbf{Market Reforms}} \\
\cite{Haan:22} & \ding{55} & \ding{55} & \ding{55} & \ding{55} \\
\cite{Duval:18} & \ding{55} & \ding{55} & \ding{51} & \ding{55} \\
\cite{Lastauskas:20} & \ding{55} & \ding{55} & \ding{51} & \ding{55} \\

\\[0.2ex]
\multicolumn{5}{l}{\textbf{Other}} \\
\cite{Cacciatore:21} & \ding{55} & \ding{55} & \ding{51} & \ding{55} \\
\cite{Loipersberger:22} & \ding{55} & \ding{55} & \ding{55} & \ding{55} \\
\cite{Sheng:21} & \ding{55} & \ding{55} & \ding{51} & \ding{55} \\

\multicolumn{5}{l}{} \\
$\Sigma [\text{columns} =~$\ding{51}] & 29 & 15 & 19 & 19 \\ \midrule
\\

\multicolumn{5}{p{\linewidth}}{\noindent \footnotesize \textit{Notes:} The 44 papers listed here are from \cite{Goncalves:24}. The column 'State Lagged' indicates whether the shock is interacted with (a function of) $S_{t-1}$. Else the contemporaneous state $S_t$ is used. This definition is silent about whether the state is forward-looking or exogenous/endogenous. The column 'Binary Only' is checked, if the paper only considers a specification with $S_t$ being a binary dummy variable, while 'Continuous' is checked if a continuous state variable is used. Lastly, '2SLS' is checked if some kind of two stage least squares estimator is computed in the paper. For this summary, only the main text of the papers and not its appendices are considered.} \\
\end{longtable}
\end{center}
\endgroup

\section{Illustrative Example: Smooth Transition VAR}\label{app:stvar}
Here I study state-dependent LPs when the data generating process is a smooth transition VAR á la \cite{Auerbach:12}, who used this model to study state-dependent government spending multipliers. The setup allows for an analytical computation of marginal effects while featuring a rich heterogeneity of causal effects and a continuous state variable. This allows me to demonstrate the full scope of Proposition \ref{prop:1} in a simulation study. 

    \vspace{0.8em}
\noindent\textsc{Data Generating Process.---}I follow \cite{Auerbach:12} as close as possible. The number of endogenous variables and shocks are set to $n = m = 3$. The vector of endogenous variables observed at quarterly frequency is $\mathbf{Y}_t = (G_t, T_t, Y_t)'$, where $G_t$ are government purchase, $T_t$ are taxes and $Y_t$---the variable of interest---is real GDP. The shock of interest is a government spending shock $X_t$, which is the first element of the three dimensional shock vector $\boldsymbol{\epsilon}_t$. The state $S_{t}$ is defined as a de-meaned and standardized average of GDP growth between $t-6$ and $t$ and serves as an indicator of past economic activity.\footnote{In this, I depart from \cite{Auerbach:12}, who use a centered moving average of GDP growth. Since such a state would not satisfy $X_t \ind S_{t-1}$ (government spending shocks affect current and future GDP growth), Proposition \ref{prop:1} could not be applied. Because of this modification, my later estimation results are qualitatively very different from \citeauthor{Auerbach:12}'s \citeyear{Auerbach:12}. Sensitivity to the averaging window is a known feature of this model \citep[see][]{Alloza:22} but should not concern us for the sake of this exercise.} The structural model is then given by
\begin{subequations}
\begin{align}
\mathbf{Y}_t & = \sum_{k=1}^p \Pi_k(S_{t-1}) \mathbf{Y}_{t-k} + A(S_{t-1}) \boldsymbol{\epsilon}_t \label{eq:stvar_first} \\
\boldsymbol{\epsilon}_t & \sim N(0, I_3)  \\
\Pi_k(S_{t-1}) & = (1-F(S_{t-1})) \Pi_{k,E} + F(S_{t-1}) \Pi_{k,R} \\
A(S_{t-1}) & = \text{chol}((1-F(S_{t-1})) \Omega_E + F(S_{t-1}) \Omega_R) \label{eq:chol_A} \\
F(S_{t-1}) & = (1 + \exp(\gamma S_t))^{-1}, ~~~ \gamma > 0. \label{eq:stvar_last}
\end{align}
\end{subequations}

The slope coefficients are convex combinations of $\Pi_E$ and $\Pi_R$. If $S_{t-1}$ is low---the economy has been in a recession---$F(S_{t-1})$ is close to 1 and the parameters are shifted towards $\Pi_R$, in the opposite case they are shifted towards $\Pi_E$. The specification \eqref{eq:chol_A} allows a researcher to identify the structural shocks recursively from the observed data.
  
    \vspace{0.8em}
\noindent\textsc{Estimation.---}I estimate the model above using the R package of \cite{sstvars} and the calibration $\gamma = 1.5$ and three lags $p=3$. These values and the data are from \cite{Auerbach:12}.\footnote{Their replication package can be found online: \url{https://www.openicpsr.org/openicpsr/project/114783/version/V1/view}.} While they estimate their system in log-levels, I log-difference the data, which is necessary to get a stationary distribution of causal effects.
  
\begin{figure}[t]
\caption{Distribution of dynamic shock effects and state-dependend LP estimates.} \label{fig:effs_LP}
\justify
\begin{overpic}[scale=0.7,unit=1mm]{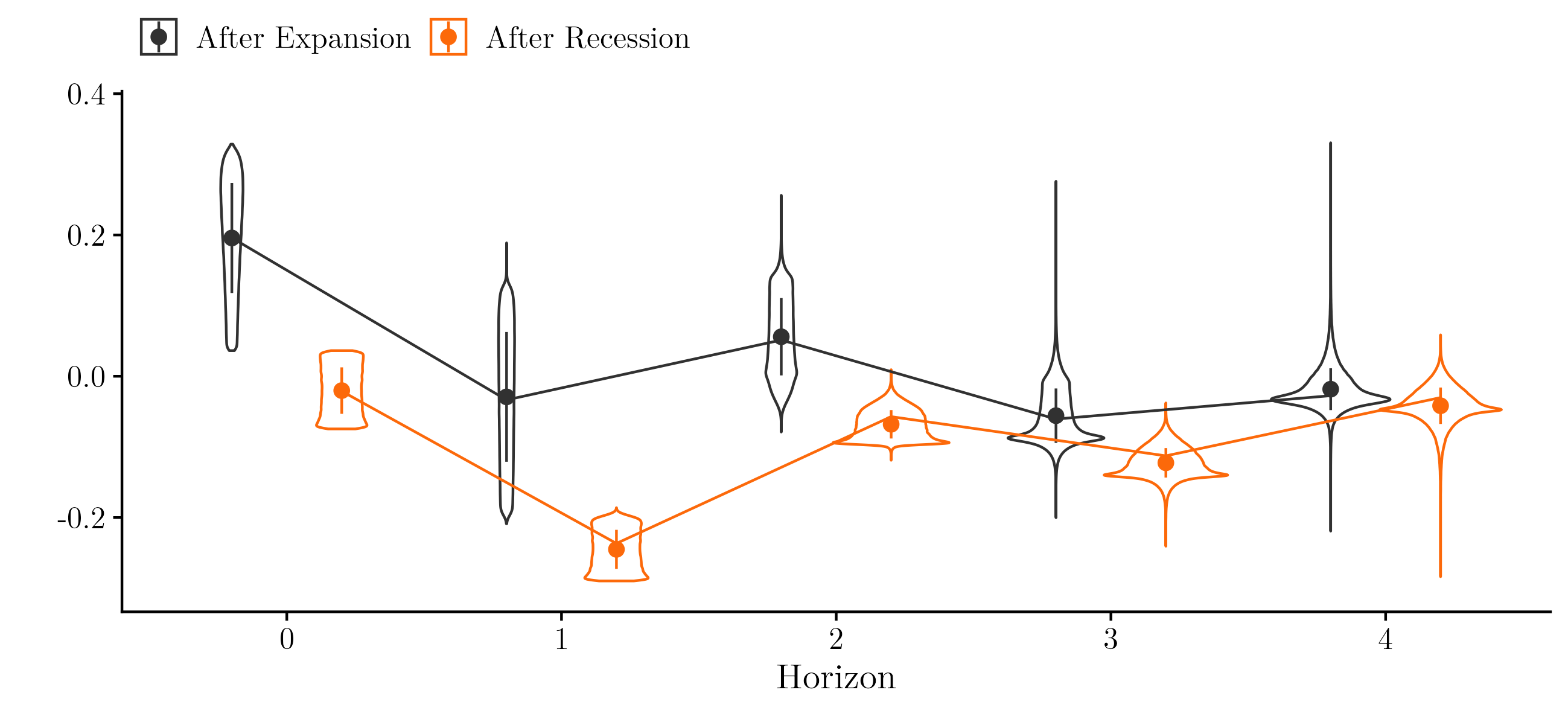}
\put(25,53){\line(1,0) {4.5}}
\put(32.25,51){\line(0.1,-1) {1.5}}
\put(30.5,52.5){$\psi_0'(X_t)$}
\put(18.5,46.5){\line(1,0) {2.8}}
\put(14,45){$\beta_0^0$}
\put(18.5,31.5){\line(1,0) {14}}
\put(14,30){$\beta_1^0$}
\end{overpic} \\
{\noindent\footnotesize \textit{Notes:} The data is obtained from 60 simulated time series with 20,000 observations each. The violin plots (grey and orange bordered areas) depict the state-dependent distributions of the dynamic causal effects computed as in Proposition \ref{prop:stvar_effects}. The dots and vertical lines are the group-wise means and standard deviations, respectively. The connected lines are the average state-dependent LP estimates from regression \eqref{eq:lp_bin}.}
\end{figure}  
  
    \vspace{0.8em}
\noindent\textsc{Causal Effects.---}We are interested in the effect of a government spending shock $X_t$ on GDP $Y_{t+h}$. For the STVAR model specified here, the structural function $\psi_h$ is differentiable. Thus, the causal effects of a marginal shock on $Y_{t+h}$, $\psi_h'(X_t, U_{h,t+h})$, always exists. From now on, supress the dependence on $U_{h,t+h}$ by writing $\psi_h'(X_t)$. The derivative of the structural function is given by the following Proposition.\footnote{The statement uses standard notation: For a $n \times n$ matrix $A$, $\text{chol}(A)$ is its Cholesky decomposition. The $n$-dimensional vector $e_s$ is $(0,...,0,1,0,..,0)'$ with 1 at the $s$th entry. The duplication matrix $D_n$ satisfies $\mathrm{vec}(A) = D_n \mathrm{vech}(A)$ and the elimination matrix $L_n$ $\mathrm{vech}(A) = L_n \mathrm{vec}(A)$ for every symmetric $n \times n$ matrix $A$. The commutation matrix $K_{nn}$ satisfies $\mathrm{vec}(A') = K_{nn} \mathrm{vec}(A)$ for any $n \times n$ matrix $A$.}

\begin{prop}\label{prop:stvar_effects}
    Suppose the endogenous vector $X_{t}$ satisfies the STVAR model defined by \eqref{eq:stvar_first}-\eqref{eq:stvar_last} with $S_t$ being the normalized average of the endogenous variables $\{Y_t,...,Y_{t-6}\}$ which are ordered $r$'th in $\mathbf{Y}_t$. Then the derivative $\boldsymbol{\psi}_h'(X_t)$ of the structural function of the vector $\mathbf{Y}_{t+h}$ with respect to $X_t := \boldsymbol{\epsilon}_{1,t}$ for $h \geq 1$ is
    \begin{multline} \label{eq:stvar_effects}
    \boldsymbol{\psi}_h'(X_t) = \sum_{k=1}^p \left[ (1-F(S_{t-1}))\Pi_{0,k} + F(S_{t-1})\Pi_{1,k} \right] \boldsymbol{\psi}_{h-k}'(X_t) 
    \\+ \left[\frac{\partial F(S_{t+h-1})}{\partial \e_t} \right] \sum_{k=1}^p [\Pi_{1,k} - \Pi_{0,k}] \mathbf{Y}_{t+h-k} \\ 
    + (\boldsymbol{\epsilon}_{t+h}' \otimes I_n) D_n ((I_{n^2} + K_{nn})(\mathrm{chol}(\Omega_t) \otimes I_n)L_n')^{-1} \mathrm{vech}(\Omega_1 - \Omega_0) \left[\frac{\partial F(S_{t+h-1})}{\partial X_t} \right],
    \end{multline}
    for $h = 0$ it is
    \begin{equation} \label{eq:stvar_eff2}
    \boldsymbol{\psi}_0'(X_t) = \mathrm{chol}(\Omega_t) e_1,
    \end{equation}
    and for $h < 0$, $\boldsymbol{\psi}_h'(X_t) = 0 \in \mathbb{R}^n$. The derivative of the state indicator with respect to $X_t$ is
    \begin{equation} \label{eq:stvar_eff3}
    \frac{\partial F(S_{t+h-1})}{\partial X_t} = F(S_{t+h-1})(1-F(S_{t+h-1})) \frac{- \gamma}{w \sigma_s} e_r' \left[ \sum_{k=1}^7 \boldsymbol{\psi}'_{h-k}(X_t) \right],
    \end{equation}
    where $\sigma_s^2$ is the variance of $\frac{1}{7}\sum_{k=1}^7 Y_{t+1-k}$.
    \end{prop}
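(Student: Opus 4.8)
The plan is to prove \eqref{eq:stvar_effects}--\eqref{eq:stvar_eff3} by strong induction on the horizon $h$, differentiating the structural recursion term by term. Here $\boldsymbol{\psi}_h'(X_t)$ is the partial derivative of $\mathbf{Y}_{t+h}$ with respect to the scalar $X_t=\boldsymbol{\epsilon}_{1,t}$, holding every other past and future shock fixed. Since $X_t$ enters no variable dated before $t$, the cases $h<0$ (where $\boldsymbol{\psi}_h'(X_t)=0$) and the fact that $S_{t+h-1}$---the scaled seven-period average of the $r$-th coordinates of $\mathbf{Y}_{t+h-1},\dots,\mathbf{Y}_{t+h-7}$---is unaffected by $X_t$ when $h\le 0$ are immediate. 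For the base case $h=0$, differentiating $\mathbf{Y}_t=\sum_{k=1}^p\Pi_k(S_{t-1})\mathbf{Y}_{t-k}+A(S_{t-1})\boldsymbol{\epsilon}_t$ leaves only $A(S_{t-1})\,\partial\boldsymbol{\epsilon}_t/\partial\boldsymbol{\epsilon}_{1,t}=A(S_{t-1})e_1=\mathrm{chol}(\Omega_t)e_1$, because the regressors $\mathbf{Y}_{t-k}$ and the state $S_{t-1}$ are predetermined; this is \eqref{eq:stvar_eff2}. It is precisely this $A(S_{t-1})e_1$ term that vanishes for $h\ge 1$ (future shocks are held fixed), which is why $h=0$ must be treated separately.

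For the inductive step I would apply the product and chain rules to $\mathbf{Y}_{t+h}=\sum_{k=1}^p\Pi_k(S_{t+h-1})\mathbf{Y}_{t+h-k}+A(S_{t+h-1})\boldsymbol{\epsilon}_{t+h}$, splitting the result into three groups. First, $\sum_{k=1}^p\Pi_k(S_{t+h-1})\,\partial\mathbf{Y}_{t+h-k}/\partial X_t=\sum_{k=1}^p\Pi_k(S_{t+h-1})\,\boldsymbol{\psi}_{h-k}'(X_t)$ by the induction hypothesis, which, writing $\Pi_k(s)=(1-F(s))\Pi_{0,k}+F(s)\Pi_{1,k}$, is the first line of \eqref{eq:stvar_effects}. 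Second, $\sum_{k=1}^p(\partial\Pi_k(S_{t+h-1})/\partial X_t)\mathbf{Y}_{t+h-k}=[\partial F(S_{t+h-1})/\partial X_t]\sum_{k=1}^p(\Pi_{1,k}-\Pi_{0,k})\mathbf{Y}_{t+h-k}$, using $\partial\Pi_k(s)/\partial F=\Pi_{1,k}-\Pi_{0,k}$. Third, $(\partial A(S_{t+h-1})/\partial X_t)\boldsymbol{\epsilon}_{t+h}$, since $A(S_{t+h-1})\,\partial\boldsymbol{\epsilon}_{t+h}/\partial X_t=0$ for $h\ge 1$. Every horizon index appearing on the right ($h-k$ with $k\ge 1$, and $h-1,\dots,h-7$ through the $F$ term below) is strictly less than $h$, so the recursion is explicit and the strong induction closes.

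The two remaining ingredients are the derivatives of $A(S_{t+h-1})$ and $F(S_{t+h-1})$ with respect to the mixing weight. For $A(s)=\mathrm{chol}(\Omega(s))$ with $\Omega(s)=(1-F(s))\Omega_0+F(s)\Omega_1$, so $\partial\Omega/\partial F=\Omega_1-\Omega_0$, I would use standard matrix differential calculus on $\Omega=AA'$: from $\mathrm{d}\Omega=(\mathrm{d}A)A'+A(\mathrm{d}A)'$ and the identities $\mathrm{vec}((\mathrm{d}A)A')=(A\otimes I_n)\mathrm{vec}(\mathrm{d}A)$ and $\mathrm{vec}(A(\mathrm{d}A)')=(I_n\otimes A)K_{nn}\mathrm{vec}(\mathrm{d}A)=K_{nn}(A\otimes I_n)\mathrm{vec}(\mathrm{d}A)$ one gets $\mathrm{vec}(\mathrm{d}\Omega)=(I_{n^2}+K_{nn})(A\otimes I_n)\mathrm{vec}(\mathrm{d}A)$; imposing that $\mathrm{d}A$ is lower triangular (so $\mathrm{vec}(\mathrm{d}A)=L_n'\mathrm{vech}(\mathrm{d}A)$) and that $\mathrm{d}\Omega$ is symmetric turns this into a square, invertible linear system in $\mathrm{vech}(\mathrm{d}A)$, whose solution, re-assembled via $(\partial A/\partial F)\boldsymbol{\epsilon}_{t+h}=(\boldsymbol{\epsilon}_{t+h}'\otimes I_n)\mathrm{vec}(\partial A/\partial F)$, yields the third line of \eqref{eq:stvar_effects} (the duplication, elimination and commutation matrices merely bookkeep the symmetry and triangularity constraints). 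For $F$, equation \eqref{eq:stvar_eff3} is the chain rule on $F(S_{t+h-1})=(1+\exp(\gamma S_{t+h-1}))^{-1}$: the logistic derivative contributes $-\gamma F(1-F)$, while $\partial S_{t+h-1}/\partial X_t=\frac{1}{w\sigma_s}\,e_r'\sum_{k=1}^7\boldsymbol{\psi}_{h-k}'(X_t)$ (with $\boldsymbol{\psi}_j'(X_t)=0$ for $j<0$) because $S_{t+h-1}$ is that scaled window-average of the $r$-th coordinates.

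I expect the only delicate step to be the Cholesky-factor term: showing carefully that imposing symmetry of $\mathrm{d}\Omega$ and lower-triangularity of $\mathrm{d}A$ reduces $\mathrm{d}\Omega=(\mathrm{d}A)A'+A(\mathrm{d}A)'$ to an invertible linear system, and matching its inverse with the bracketed expression in \eqref{eq:stvar_effects}. Everything else is a mechanical, if lengthy, application of the product and chain rules to the STVAR recursion, organised by the induction on $h$.
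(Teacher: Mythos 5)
Your proposal is correct and follows essentially the same route as the paper's proof: apply the product and chain rules to the STVAR recursion, obtaining the three terms of \eqref{eq:stvar_effects} from the lag coefficients, the state-dependence of $\Pi_k$, and the state-dependence of the Cholesky factor, with \eqref{eq:stvar_eff3} coming from the logistic derivative and the windowed state definition. The only difference is presentational: you derive the derivative of the Cholesky factor from $\mathrm{d}\Omega=(\mathrm{d}A)A'+A(\mathrm{d}A)'$ from scratch, whereas the paper cites the corresponding identity from L\"utkepohl (2005) directly.
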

\begin{proof}
    See Appendix \ref{app:proofs}.
\end{proof}

The third value of $\boldsymbol{\psi}_h'(X_t)$ that can be computed from Proposition \ref{prop:stvar_effects} is then the desired structural function $\psi'_h(X_t)$ for $Y_{t+h}$. Note that the derivative of the conditional average structural function usually considered in this paper is the conditional expectation of this 'more granular' effect: 
\begin{equation*}
\Psi_h'(x,s) = \E[\psi_h'(x,U_{h,t+h}) \mid S_{t-1} = s].
\end{equation*}

    \vspace{0.8em}
\noindent\textsc{Simulation Exercise.---}To numerically verify Proposition \ref{prop:1}, I proceed as follows: Using the point estimate of the STVAR model as parameters, I simulate 60 time series with 20,000 observations each. Using Proposition \ref{prop:stvar_effects}, I calculate the marginal effect $\psi_h'(X_t)$ for every datapoint and for each time series I estimate the state-dependent LP
\begin{equation}\label{eq:lp_bin}
Y_{t+h} = (1-\I[S_{t-1} > 0.8]) X_t \beta_0^h +  \I[S_{t-1} > 0.8] X_t \beta_1^h + \text{error}_{h,t+h}.
\end{equation}

Figure \ref{fig:effs_LP} displays the results of this exercise. The violin plots depict the distribution of the effects $\psi'_h(X_t)$ in the two states while the connected lines are the state-dependent LP estimates. The LP estimates correspond to the averages of the marginal effect distributions. This is as expected given Proposition \ref{prop:1}: Since $X_t$ follows a normal distribution, $\omega_X$ is the shock density and the average conditional effect $\theta_h(s; \omega_X)$ correspond to the population conditional effect. As a next exercise, estimate the state-dependent LP with continuous indicator
\begin{equation}\label{eq:lp_cont_stvar}
Y_{t+h} = X_t \beta_0^h + S_{t-1} X_t \beta_1^h + \text{error}_{h,t+h}.
\end{equation}
  
\begin{figure}[t]
\justify
\caption{State-dependent distribution of dynamic shock effects and LP estimate.}\label{fig:prop_lp_cont}
\begin{minipage}{0.3\textwidth}
\centering
\begin{overpic}[scale=0.45,unit=1mm]{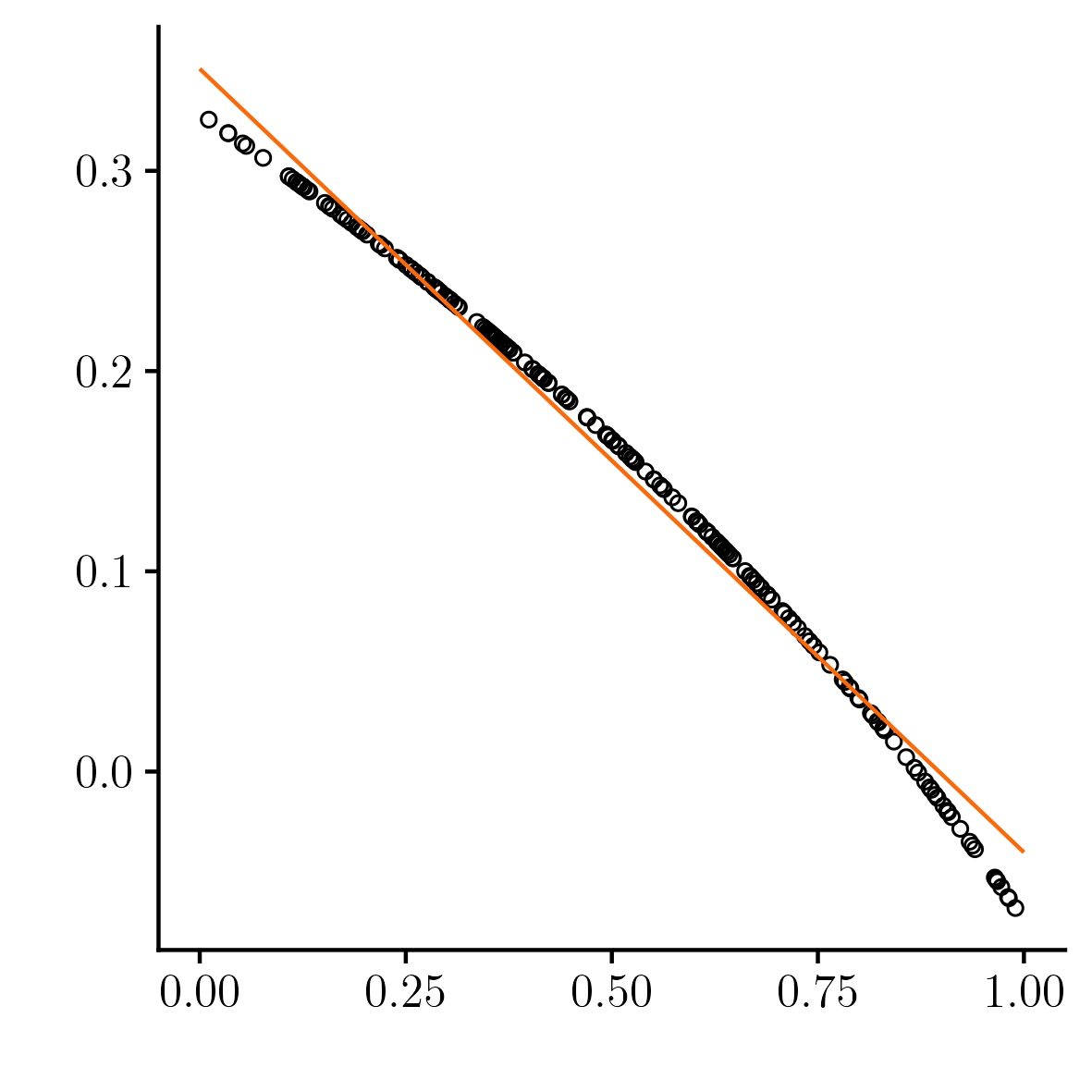}
\put(22,0.1){$S_{t-1}$}
\put(11.5,40){\color{orange}{\line(1,0){6}}}
\put(19,39){$\color{orange}{\beta_0^0 + S_{t-1} \beta_1^0}$}
\put(28,26){\line(1,0){4}}
\put(33.5,25){$\psi_0'(X_t)$}
\end{overpic}
\subcaption{$h = 0$}
\end{minipage}
\hfill
\begin{minipage}{0.3\textwidth}
\centering
\begin{overpic}[scale=0.45,unit=1mm]{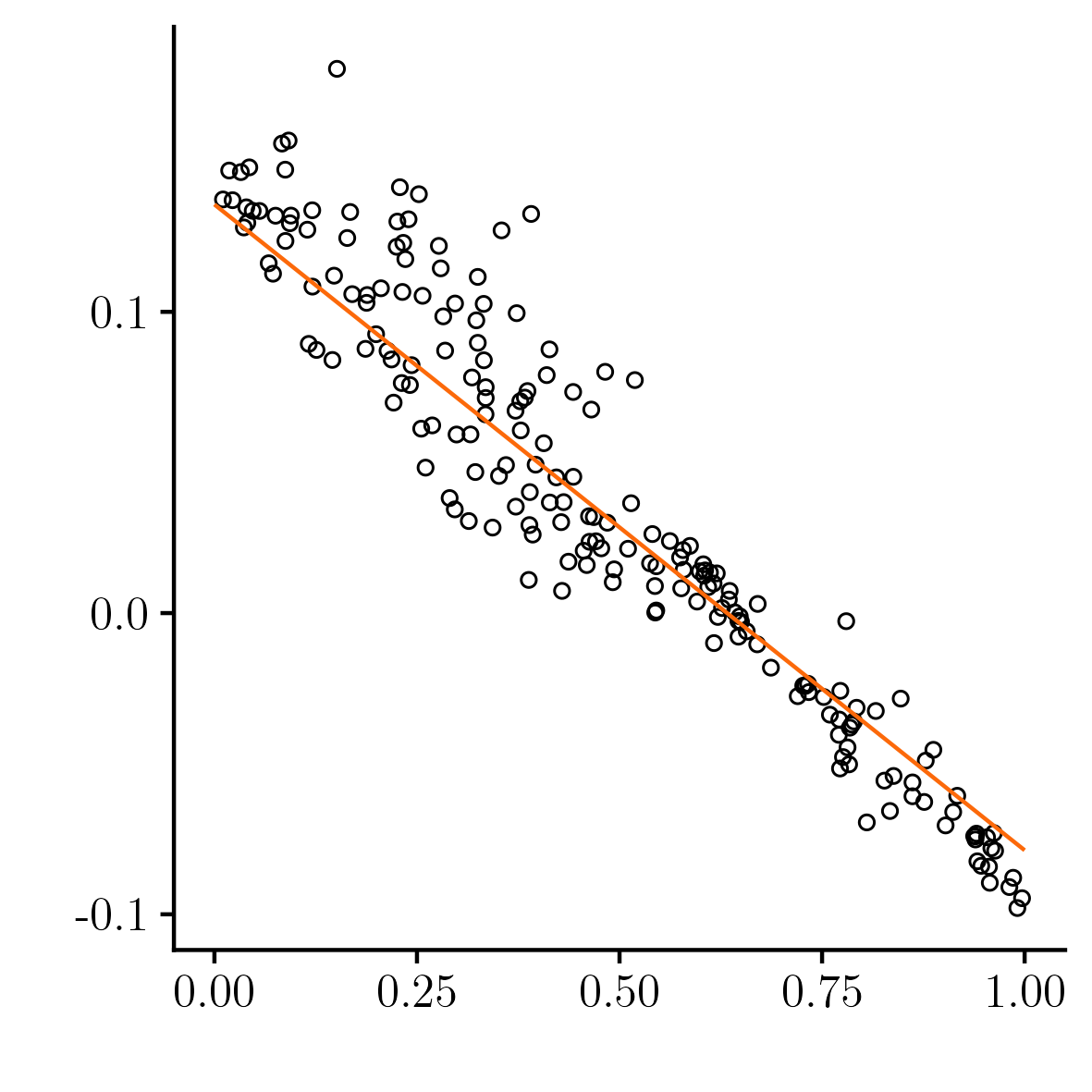}
\put(22,0.1){$S_{t-1}$}
\end{overpic}
\subcaption{$h=2$}
\end{minipage}
\hfill
\begin{minipage}{0.3\textwidth}
\centering
\begin{overpic}[scale=0.45,unit=1mm]{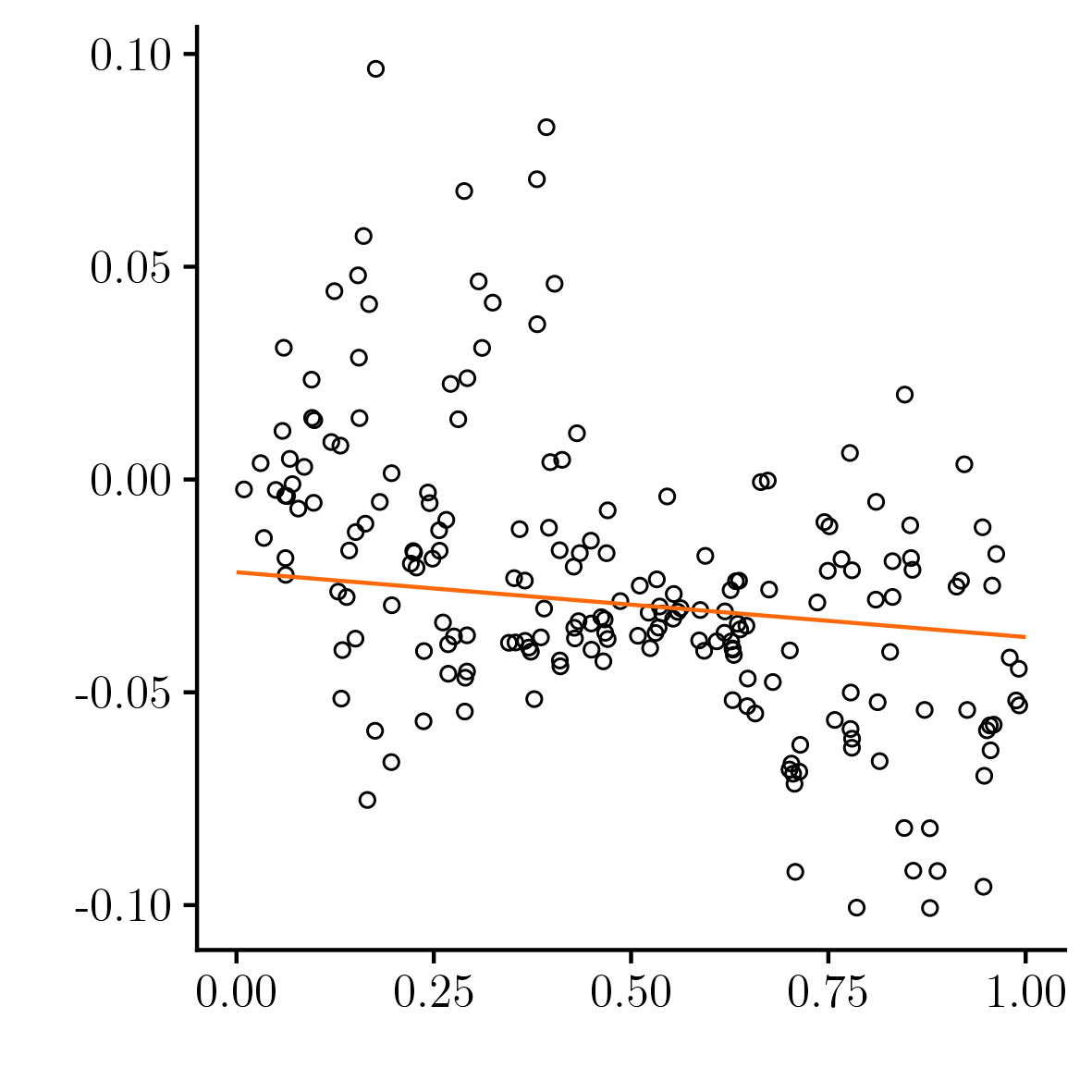}
\put(22,0.1){$S_{t-1}$}
\end{overpic}
\subcaption{$h=4$}
\end{minipage} \\
{\noindent\footnotesize \textit{Notes:} The orange line is $\beta_0^h + S_{t-1} \beta_1^h$ as obtained from the regression \eqref{eq:lp_cont_stvar}. The black circles are 200 randomly sampled causal effects from the STVAR model that are computed using Proposition \ref{prop:stvar_effects}.}
\end{figure}    
  
Figure \ref{fig:prop_lp_cont} compares the LP estimand of the conditional effect, $\beta_0^h + s \beta_1^h$ to the actual marginal effects for three different horizons. The panels look like plots from a regression of the causal quantity $\psi_h'(X_t)$---which is unobserved in practice---onto $S_{t-1}$. However, the coefficients are the estimands of the state-dependent LP \eqref{eq:lp_cont_stvar}. This both confirms and visualizes the main point of Proposition \ref{prop:1}.

\section{Solving the DSGE Model of Section \ref{subsec:dsge}}\label{app:dsge}

Recall the simple DSGE growth model of Section \ref{subsec:dsge}. The Euler equation from the social planner solution is 
\begin{equation*}
    C_t^{-1/\sigma} = \beta \E_t[C_t^{-1/\sigma}A_{t+1}].
\end{equation*}
For $\nu \to 0$, future windfall shocks can be ignored in the optimization. Guess the homogeneous policy rule $C_t = \tilde \phi(S_t)Y_t$, plug it in the Euler equation, use the AK-technology and let $Y_t$ drop out:
\begin{equation*}
    \tilde \phi(s)^{-1/\sigma} = \beta \sum_{s'} \pi_{ss'}(\tilde \phi(s')A(s')(1-\tilde\phi(s)))^{-1/\sigma}A(s').
\end{equation*}
This is a nonlinear system of two equations in two unknowns. Solving this numerically yields 
\begin{equation*}
    (\tilde \phi(0), \tilde \phi(1)) \approx (0.23, 0.14).
\end{equation*}
Therefore, income evolves approximately as 
\begin{equation*}
    Y_t = A(S_t)\phi(S_{t-1})Y_{t-1} + \nu + B(S_t)\nu X_t,
\end{equation*}
where $\phi(s) = 1-\tilde\phi(s)$ is the savings rate.

\section{Omitted Proofs and Derivations}\label{app:proofs}

This section collects various proofs and derivations that are omitted from the main text.

\subsection{Proof of Lemma \ref{lem:gauss}}
\begin{proof}
First, note that $\omega_X$ scales like a density. To see this, choose an arbitrary $a \in \mathbb{R}$:
\begin{align*}
\omega_{aX}(x) & = \frac{\text{Cov}(\I[a X_t \geq x], a X_t)}{\V[a X_t]} \\
& = \frac{a \text{Cov}(\I[X_t \geq x/a], X_t)}{a^2 \V[X_t]} \\
& = \frac{1}{a} \omega_X(x/a).
\end{align*}
Thus, without loss of generality assume $\V[X_t] = 1$. Now suppose $X_t \sim N(0,1)$, i.e. i. holds. Then
\begin{align*}
\omega_X(s) & = \text{Cov}(\I[X_t \geq x], X_t) \\
& = \int_{-\infty}^\infty \I[z \geq x]z f_X(z) dz \\
& \propto \int_x^{\infty} z \exp\left( - \frac{z^2}{2} \right) dz \\
& = - \left[  \exp \left( - \frac{z^2}{2} \right) \right]_x^{\infty} \\
& = \exp \left( - \frac{z^2}{2}\right) \\
& \propto f_X(x).
\end{align*}
This proves i. $\Rightarrow$ ii. Now suppose $ii.$ holds, i.e. 
\begin{equation*}
f_X(x) = \int_{x}^{\infty} z f_X(z) dz.
\end{equation*}
Take the derivative with respect to $x$ and multiply with $\exp(x^2/2)$:
\begin{equation*}
\exp \left( \frac{x^2}{2} \right) f'_X(x) + x \exp \left( \frac{x^2}{2} \right) f_X(x) = 0.
\end{equation*}
By the product rule,
\begin{equation*}
\frac{\partial}{\partial x} \left[ \exp \left( \frac{x^2}{2} \right) f_X(x)  \right] = 0.
\end{equation*}
Thus, 
\begin{equation*}
f_X(x) \propto \exp \left( - \frac{x^2}{2} \right) \Rightarrow X_t \sim N(0,1).
\end{equation*}
Therefore, ii. $\Rightarrow$ i.
\end{proof}

\subsection{Recursive Definition of $\theta^b_{\textit{VAR,h}}$}\label{proof:recursive}
  
Recall the discussion in Section \ref{subsec:alternative_est}. The desired representation is:
\begin{align*}
    \mathbf{Y}_{t+h} &  = \sum_{l=1}^\infty \tilde \Pi_l^h(S_{t-1}) \mathbf{Y}_{t-l} + \sum_{l=0}^{h} \tilde A_l^h(S_{t-1}) E_{t+l}^{l,\perp}.
\end{align*}

For $h=0$, the coefficients $\Pi_l^0(s)$, $A^0(s)$ from the first projection model in \eqref{eq:vars_decreasing} (with infinitely many lags) do the job. Now suppose the desired representation exists for $\mathbf{Y}_{t+h-1}$. Then use the $h+1$'th projection model from \eqref{eq:vars_decreasing} indexed by $h$:
\begin{align*}
    \mathbf{Y}_{t+h} & = \sum_{l=1}^\infty \Pi_l^h(S_{t-1}) \mathbf{Y}_{t+h-l} + A^h(S_{t-1}) E_t^{h,\perp}.
\end{align*}
Use the induction assumption, dropping the dependence of the parameters on $S_{t-1}$:
\begin{align*}
    \mathbf{Y}_{t+h} & = \sum_{l=1}^{h+1} \Pi_l^h \left[\sum_{m=1}^\infty \tilde \Pi_m^{h-l} \mathbf{Y}_{t-m} + \sum_{m=0}^{h-l} \tilde A_m^{h-l} E_{t+m}^{m,\perp} \right] + \sum_{l=1}^\infty \Pi_{l+h+1}^h \mathbf{Y}_{t-l} + A^h E_t^{h, \perp} \\
    & = \sum_{m=1}^{\infty}\underbrace{\left[\sum_{l=1}^h\Pi_l^h \tilde \Pi_m^{h-l} + \Pi_{m+h+1}^h \right]}_{\tilde \Pi_m^h}\mathbf{Y}_{t-m} + \sum_{m=0}^{h-1} \underbrace{\left[ \sum_{l=1}^{h} \I_{[m \leq h-l]} \Pi_l^h \tilde A_m^{h-l} \right]}_{\tilde A_m^h :=} E_{t+m}^{m,\perp} + \underbrace{A^h}_{\tilde A_h^h:=} E_t^{h, \perp}.
\end{align*}
This is of the desired form, so the last line gives an implicit definition of the coefficients. The state-dependent estimate $\theta^d_{\textit{VAR,h}}$ of $X_t$'s effect on $Y_{t+h}$ is then 
\begin{equation*}
    \theta^b_{\textit{VAR,h}}(s) = (\tilde A_0^h(s))_{21}.
\end{equation*}


\subsection{Proof of Proposition \ref{prop:main_iv}}
\begin{proof}
Consider the IV orthogonality conditions and apply the law of iterated expectations:
\begin{align*}
0 & = \E \left[ Z_t f_{t-1} (Y_{t+h} - X_t f_{t-1}' \beta^h) \right] \\
& = \E \left[ \E[X_t Z_t \mid S_{t-1}]f_{t-1} \left( \frac{\E[Y_{t+h} Z_t \mid S_{t-1}]}{\E[X_t Z_t \mid S_{t-1}]} - f_{t-1}' \beta^h \right)  \right].
\end{align*}
Now Lemma \ref{lem:lp_lin} can be applied to the conditional measure to obtain
\begin{equation*}
\frac{\E[Y_{t+h} Z_t \mid S_{t-1}]}{\E[X_t Z_t \mid S_{t-1}]} = \theta_{\mathit{IV,h}}(S_{t-1}).
\end{equation*}
Making use of the fact that due to independence $\E[Z_t^2 \mid S_{t-1}] = \E[Z_t^2]$, one can divide both sides of the orthogonality condition by $\E[Z_t^2]$ and note that
\begin{equation*}
\frac{\E[X_t Z_t \mid S_{t-1}]}{\E[Z_t^2 \mid S_{t-1}]} = \theta_h(S_{t-1})
\end{equation*}
to obtain
\begin{equation}\label{eq:ortho_iv2}
0 = \E[\theta_X(S_{t-1})f_{t-1} (\theta_{\mathit{IV,h}}(S_{t-1}) - f_{t-1}'\beta^h)].
\end{equation}
By making the transformation $\tilde f_{t-1} := \sqrt{\theta_X(S_{t-1})} f_{t-1}$ and $\tilde \theta_{\mathit{IV,h}}(S_{t-1}) := \sqrt{\theta_{\mathit{IV,h}}(S_{t-1})} \theta_X(S_{t-1})$ which is permissible due to monotonicity, it becomes clear that \eqref{eq:ortho_iv2} is the orthogonality condition of the WLS population regression \eqref{eq:iv_wls_approx}.
\end{proof}

\subsection{Derivations for Example \ref{ex:gov_spending}}

The structural functions for $Y_t$ is
\begin{equation*}
\psi(x) = \begin{cases} 
		   x m & \text{if } x < M, \\
		   x m - (x-M) \delta m & \text{if } x \geq M,
		  \end{cases}
\end{equation*}
where $m > 0$ is the government spending multiplier for negative and moderately positive deviations from steady-state spending and $\delta \in (0,1)$ is an inefficiency factor which models that government spending is less effective for large positive shocks in this model economy. Now the structural function for $X_t$ is
\begin{equation*}
X(z,s) = \begin{cases}
       z   & \text{if } z < M, \\
       z   & \text{if } z \geq M \text{ and } s = 0, \\
       z - (z-M)c & \text{if } z \geq M \text{ and } s = 1,
	   \end{cases}
\end{equation*}
where $c \in (0,1)$ is a consolidation factor.
  
First, note that due to $Z_t \sim N(0,1)$, $\omega_Z(z) = \phi(z)$, where $\phi$ is the normal density. Also, the causal effect of government spending is $\psi'(x) = m - \I[x > 1]\delta m$. Now for $S_{t-1} = 0$, $X'(Z) \equiv 1$. Therefore,using Proposition \ref{prop:main_iv},
\begin{align*}
\beta_0 & = \theta_{\mathit{IV}}(0) \\
& = \int \phi(z) \psi'(X(z)) dz \\
& = m (\Phi(M) + (1-\Phi(M))(1-\delta)).
\end{align*}
Now consider the first stage for the recession state $S_{t-1} = 1$. Note that $X'(z) = 1 - \I[z > M]c$ and therefore:
\begin{align*}
\theta_X(1) & = \int \phi(z) X'(z) da \\
& = \Phi(M) + (1 - \Phi(M))(1-c).
\end{align*}
Now apply Proposition \ref{prop:main_iv} again:
\begin{align*}
\theta_{\mathit{IV}}(1) & = \int \phi(z) \psi'(X(z)) \frac{X'(z)}{\theta_X(1)} da  \\
& = \frac{1}{\theta_X(1)} \int_{-\infty}^M \phi(z) m dz + \frac{1}{\theta_X(1)} \int_M^\infty \phi(z)m (1-\delta)(1-c) dz \\
& = m \frac{\Phi(M) + (1-\Phi(M))(1-\delta)(1-c)}{\Phi(M) + (1-\Phi(M))(1-c)}. 
\end{align*}

It follows that
\begin{equation*}
\beta_1 = \xi(c) m,
\end{equation*}
where
\begin{equation*}
\xi(c) := \frac{\Phi(M) + (1-\Phi(M))(1-\delta)(1-c)}{\Phi(M) + (1-\Phi(M))(1-c)} - (\Phi(M) + (1-\Phi(M))(1-\delta)).
\end{equation*}
It is obvious that for $\delta = 0$ or $c = 0$, $\xi(c) = 0$. For $\delta > 0$, one can apply the quotient rule to obtain
\begin{align*}
\xi'(c) & = \frac{\delta (1-\Phi(M))\Phi(M)}{(\Phi(M) + (1-\Phi(M))(1-c))^2} \\
& \propto\delta (1-\Phi(M))\Phi(M) > 0. 
\end{align*}

\subsection{Proof of Proposition \ref{prop:stvar_effects}}

\begin{proof}
    By the product rule, $\boldsymbol{\psi}_h'(X_t)$ follows the recursive formula
    \begin{equation*} 
    \boldsymbol{\psi}_h'(X_t)  = \sum_{k=1}^p \Pi_{t+h,k} \boldsymbol{\psi}_{h-k}'(X_t) + \sum_{k=1}^p \left[\frac{\partial}{\partial X_t} \Pi_{t+h,k}\right]W_{t+h-k} + \left[ \frac{\partial}{\partial X_t} \text{chol}(\Omega_{t+h}) \right] \boldsymbol{\epsilon}_{t+h},
    \end{equation*}
    
    The first two terms in \eqref{eq:stvar_effects} are a direct application of this formula together with the definition of $\Pi_{t}(L)$. For the third term, note that
    \begin{align*}
    \text{chol}(\Omega_{t+h}) \boldsymbol{\epsilon}_t & = \text{vec}(\text{chol}(\Omega_{t+h}) \boldsymbol{\epsilon}_{t+h}) \\
    & = (\boldsymbol{\epsilon}_{t+h}' \otimes I_n) \text{vec}(\text{chol}(\Omega_{t+h})),
    \end{align*}
    where the second equality follows from \cite[p. 668, (6)]{Lutkepohl:05}. Now apply the chain rule multiple times:
    \begin{equation*}
    \frac{\partial\text{vec}( \text{chol}(\Omega_{t+h}))}{\partial X_t} = \overbrace{\frac{\partial\text{vec}( \text{chol}(\Omega_{t+h}))}{\partial \text{vech}(\text{chol}(\Omega_{t+h}))}}^{I:=} \overbrace{\frac{\partial \text{vech}(\text{chol}(\Omega_{t+h}))}{\partial\text{vech}(\Omega_{t+h})}}^{\mathit{II}:=} 
    \overbrace{\frac{\partial\text{vech}(\Omega_{t+h})}{\partial F(s_{t+h-1})}}^{\mathit{III}:=} \frac{\partial F(s_{t+h-1})}{\partial X_t}.
    \end{equation*}
    Due to the properties of the duplication matrix,
    \begin{equation*}
    I = D_n.
    \end{equation*}
    For $\mathit{II}$, use \cite[p. 669, (10)]{Lutkepohl:05}, which yields:
    \begin{equation*}
    \mathit{II} = (L_n(I_{n^2} + K_{nn})(\text{chol}(\Omega_{t+h}) \otimes I_n)L_n')^{-1}.
    \end{equation*}
    For $\mathit{III}$, due to linearity of the $\text{vech}$ operator,
    \begin{equation*}
    \text{vech}(\Omega_{t+h}) = \text{vech}(\Omega_0) + F(s_{t+h-1}) \text{vech}(\Omega_1 - \Omega_0).
    \end{equation*}
    It follows immediately, that
    \begin{equation*}
    \mathit{III} = \Omega_1 - \Omega_0. 
    \end{equation*}
    This finishes the proof of equation \eqref{eq:stvar_effects}. For equations \eqref{eq:stvar_eff2} and \eqref{eq:stvar_eff3} recall the derivative properties of the logistic function and note that left-multiplying a matrix by $e_r'$ selects the $r$th row and right-multiplying by $e_r$ selects the $r$th column.
    \end{proof}

\end{appendices}
\end{document}